
\documentclass[preprint,12pt]{elsarticle}




\usepackage{amssymb}
\usepackage{mathrsfs}

\usepackage{amssymb}
\usepackage{amsmath}
\usepackage{amsthm}

\newcommand{\defeq}{\stackrel{\mbox{\scriptsize{\normalfont\rmfamily def}}}{=}}

\newcommand{\Ch}{\mathrm{Ch}}
\newcommand{\Suc}{\mathrm{Suc}}

\newcommand{\Sub}{\mathrm{Sub}}
\newcommand{\Anc}{\mathrm{Anc}}

\newcommand{\AEEq}{\stackrel{\rm a.e.}{=}}
\newcommand{\vecdiff}[1]{\mathscr{D}_{#1}}

\newcommand{\LPpdf}[5]{#1_{#2}(#3|^{#4}_{#5})}
\newcommand{\LPpdfx}[6]{#1_{#2}(#3|^{#4}_{#5};#6)}
\newcommand{\internal}[1]{#1.I}
\newcommand{\vertex}[1]{#1.V}
\newcommand{\edge}[1]{#1.E}
\newcommand{\Src}[1]{#1.S}
\newcommand{\src}[1]{#1.s}
\newcommand{\Term}[1]{#1.T}
\newcommand{\term}[1]{#1.t}
\newcommand{\JS}{S'}
\newcommand{\jSrc}[1]{#1.\JS}
\newcommand{\JT}{T'}
\newcommand{\jTerm}[1]{#1.\JT}
\newcommand{\removeOE}[2]{#1_{(#2)}} 
\usepackage{bm}
\renewcommand{\Vec}[1]{\bm{#1}}
\newtheorem{theorem}{Theorem}
\newtheorem{lemma}{Lemma}
\newtheorem{definition}{Definition}
\newtheorem{proposition}{Proposition}

\newtheorem{observation}{Observation}
\newtheorem{algorithm}{Algorithm}
\journal{arXiv}

\begin{document}

\begin{frontmatter}



  \title{The Distribution Function of the Longest Path Length in
    Constant Treewidth DAGs with Random Edge Length\tnoteref{t1}}

\tnotetext[t1]{The preliminary results of this paper appeared in \cite{AOSY2009,Ando2017}.}
  
\author{Ei Ando}
\address{Senshu University, 2-1-1, Higashi-Mita, Tama-Ku, Kawasaki, 214-8580, Japan. }
\ead{ando.ei@isc.senshu-u.ac.jp}

\begin{abstract}
  This paper is about the length $X_{\rm MAX}$ of the longest path
  in directed acyclic graph (DAG) $G=(V,E)$ with random edge lengths,
  where $|V|=n$ and $|E|=m$.
  When the edge lengths are mutually independent and
  uniformly distributed, the problem of computing
  the distribution function $\Pr[X_{\rm MAX}\le x]$ is known to be
  $\#$P-hard even in case $G$ is a directed path.
  In this case, $\Pr[X_{\rm MAX}\le x]$ is equal to the volume of
  the knapsack polytope, an $m$-dimensional unit hypercube truncated
  by a halfspace.
  In this paper, 
  we show that there is
  a {\em deterministic} fully polynomial time approximation
  scheme (FPTAS) for computing $\Pr[X_{\rm MAX}\le x]$ in case
  the treewidth of $G$ is at most a constant $k$.
  The running time of our algorithm is
  $O(k^2 n(\frac{16(k+1)mn^2}{\epsilon})^{4k^2+6k+2})$
  to achieve a
  multiplicative approximation ratio $1+\epsilon$.
  Before our FPTAS, we present a fundamental formula
  that represents $\Pr[X_{\rm MAX}\le x]$ by at most
  $n-1$ repetitions of definite
  integrals.
  Moreover,
  in case the edge lengths follow the mutually independent standard
  exponential distribution, we show 
  a $((4k+2)mn)^{O(k)}$ time
  exact algorithm.
  For random edge lengths satisfying certain conditions,
  we also show
  that computing $\Pr[X_{\rm MAX}\le x]$ is
  fixed parameter tractable if we choose treewidth $k$,
  the additive error $\epsilon'$, and $x$ as the parameters.
\end{abstract}



\begin{keyword}
$\#$P-hardness\sep deterministic FPTAS\sep high dimensional volume \sep the longest path problem in DAGs\sep random edge lengths \sep FPT
\end{keyword}

\end{frontmatter}


\section{Introduction}
We consider the longest path length $X_{\rm MAX}$ of
a directed acyclic graph (DAG) $G$, where
the edge lengths of $G$ are mutually independent random variables.
It is well known that
the longest path problem in DAGs with static edge lengths can
be solved in linear time in the graph size \cite{I2A}.
In this paper, however, we introduce random edge lengths.
There are at least two reasons.
Firstly, there are a lot of uncertain quantities in the industry.
Therefore,
it is often meaningful to consider the behavior of the uncertainty
using the random variables.
Secondly, from the
viewpoint of computational complexity,
we expect that the probability of a complex event involved with
a graph may lead to the difference of computational
performance between
a randomized algorithm and a deterministic algorithm.
Though it is widely believed that BPP=P~~\cite{AB2009},
randomized computation still seems to have some advantages in
approximation~~\cite{Elekes1986}.
Thus, we are interested in the difference
in the computational performance 
caused by randomness, especially
in computing the probability, or the high dimensional volume.
In this paper, we show a fundamental formula for the distribution
function of the longest path length in $G$ with random edge lengths.
Then, we describe
three deterministic algorithms,
including a deterministic fully polynomial time approximation scheme
(FPTAS) for a $\#$P-hard problem.

As far as the author is aware, the longest path problem in DAGs with
random edge lengths is proposed by Malcolm et al.~\cite{MRCF1959} in
the field of operations research.
For some special classes of DAGs such as series-parallel graphs,
some deterministic and exact polynomial time algorithms are known as
folklore~(e.g., \cite{CJ2016}).
However, this approach does not scale to the general DAG.
For general DAGs, some deterministic exponential time algorithms
are proposed at first~(e.g., \cite{Martin1965}).

The longest path problem in DAGs with random edge lengths
has been well studied in the field of VLSI design
(e.g., \cite{BCSS2008,CJ2016}).
The time difference (signal delay) 
between the input and the output of each logical circuit product 
may vary among mass-produced semiconductor chips
even though they are from the same line of the same design.
The signal delay fluctuates because the signal delay of each
logical gate fluctuates, which is inevitable to some extent. 
Therefore, before they start costly mass-production,
the VLSI makers would like to know
whether or not a sufficient number of their new chips
will perform as expected. 
To estimate the signal delay of a logical circuit, we consider 
the longest path length in a DAG by considering each of the gates and lines
as an edge and each fluctuating signal delay as a random edge length.
Then, the signal delay in the entire circuit is
the longest path length in the DAG.
To estimate how many products perform well, 
we would like to know the probability (the distribution function)
that the length of the longest path is at most a given value $x$.

\subsection{Formal Description of the Problem}
We consider a DAG $G=(V,E)$ with vertex set
$V=\{1,\dots,n\}$ and edge set $E\subseteq V\times V$ where $|E|=m$.
When an edge $e$ goes from $u\in V$ to $v\in V$, we write $e=uv=(u,v)$.
For any graph $G$, we write $\vertex{G}$ and $\edge{G}$ to mean
the vertex set and the edge set of $G$, respectively.
We assume that $G$ has no isolated vertex.
We also
assume that the vertex set $V=\{1,\dots,n\}$ are topologically ordered.
\begin{definition}
For a DAG $G$, we define a {\em source} of $G$ as
a vertex that has no incoming edge.
Also, a {\em terminal} of $G$ is a vertex that has no outgoing edge.
The set of all sources (resp. terminals) in $G$ is denoted by
$\Src{G}=\{1,\dots,|\Src{G}|\}$
(resp. $\Term{G}=\{n-|\Term{G}|+1,\dots,n\}$).
When a vertex $v\in \vertex{G}$ is neither a source nor a terminal,
$v$ is an {\em internal vertex}, and
$\internal{G}=\vertex{G}\setminus (\Src{G}\cup \Term{G})$
is the set of all internal vertices.
A (directed) $s-t$ path $\pi$ in $G$ is a subgraph of $G$ where
$\vertex{\pi}=\{u_1,u_2,\dots,u_p\}\subseteq \vertex{G}~~(p\ge 2)$,
$\edge{\pi}=\{u_iu_{i+1}|i=1,\dots,p-1\}$,
$u_1\in \Src{G}$ and $u_p\in \Term{G}$.
We write $\Pi_G$ to mean
the set of all paths from the sources to the terminals in $G$.
\end{definition}

Then, we consider $m$ mutually independent random variables.
Let $Y_{uv}$ for each $e=uv\in E$ be a mutually independent random variable.
We compute the probability that the longest path length 
$X_{\rm MAX}=\max_{\pi\in \Pi_G}\left\{\sum_{uv\in \edge{\pi}}Y_{uv}\right\}$ 
is at most $x\in \mathbb{R}_{\ge 0}$.
Note that computing $\Pr[X_{\rm MAX}\le x]$ and finding out
a path that is the ``longest'' in some sense are different problems
in the case of random edge lengths.
It may seem counter-intuitive because these two problems are
almost equivalent if edge lengths are static values.
However, any $s-t$ path in $G$ can be the longest with
a certain positive probability
when we assume some natural distributions
for the random edge lengths 
like the uniform distribution over $[0,1]$.
Throughout this paper, we focus on the problem of estimating
the probability distribution function $\Pr[X_{\rm MAX}\le x]$.
We do not consider the problem of finding out any path.

In case we consider that the length of each edge $uv\in E$
is uniformly distributed,
the input of the problem consists of
DAG $G$, $x\in\mathbb{R}$, and a vector $\Vec{a}\in \mathbb{Z}_{>0}^m$.
Here, for each edge $uv\in E$, $\Vec{a}$ has a component
$a_{uv}\in\mathbb{Z}_{>0}$.
Let a random vector $\Vec{X}$ be
uniformly distributed over $[0,1]^m$.
For each edge $uv\in E$,
random variable $Y_{uv}=a_{uv}X_{uv}$ 
is the random edge length with
its distribution function
$F_{uv}(x)=\Pr[a_{uv}X_{uv}\le x]$.
In this case, the distribution function $\Pr[X_{\rm MAX}\le x]$
of the longest path length 
is equal to the volume of a polytope 
\begin{align*}
  K_G(\Vec{a},x)\defeq \left\{\Vec{x}\in [0,1]^m \left| \bigwedge_{\pi\in \Pi_G}\sum_{uv\in \edge{\pi}} a_{uv}x_{uv}\le x\right.\right\},
\end{align*}
where each component $x_{uv}$ of $\Vec{x}\in \mathbb{R}^{m}$ is
related to an edge $uv\in E$.
If $G$ is a 
directed path, $K_G(\Vec{a}, x)$ is a $0-1$ knapsack polytope.
Computing the volume of $K_G(\Vec{a},x)$ is $\#$P-hard 
even if $G$ is a directed path (see~\cite{DF1988}) 
\footnote{
Intuitively, the
breakpoints of the function
$F(x)={\rm Vol}(K_G(\Vec{a},x))$ increases 
exponentially with respect to $n$. For example,
consider the case where each component $a_{i,i+1}$ 
of $\Vec{a}$ is $a_{i,i+1}=2^i$ for edge $(i,i+1)~~i=1,\dots,n-1$.
}.

The different assumptions of the edge length
distributions give us completely different problems.
For example, 
if the edge lengths are all normally distributed or
all exponentially distributed,
and $G$ is a directed path,
we can efficiently compute $\Pr[X_{\rm MAX}\le x]$.
If the edge lengths are discrete random variables,
Hagstrom~\cite{Hagstrom1988} proved that the problem is $\#$P-hard
in DAGs where any $s-t$ path has at most three edges.

\subsection{Results about Computing High Dimensional Volume}
Consider uniformly distributed edge lengths.
Then, the problem of computing the value of the distribution function of
the longest path length is equivalent to computing the volume of a polytope
$K_G(\Vec{a},x)$.
Here, we briefly introduce some results about
randomized approximation algorithms, hardness results, and
deterministic approximation algorithms.

There are efficient randomized algorithms
for estimating $n$-dimensional convex volumes.
Dyer et al.~\cite{DFK1991}
showed the first FPRAS (fully polynomial time randomized approximation scheme)
that finishes in $O^\ast(n^{23})$ time for the volume of 
the general $n$-dimensional convex body. Here $O^\ast$ ignores 
the factor of ${\rm poly}(\log n)$ and $1/\epsilon$ factor.
There are faster
FPRASs \cite{LV2006,CV2015}. 
Currently, the fastest FPRAS\cite{CV2015} 
runs in $O^\ast(n^3)$ time for well-rounded convex bodies.

In general, exactly computing the $n$-dimensional volume of a polytope
is hard.
Elekes~\cite{Elekes1986} considered
an $n$-dimensional convex body accessible by membership oracle
and proved that no deterministic polynomial
time algorithm achieves the approximation ratio of $1.999^n$.
The bound is updated by 
B\'{a}r\'{a}ny and ~F\"{u}redi~\cite{BF1987} up to $(c n/\log n)^{n/2}$, 
where $c$ does not depend on $n$.
Dyer and Frieze~\cite{DF1988} proved that computing the volume of 
the $0-1$ knapsack polytope $K$ is $\#$P-hard.

Recently, there have been
some deterministic approximation algorithms for the volume
of the knapsack polytope.
Li and Shi~\cite{LS2014} showed
a fully polynomial time approximation scheme (FPTAS)
for the distribution function of 
the sum of the discrete random variables. 
Their algorithm is applicable to the approximation of
${\rm Vol}(K_G(\Vec{a},x))$ 
if $G$ is a directed path.
Their algorithm is based on
some ideas due to 
\v{S}tefankovi\v{c} et al.~\cite{SVV2012}
(see also \cite{GKM2010},\cite{GKMSVV2011}),
which is a kind of dynamic programming.
Motivated by the deterministic approximation technique
of the above results, Ando and Kijima~\cite{AK2016}
showed another FPTAS
using the approximate convolution integral.
Their algorithm runs in $O(n^3/\epsilon)$ time.
They extended the FPTAS to the problem of computing the volume of 
the multiple constraint knapsack polytope. 
Given $m\times n$ matrix $A\in \mathbb{Z}_{\ge 0}^{mn}$ and
a vector $\Vec{b}\in \mathbb{Z}_{\ge 0}^{m}$,
the multiple constraint knapsack polytope 
is $K_m(A,\Vec{b})\defeq \{\Vec{x}\in [0,1]^{n}| A\Vec{x}\le \Vec{b}\}$.
Their algorithm finishes in $O((n^2\epsilon^{-1})^{m+1}nm \log m)$ time. 
Thus, there is an FPTAS for ${\rm Vol}(K_m(A,\Vec{b}))$
if the number $m$ of constraints is bounded by a constant.

\subsection{Treewidth and Related Results}
Robertson and Seymour~\cite{RS1986} first
defined the notion of the treewidth.
For the case where the treewidth of undirected graph $G$
is at most a constant $k$,
Bodlaender's linear time algorithm~\cite{Bodlaender1996} 
finds the tree decomposition of $G$ in linear time.
Many NP-hard problems and $\#$P-hard problems on graphs
are solvable in polynomial time when the treewidth $k$ is bounded.
See \cite{DoFe1999} for classic results.
Courcelle and Engelfriet~\cite{CourcelleEngelfriet}
proved the existence of a linear time algorithm for
any graph optimization problem that can be described by
Monadic Second Order Logic (MSOL) if the treewidth of $G$
is constant.
Johnson et al.~\cite{JRST2001} defined the directed treewidth. 
They proved that various NP-hard problems,
including computing the directed treewidth,
can be solved in polynomial time on the directed graphs 
with at most constant directed treewidth.

We may expect that computing 
$\Pr[X_{\rm MAX}\le x]$ can also be
solved when the treewidth of $G$ in some sense is bounded by
a constant.
Previously, however, it has not been clear
how to do it, 
probably because the problem
lies at the junction between 
the continuous analysis and the combinatorial optimization.
Our goal is to develop the techniques for this kind of problem.

In \cite{AOSY2009}, Ando et al. considered the problem of computing
$\Pr[X_{\rm MAX}\le x]$ in DAGs with random edge lengths.
They showed a formula that represents $\Pr[X_{\rm MAX}\le x]$
with $n-1$ repetitions of definite integrals and
presented some results in case the antichain size of
the incidence graph $G$
is at most a constant.
In that setting, they proved that
the exact computation of
$\Pr[X_{\rm MAX}\le x]$ finishes in polynomial time
if the edge lengths are exponentially distributed with parameter $1$
(standard exponential distribution).
They also considered the Taylor approximation of $\Pr[X_{\rm MAX}\le x]$,
assuming the random edge lengths follow some
implicit distributions satisfying
certain conditions
where the edge lengths
may not be exponentially distributed.

\subsection{Contribution}
The first contribution of this paper is 
a formula that represents $\Pr[X_{\rm MAX}\le x]$ by
$n-|\Term{G}|$ repetitions of definite integrals.
Then, the formula leads us to
an FPTAS for computing the value
of $\Pr[X_{\rm MAX}\le x]={\rm Vol}(K_G(\Vec{a},x))$
for DAG $G$, whose underlying undirected graph of $G$
has treewidth bounded by a constant $k$
and the edge lengths follow the uniform distribution.
Extending \cite{Ando2017},
which considers an FPTAS for $G$ with constant pathwidth,
we will prove the following theorem
by showing how to deal with multiple subtrees in the tree decomposition.
\begin{theorem}
  Suppose that the treewidth of the underlying undirected graph of
  DAG $G$ is at most a constant $k$.
  There is an algorithm that approximates ${\rm Vol}(K_G(\Vec{a},x))$
  in $O\left(k^2 n\left(\frac{16(k+1)mn^2}{\epsilon}\right)^{4k^2+6k+2}\right)$ time
  satisfying $1\le  V'/{\rm Vol}(K_G(\Vec{a},x))\le 1+\epsilon$, where 
  $V'$ is the output of the algorithm.
\end{theorem}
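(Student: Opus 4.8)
The plan is to evaluate the $n-|T(G)|$ nested definite integrals provided by the fundamental formula established above in an order dictated by a tree decomposition of $G$, while carrying every intermediate function only \emph{approximately} --- as a bounded-degree piecewise polynomial over a regular grid --- so that its description stays polynomial-sized and the accumulated multiplicative error stays below $1+\epsilon$. Recall that for uniformly distributed edge lengths ${\rm Vol}(K_G(\Vec{a},x))=\Pr[X_{\rm MAX}\le x]$, so it suffices to approximate the latter.

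First I would, via Bodlaender's algorithm, compute in $O(n)$ time a tree decomposition of the underlying undirected graph of width at most $k$, refine it to a nice tree decomposition, root it, and fix an elimination order of $V(G)$ that agrees with a bottom-up sweep of the decomposition tree and is also a topological order of $G$; the nested integrals are then evaluated vertex by vertex in this order. The key structural claim is a \emph{bounded-dimension invariant}: once the variables attached to a processed subtree have been integrated out, the residual integrand depends only on the longest-path random variables associated with the $\le k+1$ vertices of the currently active bag, together with the shared budget and the boundary edges incident to that bag --- at most $3k+2$ effective arguments, as the running-time exponent reflects. This is exactly where bounded treewidth enters: because a bag is a vertex separator of the underlying graph, every $s$--$t$ path of $G$ meets the processed part only along that bag, so the joint (sub)distribution function of these $\le 3k+2$ quantities is a sufficient statistic for continuing the computation.

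Each such function I would represent as a bounded-degree piecewise polynomial over a regular grid with $\Theta\!\left(\frac{(6k+6)mn}{\epsilon}\right)$ subdivisions per axis, giving $\left(\frac{(6k+6)mn}{\epsilon}\right)^{(3k+2)(3k+3)}=\left(\frac{(6k+6)mn}{\epsilon}\right)^{9k^2+15k+6}$ cells in total. The three node types of a nice tree decomposition induce three elementary operations on these representations: an introduce node multiplies in one one-dimensional edge-length density; a forget node performs one definite (convolution-type) integral, handled by the approximate-convolution technique of Ando and Kijima; a join node multiplies two child representations. Each operation takes grid representations to grid representations with $O((3k+2)^2)$ arithmetic work per cell, so over the $O(n)$ nodes one gets the prefactor $(3k+2)^2 n$, and hence the stated running time.

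The crux --- and the step I expect to be the main obstacle --- is the error analysis: bounding how the local discretization error propagates through the $\le n$ successive operations, and doing so \emph{multiplicatively} rather than merely additively. Since each operation is an integral or a product of monotone functions taking values in $[0,1]$, choosing the grid spacing and local degree as above makes every operation inflate the relative error by at most a factor $1+O(\epsilon/n)$, so the compounded error over $n$ steps is $1+\epsilon$; the one-sided guarantee $1\le V'/{\rm Vol}(K_G(\Vec{a},x))$ is arranged by rounding each local polynomial conservatively so that the running approximation always dominates the true value and only the upper bound can degrade. The delicate point is that a tiny volume could in principle be destroyed by additive error, so the argument must exploit the monotonicity and the explicit structure of the integrals and take the per-step accuracy as small as $\epsilon/\mathrm{poly}(k,m,n)$ --- together with a careful verification of the bounded-dimension invariant, since a directed path in $G$ may cross the separating bag several times.
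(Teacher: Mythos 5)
Your overall architecture---evaluate the nested integrals of the fundamental formula bottom-up along a tree decomposition, carrying each intermediate distribution function as a gridded, conservatively rounded approximation in $O(k)$ variables per bag---is the same as the paper's, and your grid-size accounting matches the source of the exponent $9k^2+15k+6=(3k+2)(3k+3)$. Note, though, that you infer the figure $3k+2$ from the statement rather than derive it: the paper obtains the ``bounded-dimension invariant'' by assigning each edge to exactly one bag-subgraph via an ancestor-first rule and by splitting every vertex into three to build a \emph{separated} tree decomposition; this vertex-splitting is precisely what handles the multiple-crossing paths you flag as delicate, and it is why the effective width is $3k+2$ and the grid parameter is $\lceil 2((3k+2)+1)mn/\epsilon\rceil=\lceil(6k+6)mn/\epsilon\rceil$.

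The genuine gap is in the error analysis, which you correctly identify as the crux and then resolve by assertion. The claim that each introduce/forget/join operation ``inflates the relative error by at most a factor $1+O(\epsilon/n)$'' is not something grid rounding gives you: where the true distribution function is tiny (near the lower end of its support), the additive error of counting one extra cell already exceeds the true value by an unbounded factor, so no per-step multiplicative bound holds, and monotonicity alone does not rescue it. The paper never proves a per-step relative bound. Instead it establishes a one-sided \emph{horizontal} sandwich: the over-counting staircase at the root is bounded above by the exact distribution function evaluated at a translated argument $x+h$ with $h=(k+1)nx/M$, where the translation accumulates \emph{additively} through the tree via ``vertex discounts'' attached to the sources of each bag-subgraph. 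Only at the very end is this horizontal shift converted into a multiplicative ratio, by the geometric observation that $K_G(\Vec{a},\eta x)\subseteq \eta K_G(\Vec{a},x)$, hence ${\rm Vol}(K_G(\Vec{a},\eta x))\le \eta^m\,{\rm Vol}(K_G(\Vec{a},x))$ with $\eta=1+(k+1)n/M=1+\epsilon/(2m)$, so the ratio is at most $(1+\epsilon/(2m))^m\le 1+\epsilon$. This translate-then-scale argument (inherited from Ando--Kijima) is the missing ingredient; without it or an equivalent, your proposal does not establish the claimed multiplicative guarantee.
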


Next, we show more examples where
the edge lengths follow other distributions.
If the edge lengths are mutually independent and
follow the standard exponential distribution,
we present how to compute the exact value of
$\Pr[X_{\rm MAX}\le x]$. 
Previously, 
no polynomial time algorithm has been known
even in case the treewidth of $G$ is bounded by a constant. 
We next consider more general continuous distribution for
the random edge lengths. In this setting, we obtain the value of
$\Pr[X_{uv}\le x]$ and its derivative of arbitrary order for
$uv\in E$ by an oracle.
If the edge length distributions satisfy certain
conditions, we can use the Taylor approximation
for $\Pr[X_{uv}\le x]$.
We show that the problem of computing
$\Pr[X_{\rm MAX}\le x]$ is fixed parameter tractable
if we choose treewidth $k$, additive error $\epsilon'$, and $x$
as the parameter. 
Our technique applies to
the standard exponentially distributed edge lengths and
the Taylor approximation,
which extends the result of \cite{AOSY2009}
to the graphs with constant treewidth.
Note that the maximum antichain size of the incidence graph can be
$\Omega(n)$ even if $G$ has constant treewidth.

\subsection{Organization}
Section 2 shows a fundamental formula representing
$\Pr[X_{\rm MAX}\le x]$ by $n-|\Term{G}|$ repetitions of definite integrals.
In Section 3, we describe our approximation algorithm for
the volume of $K_G(\Vec{a},x)$. Then,
we prove the approximation ratio and the running time so that
our algorithm is an FPTAS. 
Section 4 shows the cases where
the edge lengths follow other distributions. 
There, we present an exact
$((4k+2)mn)^{O(k)}$-time algorithm for the
case where edge lengths follow the standard exponential distribution.
Finally, we show another result assuming
abstract distributions for the
edge lengths satisfying certain conditions.
We prove that computing
$\Pr[X_{\rm MAX}\le x]$ is a fixed parameter tractable problem
if $k$, $\epsilon'$, and $x$ are the parameters.

\section{Integrals that Give $\Pr[X_{\rm MAX}\le x]$}
In this section, we prove a fundamental formula for $\Pr[X_{\rm MAX}\le x]$.
Since the formula includes definite integrals repeated $n-|\Term{G}|$ times,
we deal with $n-|\Term{G}|$ dummy variables of integrals.
We will see that each of these dummy variables is associated with
a vertex of $G$.
Then, we show how to compute $\Pr[X_{\rm MAX}\le x]$ by
using the tree decomposition.
To exploit the advantage of tree decomposition,
we need our notation for vectors. 

\subsection{Dummy Variable Vectors}
Note that our problem lies at the junction between
the continuous analysis and the combinatorial optimization.
We need a set of notations 
for concisely arguing the combinations
of many partial derivatives and integrals.
Here, we define notations for the vectors of dummy variables
using an element or a set as its subscript of a component
or a sub-vector.
Let $W$ be a subset of $V$ or a subset of $E$ where $G=(V,E)$.
We consider a vector $\Vec{z}\in \mathbb{R}^{|W|}$.
Each component of $\Vec{z}$ is specified by an element of $W$.
That is, $\Vec{z}[v]$ is a component of $\Vec{z}$ for $v\in W$.
In the following, we write $\Vec{z}\in \mathbb{R}^W$.
For simplicity, we define
$\Vec{z}[v]=0$ for any $v\not\in W$.

For the conciseness, we pack the dummy variables of
the integrals of the ongoing computation as a vector.
In the following, we use $\Vec{z}$ as an $n$-dimensional vector of 
variables $\Vec{z}=(z_1,\dots,z_n)$,
where each component $z_v$ is associated with $v\in V$ 
(i.e., $\Vec{z}[v]=z_v$).
We write $\Vec{y}=\Vec{z}[W]$ for some $W\subseteq V$.
Here, $\Vec{y}$ is a vector with $|W|$ components where
$\Vec{y}[v]=\Vec{z}[v]$ for $v\in W$.
We use the same notation for constant vectors $\Vec{0}$ and $\Vec{1}$, 
where $\Vec{0}[v]=0$ and $\Vec{1}[v]=1$ for any $v$. 
Consider the definite integrals with $|W|$ dummy variables
of $|W|$-variable function $F(\Vec{z}[W])$.
We write 
\begin{align*}
  \int_{\mathbb{R}^W}F(\Vec{z}[W]) {\rm d}\Vec{z}[W]=
  \int_{\mathbb{R}}\cdots \int_{\mathbb{R} }F(\Vec{z})~{\rm d}\Vec{z}[v_{|W|}]\cdots {\rm d}\Vec{z}[v_1],
\end{align*}
for $W=\{v_1,\dots,v_{|W|}\}$.
By {\em arguments} of $F(\Vec{z}[W])$, we mean
$\Vec{z}[W]\in \mathbb{R}^W$ for
the $|W|$-variable function $F(\Vec{z}[W])$.
Since we also deal with
partial derivatives
with respect to many variables,
we save some space by writing
\begin{align*}
  \vecdiff{W}=F(\Vec{z})\defeq\frac{\partial}{\partial \Vec{z}[W]} F(\Vec{z}).
\end{align*}
Let $f_U(\Vec{z}[W])=\vecdiff{U}F(\Vec{z}[W])~~(U\subseteq W)$.
Consider the inverse operation of $\vecdiff{U}$.
For $z_{v_1},\dots,z_{v_{|U|}}$, clearly we have
\begin{align*}
  F(\Vec{z}[W])&=\int_{-\infty}^{\Vec{z}[v_1]}\cdots\int_{-\infty}^{\Vec{z}[v_{|U|}]}f_U(\Vec{z}[W]){\rm d}\Vec{z}[v_{|U|}]\cdots {\rm d}\Vec{z}[v_1], \hspace*{10mm}\text{or,}\\
  F(\Vec{z}[W])&=\int_{(-\infty,\Vec{z}[U])}\hspace*{-10mm}f_U(\Vec{z}[W]){\rm d}\Vec{z}[U],~~\text{where}~~(-\infty, \Vec{z}[U])\defeq\{\Vec{x}\in \mathbb{R}^{U}|\Vec{x}\le \Vec{z}\}.
\end{align*}

In this paper, we use convolutions extensively.
Let $F_v(x)$ and $f_v(x)$ be the probability distribution
function and the probability density function of
random variable $X_v$ for $v=1,2$.
When $X_1$ and $X_2$ are mutually independent,
the distribution function of $X_1+X_2$ is
$\Pr[X_1+X_2\le x]=\int_{\mathbb{R}}f_1(z)F_2(x-z){\rm d}z$.
The density function is 
$\frac{d}{dx}\Pr[X_1+X_2\le x]=\int_{\mathbb{R}}f_1(z)f_2(x-z){\rm d}z$.
We call these integration operations as {\em convolutions}.
We also use $\max$ of $X_1$ and $X_2$.
It is well known that $\Pr[\max\{X_1,X_2\}\le x]=F_1(x)F_2(x)$.

As the ending of the preliminaries, we here introduce one special function.
\begin{definition}
  Let $H(x)$ be {\em Heaviside step function},
  where $H(x)=0$ for $x<0$ and $H(x)=1$ for $x\ge 0$.
\end{definition}
The step function is, in a sense, the distribution function of
a ``random'' variable that takes a static value $0$ with probability $1$.
We use Dirac's delta function $\delta(x)$ as the density.

\subsection{Repetition of Definite Integrals for the Distribution Function of $X_{\rm MAX}$}
We prove that $\Pr[X_{\rm MAX}\le x]$ can be given by
$n-|\Term{G}|$ repetitions of definite integrals.
We first define the longest path length starting at each vertex 
as a random variable.
The following graph plays an essential role.
\begin{definition}
  For $U\subseteq \vertex{G}$,
  graph $\removeOE{G}{U}$ is obtained from
  $G$ by removing all outgoing edges $uv$ of all $u\in U$
  and removing all isolated vertices.
\end{definition}
Next, by using the topologically last internal vertex $v$,
we transform $\Pr[X_{\rm MAX}\le x]$ into 
a convolution of two functions:
the distribution function of the longest path length in 
$\removeOE{G}{v}$ and
that of the longest path length from $v$ to any terminal of $G$.
By repeating the removal of the outgoing edges from
the topologically last internal vertex
until the remaining graph has no edge,
we have the formula for $\Pr[X_{\rm MAX}\le x]$.
In the resulting form,
we associate a variable with each of the sources and
also with each of the terminals,
which will be important later.

Though one obstacle in
computing $\Pr[X_{\rm MAX}\le x]$ is that 
the path lengths can be dependent on each other
because of the shared edges,
we avoid this by considering 
the longest path length from the topologically
last internal vertex $v$.
Let $Z_v$ be the longest path length
from vertex $v$.
Then, $Z_v$ and the edge lengths 
from any topologically earlier vertex $u$
are mutually independent.
Here, we define $\Suc(v)\defeq\{w\in V|vw\in E\}$.
We define $Z_v$ as follows.
\begin{definition}
  \label{definition:Zu}
  For $u\in V$, we set 
  $Z_u\defeq \max_{v\in \Suc(u)}\{X_{uv}+Z_v\}$,
  where $Z_t$ for any $t\in \Term{G}$ is equal to a given 
  value
  $z_t=0$.
\end{definition}
Though we set $z_t=0$ here, we define $z_t$ as a real variable later,
which works in merging the partial results for two graphs. 
It is well known that $Z_u$ is the longest path length
(e.g., Lemma 24.17 of \cite{I2A}).
\begin{proposition}
  \label{proposition:Zu}
The longest path length from $u$ to any terminal is $Z_u$.
\end{proposition}

We define the conditional 
distribution function and the density function of $Z_v$
assuming $Z_w=z_w$ for $w\in\Suc(v)$.
Later, $z_w$'s work as the dummy variables
of the convolutions. 
\begin{definition}
  Let $F_{uv}(x)=\Pr[X_{uv}\le x]$ for $uv\in E$.
  We define
  \begin{align*}
    \LPpdf{F}{u}{\Vec{z}}{u}{\Suc(u)}&\defeq \Pr[Z_u\le z_u| Z_v=z_v, \forall v\in \Suc(u)] = \hspace*{-2mm}\prod_{v\in \Suc(u)}\hspace*{-2mm}F_{uv}(z_u-z_v),
  \end{align*}
  and $\LPpdf{f}{u}{\Vec{z}}{u}{\Suc(u)}\defeq \frac{\partial}{\partial z_u} \LPpdf{F}{u}{\Vec{z}}{u}{\Suc(u)})$.
\end{definition}
Then, we define
$|\Src{G}\cup \Term{G}|$-variable functions $\Phi$ and $\phi$.
\begin{definition}
  \label{definition:Phi}
  For an $s-t$ path $\pi$ in $G$, let $\src{\pi}$ and $\term{\pi}$ be the source
  and the terminal of $\pi$.
  Remember that $\Pi_G$ is the set of all $s-t$ paths. Then,
  \begin{align*}
    \LPpdf{\Phi}{G}{\Vec{z}}{S}{T}=\Phi(G;\Vec{z}[\Src{G}],\Vec{z}[\Term{G}])&\defeq\!\Pr\!\left[\bigwedge_{\pi \in \Pi_G}\hspace*{-2mm}\sum_{\hspace*{2mm}uv\in \edge{\pi}}\hspace*{-3mm}X_{uv}\!\le\!\Vec{z}[\src{\pi}]\!-\!\Vec{z}[\term{\pi}]\right]\!\!,~\text{and}\\
    \LPpdf{\phi}{G}{\Vec{z}}{S}{T}=\phi(G;\Vec{z}[\Src{G}],\Vec{z}[\Term{G}])&\defeq\vecdiff{\Src{G}}\Phi(G;\Vec{z}[\Src{G}],\Vec{z}[\Term{G}]).
  \end{align*}
\end{definition}
Remember that $\vecdiff{\Src{G}}$
represents partial differentiation
with respect to all $z_s$'s for $s\in \Src{G}$.
Here, $\Vec{z}[\Src{G}]$ specifies the
longest path length from each of $s\in\Src{G}$;
$\Vec{z}[\Term{G}]$ is the value
we subtract from the longest path length ending at 
each of $t\in \Term{G}$.
That is, any path starting at $s\in \Src{G}$ and 
ending at $t\in \Term{G}$
must be shorter than $\Vec{z}[s]$ by $\Vec{z}[t]$,
which works as a kind of glue to the other graphs later.
Let $\Vec{\sigma}$ be a vector satisfying
$\Vec{\sigma}[s]=1$ for $s\in \Src{G}$ and
$\Vec{\sigma}[v]=0$ for $v\not\in\Src{G}$.
Then,
\begin{align*}
  \Pr[X_{\rm MAX}\le x]&=\LPpdf{\Phi}{G}{x\Vec{\sigma}}{S}{T}=\int_{(-\infty,x\Vec{1}[\Src{G}])}\LPpdf{\phi}{G}{x\Vec{\sigma}}{S}{T}{\rm d}\Vec{z}[\Src{G}].
\end{align*}

Theorem \ref{th:multiplesourceterminal} shows
the exact computation of $\LPpdf{\phi}{G}{\Vec{z}}{S}{T}$. 
\begin{theorem}
  \label{th:multiplesourceterminal}
  $\displaystyle\LPpdf{\phi}{G}{\Vec{z}}{S}{T}=\int_{\mathbb{R}^{\internal{G}}}\prod_{u\in \Src{G}\cup\internal{G}}\LPpdf{f}{u}{\Vec{z}}{u}{\Suc(u)} {\rm d}\Vec{z}[\internal{G}]$
\end{theorem}
\begin{proof} 
Remember that $\Vec{z}[s]=z_s$. By Proposition~\ref{proposition:Zu}, we have 
\begin{align*}
  \bigwedge_{\pi \in \Pi_G}\hspace*{-2mm}\sum_{\hspace*{2mm}uv\in \edge{\pi}}\hspace*{-2mm}X_{uv}\!\le\!\Vec{z}[\src{\pi}]-\Vec{z}[\term{\pi}]
  \hspace*{3mm}\Leftrightarrow \bigwedge_{u\in \vertex{G}}\hspace*{-4mm}\bigwedge_{\hspace*{4mm}v\in\Suc(u)} \hspace*{-5mm}X_{uv}+Z_v\!\le\! Z_u \wedge \bigwedge_{s\in\Src{G}}\hspace*{-2mm}Z_s\!\le\! z_s. 
\end{align*}
The condition on the right is due to
Definition~\ref{definition:Zu}.
We use the condition on the right instead of the condition in the probability of Definition~\ref{definition:Phi}. 
That is,
\begin{align*}
  \LPpdf{\Phi}{G}{\Vec{z}}{S}{T}=\Pr\left[\bigwedge_{u\in \vertex{G}}\bigwedge_{v\in \Suc(u)}X_{uv}+Z_v\le Z_u\wedge \bigwedge_{s\in \Src{G}} Z_s\le z_s\right].
\end{align*}


We move on to the proof of the formula.
For any $W\subseteq \internal{G}$, we prove the following formula 
by the induction on $W$.
\begin{align}
  \LPpdf{\Phi}{G}{\Vec{z}}{S}{T}=\int_{\mathbb{R}^W}\LPpdf{\Phi}{\removeOE{G}{W}}{\Vec{z}}{S}{T} \prod_{w\in W}\LPpdf{f}{w}{\Vec{z}}{w}{\Suc(w)}{\rm d}\Vec{z}[W]. \label{form:inductionhypothesis}
\end{align}
That is, we aggregate all possible values of $Z_w$'s for $w\in W$
by introducing integrals with respect to $z_w$'s, where
the density function of $Z_w$ is
$\LPpdf{f}{w}{\Vec{z}}{w}{\Suc(w)}$.

As the base case, we have (\ref{form:inductionhypothesis}) 
for $W=\emptyset$.
In the induction step, we add the vertices $v\in \internal{G}$ to $W$
in the backward topological order.
Let $v\in \internal{G}\setminus W$
where $W$ is the set of
internal vertices that are topologically later than $v$.
Since each vertex $w\in W$ is a terminal or removed as an isolated vertex in
$\removeOE{G}{W}$, we already have the integral with respect to $z_w$
that aggregates all possible values of $Z_w$.
Then, all $z_w$'s for $w\in W$ are static values in the integrand,
so that  
\begin{align*}
  \LPpdf{\Phi}{\removeOE{G}{W}}{\Vec{z}}{S}{T}&=\int_{\mathbb{R}}\LPpdf{\Phi}{\removeOE{G}{W\cup \{v\}}}{\Vec{z}}{S}{T}\LPpdf{f}{v}{\Vec{z}}{v}{\Suc(v)}{\rm d}\Vec{z}_v, \\
  \LPpdf{\Phi}{G}{\Vec{z}}{S}{T}&=\int_{\mathbb{R}^{W}}\int_{\mathbb{R}}\LPpdf{\Phi}{\removeOE{G}{W\cup \{v\}}}{\Vec{z}}{S}{T} \LPpdf{f}{v}{\Vec{z}}{v}{\Suc(v)}{\rm d}\Vec{z}[v]\prod_{w\in W}\LPpdf{f}{w}{\Vec{z}}{w}{\Suc(w)}{\rm d}\Vec{z}[W],
\end{align*}
implying that the induction hypothesis (\ref{form:inductionhypothesis})
holds for $W\cup \{v\}$.

Therefore, we repeat the transformation
for all $v\in \internal{G}$ in backward topological order, which leads to
\begin{align*}
  \LPpdf{\Phi}{G}{\Vec{z}}{S}{T}=\int_{\mathbb{R}^{\internal{G}}}\LPpdf{\Phi}{\removeOE{G}{\internal{G}}}{\Vec{z}}{S}{T}\prod_{v\in \internal{G}}\LPpdf{f}{v}{\Vec{z}}{v}{\Suc(v)}{\rm d}\Vec{z}[\internal{G}].
\end{align*}
Since $\LPpdf{\Phi}{\removeOE{G}{\internal{G}}}{\Vec{z}}{S}{T}=\prod_{s\in \Src{G}}\LPpdf{F}{s}{\Vec{z}}{s}{\Suc(s)}$, we obtain the theorem.
\end{proof}

\subsection{Tree Decomposition}
The followings are the definitions of the treewidth and related terms.
For any set $W$, $\binom{W}{k}$ is the family of subsets $U$ of $W$
with cardinality $|U|=k$.
\begin{definition}
  \label{def:treedecomposition}
  A {\em tree decomposition} of $G$ is
  a pair ${\cal T}(G)=({\cal B}, {\cal A})$,
  where ${\cal B}=\{B_0,B_1,\dots,B_{b-1}\}$ is a family of subsets of $V$,
  and
  ${\cal A}\subseteq \binom{\cal B}{2}$ is an edge set so that ${\cal T}(G)$
  is a tree satisfying the following three conditions. 
\begin{enumerate}[{\hspace*{2mm}TD}1{.\hspace*{1mm}}]
  \item $\bigcup_{B\in {\cal B}}B = V$;
  \item $uv\in E\Rightarrow\exists B\in {\cal B}$ s.t. $\{u,v\}\subseteq B$;
  \item for all $B_h,B_i,B_j\in {\cal B}$, if $B_i$ is on a path from $B_h$ to $B_j$ in ${\cal T}(G)$, then $B_h\cap B_j\subseteq B_i$.
\end{enumerate}
We call $B\in {\cal B}$ a {\em bag}.
The {\em width} of 
${\cal T}(G)$ is $\max_{B\in {\cal B}}|B|-1$.
The {\em treewidth} of $G$ is the minimum of the width of
all possible ${\cal T}(G)$'s. 
\end{definition}
Fig.\ref{fig:FoldedPath} shows an example of a DAG and its tree decomposition.
\begin{figure}
  \begin{center}
    \includegraphics[width=70mm,clip]{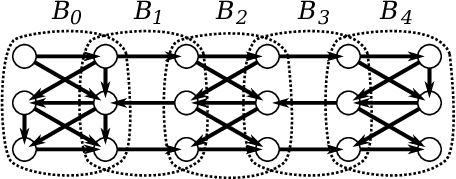}
  \end{center}
  \vspace*{-5mm}
  \caption{An example of a DAG and its tree decomposition. Bags $B_1,B_2$, and $B_3$ are visited three times by a path. By the ancestor-first rule, the edges between the vertices in $B_0\cap B_1$ belong to bag-subgraph $G_0$. The bag-subgraph is defined in Section \ref{section:bagsubgraph}.}
  \label{fig:FoldedPath}
\end{figure}

In the tree decomposition, we call $B_0\in {\cal B}$
the {\em root} of ${\cal T}(G)$.
Let $\Ch(i)$ be the index set of the children of 
bag $B_i\in {\cal B}$ in ${\cal T}(G)$.
Let $\Sub(i)$ be the index set of bags in the subtree of ${\cal T}(G)$
rooted at $B_i$.

We may expect 
an efficient algorithm
if $G$ has small treewidth.
Small treewidth allows us
to process a bag in a brute-force manner. 
Therefore, if we can efficiently process the leaves first
and then merge the subtrees,
we have an efficient algorithm.
We use the following {\em binary tree decomposition}
for bounding the running time of our algorithm.
\begin{proposition}[Lemma 13.1.3 of \cite{Kloks1994}]
  \label{proposition:binary}
  Given a tree decomposition ${\cal T}(G)$ of $G$ of width at most $k$, 
  it is possible to transform ${\cal T}(G)$ in time $O(n)$ into a
  binary tree decomposition ${\cal T}'(G)$ of $G$ of width $k$ and
  with at most $4n$ bags, where 
  we have $|\Ch(i)|\le 2$ for any bag $B_i$ in ${\cal T}'(G)$.
\end{proposition}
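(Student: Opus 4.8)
\emph{Proof plan.} The idea is to repair the given tree decomposition ${\cal T}(G)=({\cal B},{\cal A})$ of width at most $k$ by two local operations, each of which leaves the width unchanged. First I would shrink ${\cal B}$: as long as some edge $\{B_i,B_j\}\in{\cal A}$ satisfies $B_i\subseteq B_j$, contract it, i.e. delete $B_i$ and reconnect every other neighbour of $B_i$ directly to $B_j$. Conditions~1 and~2 of Definition~\ref{def:treedecomposition} are untouched, and condition~3 is preserved because any tree path that formerly ran through $B_i$ now runs through the superset $B_j$. When no such edge remains, root the tree at an arbitrary bag $B_0$; for every non-root bag $B_i$ there is then a vertex $v\in B_i\setminus B_{\mathrm{parent}(i)}$, the subtree of bags containing $v$ avoids $B_{\mathrm{parent}(i)}$ and hence lies below $B_i$, so $B_i$ is the topmost bag containing $v$. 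As each vertex has a unique topmost bag, distinct bags get distinct vertices and $|{\cal B}|\le n+1$. (If, as for the output of Bodlaender's algorithm, $|{\cal B}|=O(n)$ to begin with, this step costs $O(n)$ since $k$ is a constant.)

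Second, I would reduce the out-degree to at most two. Traversing the rooted tree, whenever a bag $B$ has children $C_1,\dots,C_d$ with $d\ge 3$, replace $B$ by a path $B^{(1)},\dots,B^{(d-1)}$ of copies all carrying the vertex set of $B$: give $B^{(1)}$ the old parent of $B$, let $B^{(j)}$ have children $C_j$ and $B^{(j+1)}$ for $j<d-1$, and let $B^{(d-1)}$ have children $C_{d-1}$ and $C_d$. This creates $d-2$ new bags for a node of out-degree $d$, so the number of new bags is at most $\sum_i \max\{|\Ch(i)|-2,0\}\le\sum_i|\Ch(i)|=|{\cal B}|-1\le n$; hence the final decomposition has at most $2n+1\le 4n$ bags, every bag has at most two children, and the whole construction is linear in the number of bags, i.e. $O(n)$.

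What remains is to check that the output of the second step is still a tree decomposition, and of the same width. Width and conditions~1 and~2 are immediate because every copy $B^{(j)}$ has exactly $B$'s vertex set, so no vertex and no covered edge is lost. The only genuine point is condition~3: given bags $D$ and $D'$ in the new tree, identify each copy $B^{(j)}$ with its original $B$; then the path between $D$ and $D'$ in the new tree maps onto the path between them in the old tree, with some bags possibly repeated. Every bag on the new path is therefore a copy of a bag on the old path, and condition~3 for the old decomposition yields $D\cap D'\subseteq B=B^{(j)}$. This verification --- that chaining identical copies of a high-degree bag cannot break the consistency condition --- is the one step that needs care; the size reductions, the $2n+1\le 4n$ count, and the linear-time bound are routine bookkeeping.
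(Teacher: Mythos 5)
The paper offers no proof of Proposition~\ref{proposition:binary} at all: it is imported verbatim as Lemma~13.1.3 of Kloks' book, so there is nothing internal to compare your argument against. What you have written is the standard construction behind that lemma, and it is correct: contract edges $\{B_i,B_j\}$ with $B_i\subseteq B_j$ to get an injective map from bags to vertices (hence $O(n)$ bags), then expand each node of out-degree $d\ge 3$ into a path of $d-1$ identical copies; the bag count $\le 2n+1\le 4n$ and the width preservation are exactly as you say. Two small remarks. First, the $O(n)$ running time presupposes that the input decomposition already has $O(n)$ bags — you flag this yourself, and it is consistent with the paper's standing assumption that $b=O(n)$ (e.g.\ for the output of Bodlaender's algorithm), but strictly speaking the time bound is in the number of input bags. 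Second, your verification of condition~3 for the splitting step via "project the new path onto the old path" is right but quietly uses the fact that contracting the disjoint paths of copies back to their originals cannot introduce a detour (otherwise the simple path in the new tree would have to cross the unique edge between two copy-paths twice); a slightly cleaner route is the equivalent formulation of condition~3 that, for each vertex $v$, the bags containing $v$ induce a connected subtree — replacing each old bag by a path of identical copies manifestly preserves this. Either way, your proof is complete and supplies exactly the content the paper delegates to the citation.
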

Throughout this paper, we assume that
$G$ has $n$ vertices and that 
the treewidth of $G$ is bounded by a constant $k$.
We assume that the number $b$ of bags is at most $4n$,
and that the number of the children of a bag is at most two.
In the following, we set $\Ch(i)=\{\ell,r\}$ when
$|\Ch(i)|=2$.

\subsection{Bag-Subgraph}
\label{section:bagsubgraph}
We here define a {\em bag-subgraph} $G_i$ for each bag $B_i\in {\cal B}$.
Notice that $G_i$ is not
the induced subgraph with vertex set $B_i$.
See 
Fig.~\ref{fig:FoldedPath},
where we have edges between the vertices in $B_0\cap B_1$.
These edges need special attention.
If we merge the resulting forms about some subgraphs for
$B_0$ and $B_1$ while edge $uv$ appears both of the two subgraphs,
it means we treat $uv$ doubly so that its length is
$\max\{X_{uv}, X'_{uv}\}$,
where $X_{uv}$ and $X'_{uv}$ are independent and 
identically distributed.
This is acceptable only in case the edge length $X_{uv}$
is a static value since $\Pr[\max\{X_{uv}, X'_{uv}\}\le x]=(F_{uv}(x))^2$.
Instead, we need a definition so that
each random length edge belongs to exactly one bag-subgraph.
Here, we set $\vertex{G_i}=B_i$ and
\begin{align*}
  \edge{G_i}\defeq\{uv\in \edge{G} | \{u,v\}\subseteq B_i ~~\text{and}~~ \forall h\in \Anc(i),~~ \{u,v\}\not\subseteq B_h \},
\end{align*}
where $\Anc(i)$ is the index set of the ancestors of $B_i$ in ${\cal T}(G)$
excluding $i$ itself.
Each edge is given to a bag-subgraph closest to the root.
We call this rule {\em the ancestor-first rule}.
Later, we need the following.
\begin{proposition}
  \label{proposition:separation}
  For any three bags $B_h,B_i$, and $B_j$ where $i\in\Ch(h)$ and
  $j\in \Sub(i)\setminus\{i\}$,
  we have
  $\{uv,vu\}\cap \edge{G}=\emptyset$
  if $u\in B_h\setminus B_i$ and
  $v\in B_j\setminus B_i$.
\end{proposition}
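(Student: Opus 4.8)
The plan is to prove Proposition~\ref{proposition:separation} as a direct consequence of the third (``connectedness'' / path) axiom of a tree decomposition, together with the fact that any edge of $G$ must be covered by a single bag (the second axiom). Throughout I fix bags $B_h, B_i, B_j$ with $i \in \Ch(h)$ and $j \in \Sub(i) \setminus \{i\}$, and vertices $u \in B_h \setminus B_i$ and $v \in B_j \setminus B_i$; I must show that neither $uv$ nor $vu$ is an edge of $G$.

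First I would record the key structural observation about the position of $B_i$ in the tree ${\cal T}(G)$: since $h$ is the parent of $i$ and $j$ lies in the subtree rooted at $i$, the (unique) path in ${\cal T}(G)$ from $B_h$ to $B_j$ passes through $B_i$. Indeed, removing the edge $\{B_h, B_i\}$ from the tree separates $B_h$ from everything in $\Sub(i)$, so any $B_h$--$B_j$ path must use that edge and hence visit $B_i$; moreover $B_i$ lies strictly between $B_h$ and $B_j$ on that path because $j \neq i$ and $h \neq i$. Now I apply condition~3 of Definition~\ref{def:treedecomposition} with the triple $(B_h, B_i, B_j)$: since $B_i$ is on the path from $B_h$ to $B_j$, we get $B_h \cap B_j \subseteq B_i$.

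The contradiction step is then immediate. Suppose, for contradiction, that $uv \in E(G)$ or $vu \in E(G)$; in either case $\{u,v\}$ is an edge of the underlying graph, so by condition~2 of Definition~\ref{def:treedecomposition} there is a bag $B \in {\cal B}$ with $\{u,v\} \subseteq B$. I would now argue that this forces $u$ and $v$ both into $B_i$, contradicting $u \notin B_i$ and $v \notin B_i$. The cleanest way is again via condition~3 applied to the endpoints separately: $u$ appears in $B_h$ and in $B$, so $u$ lies in every bag on the tree-path from $B_h$ to $B$; similarly $v$ appears in $B_j$ and in $B$, so $v$ lies in every bag on the tree-path from $B_j$ to $B$. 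One checks that the tree-paths $B_h \rightsquigarrow B$ and $B_j \rightsquigarrow B$ together cover the whole path $B_h \rightsquigarrow B_j$ (in a tree, the path between two nodes is contained in the union of the paths from each of them to any third node), and in particular they jointly cover $B_i$, which sits on $B_h \rightsquigarrow B_j$. Hence $B_i$ is on at least one of the two sub-paths, giving $u \in B_i$ or $v \in B_i$ --- either way a contradiction. (Alternatively, and even more simply: $u \in B_h \cap B$ and $v \in B_j \cap B$; apply the previous paragraph's inclusion $B_h \cap B_j \subseteq B_i$ after first noting $\{u,v\} \subseteq B$ forces $u,v$ into the common bag structure --- but the path-covering argument is the robust version and I would write that.)

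I do not expect any serious obstacle here; the proposition is a standard ``two non-shared-side vertices across a separator are non-adjacent'' lemma for tree decompositions, and the whole content is bookkeeping with the tree-path axiom. The one point that warrants a careful sentence rather than a hand-wave is the claim that $B_i$ lies on the path from $B_h$ to $B_j$ in ${\cal T}(G)$ --- this is where the hypotheses $i \in \Ch(h)$ and $j \in \Sub(i)\setminus\{i\}$ are actually used, and it is worth spelling out via the observation that deleting the tree-edge $\{B_h,B_i\}$ disconnects $B_h$ from the subtree $\Sub(i)$. Everything else follows formally from Definition~\ref{def:treedecomposition}, conditions 2 and 3.
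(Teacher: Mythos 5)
Your proof is correct and takes essentially the same route as the paper's: assume the edge exists, invoke condition~2 to get a bag $B$ containing both endpoints, and then use the connectivity condition~3 to force $u\in B_i$ or $v\in B_i$, contradicting the hypotheses. The only difference is organizational --- the paper splits into cases according to whether that bag is an ancestor or a descendant of $B_i$, while your path-covering argument treats all positions of $B$ uniformly (and is in fact slightly more airtight, since the paper's two cases do not explicitly address a bag lying in a sibling subtree of $B_i$).
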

\begin{proof}
  Assume for contradiction that $uv\in \edge{G}$ so that
  $u\in B_h\setminus B_i$ and
  $v\in B_j\setminus B_i~(j\in \Sub(i)\setminus\{i\})$.
  By TD2, 
  there exists $B^\ast$ such that $\{u,v\}\subseteq B^\ast$.
  Let us prove that (1) there is no path from $B_i$ to $B^\ast$ via $B_h$
  and (2) $B^\ast$ is not a descendant of $B_i$.
  To prove (1), suppose for contradiction that there is a path
  from $B_i$ to $B^\ast$ via $B_h$ on ${\mathcal T}(G)$.
  This is a contradiction since 
  $v\in B^\ast\cap B_j\subseteq B_i$ due to TD3
  but we assume $v\in B_j\setminus B_i$.
  Similarly, we prove (2). Suppose for contradiction that
  $B^\ast$ is a descendant of $B_i$.
  Then, $B^\ast\cap B_h\subseteq B_i$ by TD3. 
  Since we have $u\in B^\ast\cap B_h$, this
  is a contradiction to the assumption $u\in B_h\setminus B_i$.
  Thus, we have the claim.
\end{proof}

We consider the union and the intersection of subgraphs.
By the union graph, we merge the partial results for
the bag-subgraphs into $\Pr[X_{\rm MAX}\le x]$. 
By the intersection graph, we define the source and the terminal
of subgraphs.
\begin{definition}
  Let $G'$ and $G''$ be two subgraphs of $G$.
  The {\em union graph} of $G'$ and $G''$ is
  $G'\cup G''=(\vertex{G'}\cup \vertex{G''}, \edge{G'}\cup \edge{G''})$.
  The {\em intersection graph} of $G'$ and $G''$ is 
  $G'\cap G''=(\vertex{G'}\cap \vertex{G''}, \edge{G'}\cap \edge{G''})$.
\end{definition}

We define the sources and the terminals of subgraph $G'$ of $G$ as 
follows;
they are not determined just by in/out-degrees but also
utilizing the incoming/outgoing paths from/to the outside of $G'$.
The sources and the terminals are defined
not just for bag-subgraphs but for subgraphs of $G$.
This is because we consider the union of multiple bag-subgraphs.
\begin{definition}
\label{def:sourceterminal_subgraph}
Let $G'$ be a subgraph of $G$.
A vertex $v\in \vertex{G'}$ is a {\em source of subgraph} $G'$
if there exists an $s-t$ path $\pi$ of $G$ such that
$v\in \vertex{G'}\cap \vertex{\pi}$ has no incoming edge
in graph $G'\cap \pi$.
Similarly, $v$ is a {\em terminal of subgraph} $G'$
if there exists an $s-t$ path $\pi$ of $G$ such that
$v\in\vertex{G'}\cap \vertex{\pi}$ has no outgoing edge 
in $G'\cap \pi$.
If $v\in \vertex{G'}$ is neither a source nor a terminal,
$v$ is an {\em internal vertex of subgraph} $G'$.
\end{definition}
We have this definition
for managing
the order of executing the integrals.
Consider applying Theorem~\ref{th:multiplesourceterminal}
to subgraph $G'$ by simply replacing $G$ in the formula by $G'$.
If $v$ is a source/terminal, $v$ of a subgraph $G'$,
there two possibilities: (1) $v$ is a source/terminal of $G$,
meaning that we do not have to execute the integral with respect to $z_v$;
or (2) $v$ has an incoming/outgoing edge from/to the outside of $G'$,
meaning that there is at least one missing edge. 
Conversely, since $z_v$ does not appear in $F_{uw}(z_u-z_w)$ for
$v\neq u$ and $v\neq w$, we can correctly execute the integral
with respect to $z_v$ if $v$ is not
a source/terminal in the above sense.
Thus, under the definition of the source/terminal of the subgraph,
we set
$\LPpdf{\phi}{G'}{\Vec{z}}{S}{T}\defeq\int_{\mathbb{R}^{\internal{G'}}}\prod_{v\in\internal{G'}}\LPpdf{f}{v}{\Vec{z}}{v}{\Suc(v)}{\rm d}\Vec{z}[\internal{G'}]$.
For bag-subgraphs, we write $S_i\defeq\Src{G_i}$ and  $T_i\defeq\Term{G_i}$
in the following.

There may exist a vertex
$v\in S_i\cap T_i$. 
Here, $v$ may be connected to some paths from/to other bag-subgraphs.
Since such $v$ causes some technical issues,
we will explain how to avoid this case later.
Furthermore, we define the following.
\begin{definition}
  \label{definition:subtreesubgraph}
  The {\em subtree-subgraph} $D_i$ of $B_i$
  is the union of all bag-subgraphs for $j\in \Sub(i)$.
  That is, $D_i=\bigcup_{j\in\Sub(i)}G_j$.
  The {\em uncapped subtree-subgraph} $U_i$ of $B_i$
  is the union of all bag-subgraphs for $j\in \Sub(i)\setminus\{i\}$.
  That is, $U_i=\bigcup_{j\in\Sub(i)\setminus\{i\}}G_j$.
\end{definition}

\subsection{Definite Integrals in Tree Decomposition}
We here describe the longest path length distribution function
in $G_i$ and then
how to
merge them.
The idea shares some commonalities
with the proof of
Theorem \ref{th:multiplesourceterminal}, where
we considered a convolution of $\LPpdf{\phi}{\removeOE{G}{v}}{\Vec{z}}{S}{T}$
and $\LPpdf{f}{v}{\Vec{z}}{v}{\Suc(v)}$
for the topologically last internal vertex $v$.
This corresponds to the union of $\removeOE{G}{v}$ and a graph
given by $v$ and its outgoing edges.
To generalize this argument, 
we consider subsets of the sources and the terminals in $G_i$ and $U_i$
that turn into the internal vertices of $G_i\cup U_i$.
\begin{definition}
Let $B_h$ be the parent of $B_i$ and $j\in \Ch(i)$.
We define 
$\jSrc{G_i}=S_i\cap \Term{U_i}\setminus B_h$ and
$\jTerm{G_i}=T_i\cap \Src{U_i}\setminus B_h$.
For short, we set $\JS_i=\jSrc{G_i}$,
$\JT_i=\jTerm{G_i}$, and $J_i=\JS_i\cup \JT_i$.
\end{definition}
Later, 
we will prove that the vertices in $J_i$ are internal vertices
in $G_i\cup U_i$.
Here $J_i$ does not include the vertices
in $B_h$ since there may be
more incoming/outgoing edges 
from/to the vertices in $B_h$.

We assume one important property of the bag-subgraphs.
\begin{definition}
  \label{definition:separated_tree_decomposition}
  A {\em separated tree decomposition} is a tree decomposition
  ${\cal T}(G)=({\cal B},{\cal A})$
  so that 
  we have $\Src{G_i}\cap \Term{G_i}=\emptyset$
  for all $B_i\in {\cal B}$.
\end{definition}
For general $G$,
we construct a separated tree decomposition as follows.
We define $G^\ast$ by 
$\vertex{G^\ast}=\{v^+,v^-|v\in \vertex{G}\}$
and
$\edge{G^\ast}=E_E\cup E_V$,
where $E_E=\{u^-v^+|uv\in \edge{G_i}\}$ and
$E_V=\{v^+v^-|v\in \vertex{G_i}\}$.
The length of edge $u^-v^+\in E_E$ is equal to
the length $X_{uv}$ of $uv\in \edge{G_i}$.
All the edges in $E_V$ have static length $0$.
The longest path length in $G^\ast$ is equal to that of $G$.

We define the bag-subgraph $G^\ast_i$ of $G^\ast$
using the bag-subgraphs $G_i$'s for the original tree decomposition.
Here, $G^\ast_i$ has vertex set
$\vertex{G^\ast_i}=\{v^+,v^-|v\in \vertex{G_i}\}$ and
edge set $\edge{G^\ast_i}=\{u^-v^+|uv\in \edge{G_i}\}\cup \{v^+v^-|v\in \vertex{G_i}\}$.
Then, each $G^\ast_i$ for $B_i\in {\cal B}$ satisfies
$\Src{G^\ast_i}\cap \Term{G^\ast_i}=\emptyset$
since any plus vertex $v^+\in \vertex{G_i^\ast}$ is not connected
to any outgoing edge to the outside of $G^\ast_i$;
any minus vertex $v^-\in \vertex{G_i^\ast}$ is not connected to
any incoming edge from the outside of $G^\ast_i$.
Though edges in $E_V$ work as multi-edges 
in the union of bag-subgraphs,
this does not change the longest path length since
these edges have static length $0$.
The following is clear by the construction.
\begin{proposition}
  \label{proposition:EE}
  {\rm (a)} For any $uv,vw\in \edge{G^\ast_i}$, we have either (1) $uv\in E_E$ and $vw\not\in E_E$, or (2) $uv\not\in E_E$ and $vw\in E_E$.
  {\rm (b)} If $v\in \Src{G_i^\ast}$, we have $uv \in E_E$ for any $u\in \vertex{G}$.
  {\rm (c)} If $v\in \Term{G_u^\ast}$, we have $vw\in E_E$ for any $w\in \vertex{G}$. 
  {\rm (d)} If $e\in E_E$, there exists exactly one bag-subgraph $G^\ast_i$ such that $e\in \edge{G_i^\ast}$.
\end{proposition}

\begin{figure}[ht]
  \begin{center}
    \includegraphics[clip,height=2.5cm]{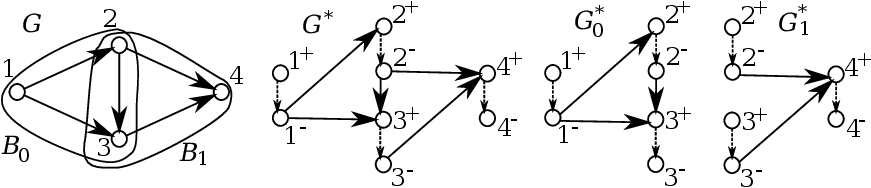}
  \end{center}
  \vspace*{-5mm}
  \caption{Construction of a separated tree decomposition.}
  \label{figure:separated_tree_decomposition}
\end{figure}

As an example, see Fig.~\ref{figure:separated_tree_decomposition}.
The tree decomposition has two bags, $B_0=\{1,2,3\}$ and $B_1=\{2,3,4\}$.
Notice that $2\in S_1\cap T_1$ since,
for an $s-t$ path $\pi$ such that $\edge{\pi}=\{12,23,34\}$,
vertex $2$ has no edge in $\edge{G_1}\cap\edge{\pi}$
by the ancestor-first rule.
In the bag-subgraphs,
we set $23\in \edge{G_0}$ but $23\not\in \edge{G_1}$
by the ancestor-first rule.
Therefore, $2^-3^+\in\edge{G^\ast_0}$ and $2^-3^+\not\in\edge{G^\ast_1}$.
However, we put edges
$2^+2^-$ and $3^+3^-$ in both $\edge{G^\ast_0}$ and $\edge{G^\ast_1}$.
The edges in $E_V$ (in this case, $2^+2^-$ and $3^+3^-$)
work as multi-edges when we merge $G^\ast_0$ and $G^\ast_1$
without changing the longest path length.

Here, $G^\ast_i$ has at most $2k+2$ vertices 
(remember that $|B_i|\le k+1$),
and the width of the tree decomposition is $2k+1$.
Therefore, we deal with the general treewidth $k$ graph
by increasing the
treewidth to at most $2k+1$.

We prove the important properties by assuming that $G$
has a separated tree decomposition constructed by the
above.
\begin{proposition}
  \label{proposition:nointernalintersection}
  For any $i\ne j$, we have $\internal{G_i}\cap \internal{G_j}=\emptyset$.
\end{proposition}
\begin{proof}
  Suppose for contradiction that $v\in\internal{G_i}\cap\internal{G_j}$.
  By definition, $v$ has incoming/outgoing edges
  $e_{1}(\pi),e_{2}(\pi) \in \edge{G_i}\cap \edge{G_j}\cap \edge{\pi}$
  for any $s-t$ path $\pi$.
  Since either one of $e_1(\pi)$ or $e_2(\pi)$ must be in $E_E$
  by (a) of Proposition~\ref{proposition:EE},
  let $e\in \{e_1(\pi),e_2(\pi)\}$ satisfy $e\in E_E$.

  However, we prove that such $e$ does not exist.
  If $e\in \edge{G_i}\cap \edge{G_j}~(e=uv,~\text{or}~e=vu)$,
  $B_i$ and $B_j$ have a common ancestor $B_h$ such that 
  $\{u,v\}\subseteq B_h$ by TD2 and TD3.
  This contradicts (b) and (d) of Proposition~\ref{proposition:EE}.
\end{proof}
\begin{proposition}
  \label{proposition:lrSSTT}
  Let $\Ch(i)=\{\ell,r\}$.
  Given a separated tree decomposition,
  $\Src{D_\ell}\cap\Term{D_r}=\Term{D_\ell}\cap\Src{D_r}=\emptyset$.
\end{proposition}
\begin{proof}
By symmetry, it suffices to prove $\Src{D_\ell}\cap\Term{D_r}=\emptyset$.
Suppose for contradiction that $v\in \Src{D_\ell}\cap\Term{D_r}$.
Since there is no isolated vertex, 
we assume that $v$ has at least one edge $e=uv$ or $e'=vw$.
In the following, however, we prove $e$ does not exist.
The proof of the non-existence of $e'$ is symmetry,
which proves the claim.

We prove $e\not\in\edge{G_i}$.
If $e=uv\in\edge{G_i}$,
we have $e\not\in\edge{D_r}$ by (b) and (d) of Proposition~\ref{proposition:EE}.
Then, we have $v\in\Src{D_r}$
by Definition~{\ref{def:sourceterminal_subgraph}},
meaning that $v\in \Src{D_r}\cap\Term{D_r}$.
Notice that $v\in B_r$ since, by TD3,
$v\in B_j\cap B_{j'}\subseteq B_r$
for any $j\in\Sub(\ell)$ and $j'\in \Sub(r)$
such that $v\in B_j\cap B_{j'}$.
Thus, $v\in S_r\cap T_r$,
contradicting Definition~\ref{definition:separated_tree_decomposition}.

We next prove $e\not\in\edge{D_\ell}\cup\edge{D_r}$.
If $e=uv\in\edge{D_\ell}$ and $v$ has
no outgoing edge in $D_r$, we have that $v$ is an isolated vertex
in $D_r$. This is not possible by the construction.
Note that a contradiction occurs
if $v$ has $e=uv\in\edge{D_\ell}$
and $e'=vw\in\edge{D_r}$ at a time;
we have $e\in E_E$ and $e'\in E_E$
by (b) and (c) of Proposition~\ref{proposition:EE},
but then, either $e=uv$ or $e'=vw$ is not in $E_E$
by (a) of Proposition~\ref{proposition:EE}.
Therefore, $e\not\in\edge{D_\ell}$.
We can prove $e\not\in\edge{D_r}$ by symmetry.

Now, we have $e\not\in\edge{G_i}\cup\edge{D_\ell}\cup\edge{D_r}=\edge{D_i}$,
implying that $v$ does not have any incoming edge $e$.
The rest of the proof is symmetry.
\end{proof}
\begin{proposition}
  \label{proposition:S_i}
  Given a separated tree decomposition,
  we have
  $S_i\cap \Src{U_i}=T_i\cap \Term{U_i}=\emptyset$.
\end{proposition}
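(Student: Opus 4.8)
The plan is to prove both equalities by contradiction. I would treat $S_i\cap S(U_i)=\emptyset$ in detail; the statement $T_i\cap T(U_i)=\emptyset$ is entirely analogous, obtained by interchanging in- and out-edges (equivalently, by reversing all edge orientations, which leaves the underlying undirected graph, the bags, the ancestor-first assignment of edges, and the symmetric condition $S_i\cap T_i=\emptyset$ unchanged, and converts the successor condition of a separated tree decomposition into the corresponding predecessor statement).

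So suppose $v\in S_i\cap S(U_i)$. The single structural fact driving the argument is that the ancestor-first rule puts each edge of $G$ into exactly one bag-subgraph, so $E(G_i)\cap E(G_j)=\emptyset$ for every $j\in\Sub(i)\setminus\{i\}$ and hence $E(G_i)\cap E(U_i)=\emptyset$. Since $v\in S(G_i)$, fix an $s-t$ path $\pi_1$ of $G$ in which, inside $G_i\cap\pi_1$, vertex $v$ has in-degree $0$ and out-degree $1$; write $vu\in E(G_i)\cap E(\pi_1)$ for that outgoing edge. Since $v\in S(U_i)$, fix an $s-t$ path $\pi_2$ of $G$ with an outgoing edge $vw\in E(U_i)\cap E(\pi_2)$ such that $v$ has in-degree $0$ inside $U_i\cap\pi_2$. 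Let $u'v$ be the edge of $\pi_2$ entering $v$ (when $v$ is not the first vertex of $\pi_2$); by the choice of $\pi_2$ we have $u'v\notin E(U_i)$. The main case is $u'v\in E(G_i)$: then inside $G_i\cap\pi_2$ the vertex $v$ has the one incoming edge $u'v$ and \emph{no} outgoing edge, because its only $\pi_2$-outgoing edge $vw$ lies in $E(U_i)$, which is disjoint from $E(G_i)$. Hence $v\in T(G_i)=T_i$, so $v\in S_i\cap T_i$, contradicting the defining property of a separated tree decomposition. This is the short, clean core of the proof.

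What remains are two boundary configurations: (i) $u'v$ lies neither in $E(G_i)$ nor in $E(U_i)$, hence in a bag-subgraph $G_{i''}$ with $B_{i''}$ neither $B_i$ nor a descendant of $B_i$; and (ii) $v$ is the first vertex of $\pi_2$, i.e. $v\in S(G)$. In case (i), condition (3) of the tree decomposition forces $v$ into the parent bag $B_h$ of $B_i$ and $u'$ onto the ``parent side'' of $B_i$; Proposition~\ref{proposition:separation} then separates that parent side from the subtree below the child of $B_i$ through which $vw$ descends, and combined with the condition $\Suc(u)\cap B_i\setminus V(U_i)=\emptyset$ for $u\in\JT_i$ — which, after the surgery $G\mapsto G^\ast$, governs exactly such vertices $v$ — rules out that $v$ can simultaneously emit the $E(U_i)$-edge $vw$ (head strictly below $B_i$) and an $E(G_i)$-edge $vu$ with $u\in B_i\setminus V(U_i)$, contradicting $v\in S_i$. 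Case (ii) is handled the same way: a global source of $G$ sitting in $B_i$ that emits an edge into $U_i$ has, by the separated structure, all of its $B_i$-successors inside $V(U_i)$, so it cannot also emit an $E(G_i)$-edge to a vertex of $B_i\setminus V(U_i)$. The main obstacle is precisely this bookkeeping in (i)–(ii): locating $v$ in $B_h$, deploying Proposition~\ref{proposition:separation}, and verifying that the surgery collapses these degenerate cases — everything else reduces to the two-line edge-disjointness argument above.
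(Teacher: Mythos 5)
Your core argument for the main case is sound and is essentially the paper's: from $vw\in E(U_i)$ and the ancestor-first disjointness $E(G_i)\cap E(U_i)=\emptyset$ you exhibit a path along which $v$ has one incoming $E(G_i)$-edge and no outgoing $E(G_i)$-edge, so $v\in T_i$, contradicting $S_i\cap T_i=\emptyset$; and $T_i\cap T(U_i)=\emptyset$ follows by reversing orientations. The problem is in your cases (i) and (ii), i.e.\ precisely when the $\pi_2$-edge entering $v$ is absent or lies in neither $E(G_i)$ nor $E(U_i)$. There you appeal to the condition $\Suc(u)\cap B_i\setminus V(U_i)=\emptyset$ for $u\in\JT_i$. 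But $\JT_i=T_i\cap S(U_i)\setminus B_h\subseteq T_i$, and your $v$ lies in $S_i$; since $S_i\cap T_i=\emptyset$ is the other defining condition of a separated tree decomposition, $v\notin T_i$ and hence $v\notin\JT_i$ (in case (i) you additionally place $v\in B_h$, which excludes it from $\JT_i$ a second time). So the condition you cite says nothing about $v$, and the contradiction you announce for (i)--(ii) is asserted rather than derived. Concretely: a global source $v\in B_i$ of $G$ with one out-edge $vu\in E(G_i)$ ($u\in B_i$) and one out-edge $vw\in E(G_j)$ for a descendant bag $B_j$ ($w\notin B_i$) satisfies both defining conditions of a separated tree decomposition, yet has $v\in S_i\cap S(U_i)$ by Definition~\ref{def:sourceterminal_subgraph}; such a $v$ has no incoming edge at all, so it is never a terminal of any subgraph and no $S\cap T$ contradiction is available.

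To close (i)--(ii) one must invoke the explicit $v^-,v^\ast,v^+$ splitting itself --- edges to $U_i\setminus B_i$ leave from $v^+$ while edges inside $B_i$ leave from $v^\ast$, so in $G^\ast$ no single vertex emits both kinds of out-edges --- that is, a structural property of the construction which is not among the two axioms you (and the paper) take as the definition of a separated tree decomposition. To be fair, the paper's own proof is terse at exactly the same spot (in its second case it asserts ``$v\in T_i$'' without exhibiting the incoming $E(G_i)$-edge that the definition of a terminal requires), but as written your cases (i) and (ii) are a genuine gap, not mere bookkeeping.
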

\begin{proof}
  We first claim $\Src{U_i}\cap \Term{U_i}=\emptyset$
  for any bag $B_i\in{\cal B}$.
  The proof is the induction on $i$ from the leaves to the root.
  As for the base, let $B_i$ be a leaf bag in ${\cal T}(G)$.
  Since $U_i=G_i$, we have $\Src{U_i}\cap \Term{U_i}=\emptyset$
  by Definition~\ref{definition:separated_tree_decomposition}.
  We proceed to the induction step.
  Consider the case $\Ch(i)=\{j\}$,
  implying $U_i=D_j=G_j\cup U_j$.
  By the induction hypothesis and
  Definition~\ref{definition:separated_tree_decomposition},
  we have $\Src{U_i}\cap\Term{U_i}=\emptyset$
  if $v\not\in\Src{D_j}\cap\Term{D_j}$
  for $v\in(\Src{U_j}\cap \Term{G_j})\cup(\Term{U_j}\cap\Src{G_j})$.
  In case $v\in \Src{U_j}\cap\Term{G_j}$,
  there exists an $s-t$ path $\pi$ of $G$ so that
  $v$ has in-degree (resp. out-degree) $0$
  in $U_j\cap\pi$ (resp. $G_j\cap\pi$)
  by Definition~\ref{def:sourceterminal_subgraph}.
  Since $v$ has in-degree and out-degree one in $D_j\cap\pi$, 
  we have $v\not\in\Src{D_j}\cap\Term{D_j}$.
  The case $v\in\Term{U_j}\cap\Src{G_j}$ is similar.
  Next, consider the case $\Ch(i)=\{\ell,r\}$.
  Since $\Src{D_\ell}\cap \Term{D_r}=\Term{D_\ell}\cap\Src{D_r}=\emptyset$
  by Proposition~\ref{proposition:lrSSTT},
  we have $\Src{U_i}\cap\Term{U_i}=\emptyset$.
  Thus, $\Src{U_i}\cap\Term{U_i}=\emptyset$ for any bag $B_i\in{\cal B}$.

  To prove $S_i\cap\Src{U_i}=\emptyset$, assume for contradiction 
  $v\in S_i\cap \Src{U_i}$. Then, $v$ has an edge $vw\in\edge{D_i}$
  since otherwise $v\in T_i\cup \Term{U_i}$, a contradiction.
  Suppose that $w\in B_i$.
  Notice that $vw\not\in \edge{U_i}$ by (c) and (d) of
  Proposition~\ref{proposition:EE},
  implying $v\in \Term{U_i}$ by Definition~\ref{def:sourceterminal_subgraph},
  a contradiction.
  Suppose that $w\in \vertex{U_i}\setminus B_i$.
  Then, we have $v\in T_i$, a contradiction.
  The proof $T_i\cap \Term{U_i}=\emptyset$ is symmetry.
\end{proof}


\begin{proposition}
  \label{proposition:newinternalvertex}
  Given a separated tree decomposition,
  %
  we have $J_i=\internal{D_i}\setminus (\internal{G_i}\cup \internal{U_i})$.
  That is, any $v\in J_i$ is not an internal vertex of $G_i$ nor $U_i$,
  and $v$ is an internal vertex of $D_i=G_i\cup U_i$. 
\end{proposition}
\begin{proof}
  Let $v\in J_i=\JS_i\cup \JT_i$.
  Since $v\not\in \internal{G_i}\cup\internal{U_i}$ by definition,
  we prove that $v\in\internal{D_i}$.
  Let $B_h$ be the parent of $B_i$.
  By Proposition~\ref{proposition:separation} and
  that $v\in B_i\setminus B_h$,
  there is no edge between $v$ and
  vertex $u\in B_g$ for any $g\in \Anc(i)$.
  Since all the neighbor of $v$ is in $D_i$,
  we have $v\in \internal{D_i}$.
\end{proof}

We are ready to explain how we compute
$\LPpdf{\phi}{D_i}{\Vec{z}}{S}{T}$
assuming $\LPpdf{\phi}{G_i}{\Vec{z}}{S}{T}$ and
$\LPpdf{\phi}{U_i}{\Vec{z}}{S}{T}$.
Consider an example where
any path that goes from $G_i$ to $U_i$ includes $v$.
Since we subtract $z_v$
from the length of the longest path ending at $v$ in $G_i$, 
the product $\LPpdf{\phi}{G_i}{\Vec{z}}{S}{T}\LPpdf{\phi}{U_i}{\Vec{z}}{S}{T}$
is the density of the longest path length in $G_i\cup U_i$
in case the longest path length from the source of $G_i$ to
$v$ is equal to 
$z_v$.
Then, we obtain $\LPpdf{\phi}{D_i}{\Vec{z}}{S}{T}$
by introducing an integral with respect to $z_v$.
In general case, we have one integral for each $v\in J_i$.
\begin{lemma}
  \label{lemma:phi_increment}
  Given a separated tree decomposition,
  we have
\begin{align}
  \LPpdf{\phi}{D_i}{\Vec{z}}{S}{T}&=\int_{\mathbb{R}^{J_i}}\hspace*{-2mm} \LPpdf{\phi}{G_i}{\Vec{z}}{S}{T}\LPpdf{\phi}{U_i}{\Vec{z}}{S}{T}{\rm d}\Vec{z}[J_i], ~~\text{and}\label{form:phiDi}\\
  \LPpdf{\Phi}{D_i}{\Vec{z}}{S}{T}&=\int_{\mathbb{R}^{J_i}}\left(\vecdiff{\JS_i}\LPpdf{\Phi}{G_i}{\Vec{z}}{S}{T}\right)\left(\vecdiff{\JT_i}\LPpdf{\Phi}{U_i}{\Vec{z}}{S}{T}\right){\rm d}\Vec{z}[J_i], \label{form:PhiDi}
\end{align}
where $\LPpdf{\Phi}{U_i}{\Vec{z}}{S}{T}=\prod_{j\in \Ch(i)} \LPpdf{\Phi}{D_j}{\Vec{z}}{S}{T}$.
\end{lemma}
We have $\Pr[X_{\rm MAX}\le x]=\LPpdf{\Phi}{D_0}{x\Vec{\sigma}}{S}{T}$, where $\Vec{\sigma}[s]=1$ for $s\in \Src{G}$ and $\Vec{\sigma}[v]=0$ for $v\not\in\Src{G}$.
\begin{proof}
Notice that $\LPpdf{\Phi}{U_i}{\Vec{z}}{S}{T}=\prod_{j\in \Ch(i)} \LPpdf{\Phi}{D_j}{\Vec{z}}{S}{T}$ by Proposition \ref{proposition:lrSSTT}.
We first prove (\ref{form:phiDi}).
To be careful about the difference of
the definitions of sources/terminals for the entire $G$
and its subgraphs,
we apply Theorem~\ref{th:multiplesourceterminal}
to $D_i$ by a minor modification.
For each source $s$ (resp. terminal $t$), consider a new source vertex $v_s$ (resp. a new terminal vertex $v_t$) with edge $v_ss$ (resp. $tv_t$) with static length $0$.
Then, 
$\LPpdf{\phi}{D_i}{\Vec{z}}{S}{T}=\int_{\mathbb{R}^{\internal{D_i}}}\prod_{u\in \Src{D_i}\cup\internal{D_i}}\LPpdf{f}{u}{\Vec{z}}{u}{\Suc(u)} {\rm d}\Vec{z}[\internal{D_i}]$.

We group the factors of the integrand into two.
Let $W_1 = S_i\cup\internal{G_i}$ and
$W_2=\Src{U_i}\cup \internal{U_i}$.
Since
$\internal{G_i}\cap \internal{U_i}=\emptyset$ and
$S_i\cap \Src{U_i}=\emptyset$ by Propositions~\ref{proposition:nointernalintersection} and \ref{proposition:S_i},
we have $W_1\cap W_2=\emptyset$.
Let $Q(\Vec{z},W)=\prod_{u\in W}\LPpdf{f}{u}{\Vec{z}}{u}{\Suc(u)}$.
We have
\begin{align*}
  \LPpdf{\phi}{D_i}{\Vec{z}}{S}{T}=\int_{\mathbb{R}^{\internal{D_i}}}Q(\Vec{z},W_1)Q(z,W_2){\rm d}\Vec{z}[\internal{D_i}].
\end{align*}

Then, by Theorem~\ref{th:multiplesourceterminal},
we have
$\LPpdf{\phi}{G_i}{\Vec{z}}{S}{T}=\int_{\mathbb{R}^{\internal{G_i}}} Q(\Vec{z},W_1){\rm d}\Vec{z}[\internal{G_i}]$
and $\LPpdf{\phi}{U_i}{\Vec{z}}{S}{T}=\int_{\mathbb{R}^{\internal{U_i}}} Q(\Vec{z},W_2){\rm d}\Vec{z}[\internal{U_i}]$. Now, we have that
\begin{align*}
  \LPpdf{\phi}{D_i}{\Vec{z}}{S}{T}=\int_{\mathbb{R}^{\internal{D_i}\setminus (\internal{G_i}\cup \internal{U_i})}} \LPpdf{\phi}{G_i}{\Vec{z}}{S}{T}\LPpdf{\phi}{U_i}{\Vec{z}}{S}{T}{\rm d}\Vec{z}[\internal{D_i}].
\end{align*}
Since $\internal{D_i}\setminus(\internal{G_i} \cup \internal{U_i})=J_i$ by Proposition~\ref{proposition:newinternalvertex}, we have (\ref{form:phiDi}).

We have (\ref{form:PhiDi}) by integrating (\ref{form:phiDi}) with respect to $\Vec{z}[\Src{D_i}]$.
\end{proof}

Let $\Phi(x)$ and $\Psi(x)$ be two probability distribution functions whose
derivatives are $\phi(x)$ and $\psi(x)$ respectively.
Since integration by parts leads to that
$\int_{\mathbb{R}}\Phi(s)\psi(x-s){\rm d}s=\int_{\mathbb{R}}\phi(s)\Psi(x-s){\rm d}x$,
we extend this idea as follows.
\begin{proposition}
  \label{proposition:switch}
  Given a separated tree decomposition, we have
  \begin{align}
    \LPpdf{\Phi}{D_i}{\Vec{z}}{S}{T}&=\int_{\mathbb{R}^{J_i}}(-1)^{|\JT_i|}\left(\vecdiff{J_i}\LPpdf{\Phi}{G_i}{\Vec{z}}{S}{T}\right)~\LPpdf{\Phi}{U_i}{\Vec{z}}{S}{T}{\rm d}\Vec{z}[J_i] \label{form:switch_earlier}\\
    &=\int_{\mathbb{R}^{J_i}}\LPpdf{\Phi}{G_i}{\Vec{z}}{S}{T}~(-1)^{|\JS_i|}\left(\vecdiff{J_i}\LPpdf{\Phi}{U_i}{\Vec{z}}{S}{T}\right){\rm d}\Vec{z}[J_i]. \label{form:switch_latter}
  \end{align}
\end{proposition}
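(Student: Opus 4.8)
The plan is to start from Lemma~\ref{lemma:phi_increment}, which already expresses $\LPpdf{\phi}{D_i}{\Vec{z}}{S|T}$ as the integral over $\mathbb{R}^{J_i}$ of the product $\LPpdf{\phi}{G_i}{\Vec{z}}{S|T}\LPpdf{\phi}{U_i}{\Vec{z}}{S|T}$, and then integrate back up over the source coordinates to recover $\LPpdf{\Phi}{D_i}{\Vec{z}}{S|T}$. The key observation is a bookkeeping one about which variables in $J_i$ are ``source-side'' and which are ``terminal-side.'' By Proposition~\ref{proposition:newinternalvertex} we have $J_i = \JS_i \cup \JT_i$ with $\JS_i = S_i \cap T(U_i)\setminus B_h$ and $\JT_i = T_i \cap S(U_i)\setminus B_h$; so each $v\in\JS_i$ is a source of $G_i$ (hence $\Vec{z}[v]$ is differentiated inside $\LPpdf{\phi}{G_i}{\Vec{z}}{S|T}=\vecdiff{S_i\cap(\dots)}\LPpdf{\Phi}{G_i}{\Vec{z}}{S|T}$ but not inside $\LPpdf{\Phi}{U_i}{\Vec{z}}{S|T}$), while each $v\in\JT_i$ is a source of $U_i$ (differentiated inside $\LPpdf{\phi}{U_i}{\Vec{z}}{S|T}$ but not inside $\LPpdf{\Phi}{G_i}{\Vec{z}}{S|T}$). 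Concretely, writing $\cJS_i$ and $\cJT_i$ for the remaining source coordinates of $G_i$ and $U_i$ that are \emph{not} in $J_i$, we have $\LPpdf{\phi}{G_i}{\Vec{z}}{S|T}=\vecdiff{\JS_i}\vecdiff{\cJS_i}\LPpdf{\Phi}{G_i}{\Vec{z}}{S|T}$ and $\LPpdf{\phi}{U_i}{\Vec{z}}{S|T}=\vecdiff{\JT_i}\vecdiff{\cJT_i}\LPpdf{\Phi}{U_i}{\Vec{z}}{S|T}$, with $S(D_i)$ partitioned as $\cJS_i\sqcup\cJT_i$ (this uses Proposition~\ref{proposition:S_i} to see the sets are disjoint, and the separated-tree-decomposition hypothesis to rule out stray adjacencies).

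Next I would take the iterated antiderivative $\vecdiff{S(D_i)}^{-1}$ of both sides of (\ref{form:phiDi}), i.e.\ integrate $\Vec{z}[s]$ from $-\infty$ up for each $s\in S(D_i)=\cJS_i\sqcup\cJT_i$. On the left this returns $\LPpdf{\Phi}{D_i}{\Vec{z}}{S|T}$ by the definition of $\phi$ as $\vecdiff{S(D_i)}\Phi$. On the right, these outer integrations only touch the $\cJS_i$ and $\cJT_i$ coordinates, which do not appear in the inner $\mathbb{R}^{J_i}$ integral, so they pass through the $\dd\Vec{z}[J_i]$ integral and act directly on $\LPpdf{\phi}{G_i}{\Vec{z}}{S|T}\,\LPpdf{\phi}{U_i}{\Vec{z}}{S|T}$: the $\cJS_i$-antiderivatives hit only $\LPpdf{\phi}{G_i}{}{}$, turning it into $\vecdiff{\JS_i}\LPpdf{\Phi}{G_i}{\Vec{z}}{S|T}$, and the $\cJT_i$-antiderivatives hit only $\LPpdf{\phi}{U_i}{}{}$, turning it into $\vecdiff{\JT_i}\LPpdf{\Phi}{U_i}{\Vec{z}}{S|T}$ (here one needs that the relevant densities vanish as any source coordinate $\to -\infty$, which holds since these are honest probability density/distribution functions). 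So far this gives the ``symmetric'' identity
\begin{align*}
  \LPpdf{\Phi}{D_i}{\Vec{z}}{S|T}=\int_{\mathbb{R}^{J_i}}\left(\vecdiff{\JS_i}\LPpdf{\Phi}{G_i}{\Vec{z}}{S|T}\right)\left(\vecdiff{\JT_i}\LPpdf{\Phi}{U_i}{\Vec{z}}{S|T}\right)\dd\Vec{z}[J_i].
\end{align*}

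Finally I would move the derivatives around inside the $\mathbb{R}^{J_i}$-integral using integration by parts, one coordinate at a time — exactly the one-dimensional identity $\int_{\mathbb{R}}\phi(s)\Psi(x-s)\,\dd s=\int_{\mathbb{R}}\Phi(s)\psi(x-s)\,\dd s$ quoted just before the Proposition, applied to each $v\in\JT_i$ in turn to pull $\partial/\partial z_v$ off $\LPpdf{\Phi}{U_i}{}{}$ and onto $\LPpdf{\Phi}{G_i}{}{}$. Each such transfer introduces a factor $(-1)$ and the boundary terms vanish because distribution functions are bounded and their complementary tails go to $0$ at $\pm\infty$ (and, for a $v\in\JT_i$, $z_v$ appears in $\LPpdf{\Phi}{G_i}{}{}$ only through differences $z_u-z_v$ with $u$ a successor, so the integrand decays). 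Transferring all $|\JT_i|$ of them yields $(-1)^{|\JT_i|}\bigl(\vecdiff{J_i}\LPpdf{\Phi}{G_i}{\Vec{z}}{S|T}\bigr)\LPpdf{\Phi}{U_i}{\Vec{z}}{S|T}$, which is (\ref{form:switch_earlier}); transferring instead the $|\JS_i|$ derivatives the other way gives $\LPpdf{\Phi}{G_i}{\Vec{z}}{S|T}(-1)^{|\JS_i|}\bigl(\vecdiff{J_i}\LPpdf{\Phi}{U_i}{\Vec{z}}{S|T}\bigr)$, which is (\ref{form:switch_latter}). The main obstacle I anticipate is not the integration by parts itself but the careful verification that $S(D_i)$ really does split cleanly as $\cJS_i\sqcup\cJT_i$ with the claimed containments $\cJS_i$-derivatives acting only on the $G_i$-factor and $\cJT_i$-derivatives only on the $U_i$-factor — i.e.\ that no source coordinate of $D_i$ is simultaneously a source coordinate of both $G_i$ and $U_i$ — which is where Propositions~\ref{proposition:lrSSTT}, \ref{proposition:separation}, \ref{proposition:S_i} and the separated-tree-decomposition assumption all have to be invoked together, together with a sign-tracking check that the $(-1)$ factors are counted with the right exponent.
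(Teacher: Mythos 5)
Your proposal is correct and follows essentially the same route as the paper's proof: first integrate the identity of Lemma~\ref{lemma:phi_increment} over $\Vec{z}[S(D_i)]$ to obtain the symmetric form $\int_{\mathbb{R}^{J_i}}(\vecdiff{\JS_i}\LPpdf{\Phi}{G_i}{\Vec{z}}{S|T})(\vecdiff{\JT_i}\LPpdf{\Phi}{U_i}{\Vec{z}}{S|T}){\rm d}\Vec{z}[J_i]$, then transfer the $\JT_i$ (resp.\ $\JS_i$) derivatives one at a time by integration by parts, with the boundary terms vanishing and each transfer contributing a factor of $-1$. Your added bookkeeping about how $S(D_i)$ splits between the two factors is a more explicit justification of the paper's first step, not a different argument.
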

\begin{proof}
  Applying integration by parts with respect to $\Vec{z}[t]$ for
  $t\in\JT_i$ to (\ref{form:PhiDi}) of Lemma~\ref{lemma:phi_increment},
  we have that 
  \begin{align*}
    \LPpdf{\Phi}{D_i}{\Vec{z}}{S}{T}=-\int_{\mathbb{R}^{J_i}}\left(\vecdiff{\JS_i\cup\{t\}}\LPpdf{\Phi}{G_i}{\Vec{z}}{S}{T}\right)~\left(\vecdiff{\JT_i\setminus\{t\}}\LPpdf{\Phi}{U_i}{\Vec{z}}{S}{T}\right){\rm d}\Vec{z}[J_i],
  \end{align*}
  since  $(\vecdiff{\JS_i}\LPpdf{\Phi}{G_i}{\Vec{z}}{S}{T})(\vecdiff{\JT_i\setminus\{t\}}\LPpdf{\Phi}{U_i}{\Vec{z}}{S}{T})$ converges to $0$ by definition
  when $\Vec{z}[t]\rightarrow \pm\infty$.
  Now, we repeat the integration by parts with respect to
  all $\Vec{z}[\JT_i]$ components, which leads to (\ref{form:switch_earlier}).
  The proof for (\ref{form:switch_latter}) is symmetry.
\end{proof}

\section{Approximation for DAGs with Uniformly Distributed Edge Lengths}
The computation
in the previous section 
may be slow when the edge lengths are uniformly distributed.
The obstacle is that there may be 
exponentially many breakpoints in the derivative of some order of
$\LPpdf{\Phi}{D_0}{x\Vec{\sigma}}{S}{T}$ with respect to $x$,
where $\Vec{\sigma}[s]=1$ for $s\in \Src{G}$ and $\Vec{\sigma}[v]=0$
for $v\not\in\Src{G}$.
Each component of a vector $\Vec{a}\in \mathbb{Z}_{> 0}^{E}$ is a parameter of
a uniform distribution for $e\in E$.
By considering a uniform random vector $\Vec{X} \in [0,1]^{E}$,
the length of edge $e\in E$ is $\Vec{a}[e]\Vec{X}[e]$.
Then, we consider computing the probability
$\Pr[X_{\rm MAX}\le x]=\LPpdf{\Phi}{G}{x\Vec{\sigma}}{S}{T}={\rm Vol}(K_G(\Vec{a},x))$.
Though the problem of computing the exact value
is a $\#$P-hard problem under
this definition of the edge lengths, we show 
a deterministic FPTAS.
We, in a sense, generalize the algorithm in~\cite{AK2016} for 
the volume of multiple constraint knapsack polytope
so that 
we can compute the volume of the hypercube $[0,1]^E$
truncated by exponentially many halfspaces.

\subsection{Idea of Our Algorithm}
The key is the discretization of the dummy variables
to replace the integrals with the discrete summations as in~\cite{AK2016}.
Intuitively,
the hardness of the exact computation is due to
the exponentially many breakpoints of $\Pr[X_{\rm MAX}\le x]$.
We apply a kind of rounding technique. We
pick up a relatively small number of values
as a staircase function
(i.e., a piecewise constant function with
breakpoints lined up at equal intervals)
so that the exact value is a lower bound on our approximation.
The computation of the staircase function
is efficient at the leaves of the tree decomposition.
We then compute the union of the subtrees
at the cost of mildly increasing the error.

The advantage of the staircase approximation is that
we have $\LPpdf{\Phi}{D_i}{x\Vec{\sigma}}{S}{T}$ as the lower bound
and $\LPpdf{\Phi}{D_i}{(x+h)\Vec{\sigma}}{S}{T}$ as the upper bound
for some $h$.
We call $h$ ``horizontal error.''
We prove that the horizontal error increases
by addition as the computation advances from the
leaves toward the root.
In the analysis, 
we introduce the idea of vertex discount, which is
the horizontal error associated with every source of $G_i$.
We bound the total horizontal error for
$\LPpdf{\Phi}{D_0}{\Vec{z}}{S}{T}$ at the root.
Then, it leads to bounding the multiplicative error
by examining $\Pr[X_{\rm MAX}\le x]$
as an $m$-dimensional volume.
Later, we call the multiplicative approximation ratio 
``vertical'' approximation ratio.
The notions of horizontal error and vertical error are analogous to
plotting $\Pr[X_{\rm MAX}\le x]$ and its approximation
in two dimensional space;
the horizontal axis and the vertical axis correspond to
the argument $x\in\mathbb{R}$ and the value of 
$\Pr[X_{\rm MAX}\le x]$, respectively.

\subsection{Detail of Our Algorithm}
We assume that $\Vec{a}[e]\le x$ for any $e\in \edge{G}$.
In case we have $\Vec{a}[e]> x$, we replace $\Vec{a}[e]$ by $x$.
We can obtain ${\rm Vol}(K_G(\Vec{a},x))$ from ${\rm Vol}(K_G(\Vec{a}',x))$.
\begin{proposition}
  \label{proposition:atmostx}
  Let $E(x)$ be the set of edges of $G$ such that $\Vec{a}[e]>x$
  for $e\in E(x)$.
  Let $\Vec{a}'[e]\in (\mathbb{Z}_{> 0}\cup \{x\})^E$.
  We set $\Vec{a}'[e]=\Vec{a}[e]$ if $\Vec{a}[e]\le x$ and
  $\Vec{a}'[e]=x$ if $\Vec{a}[e]>x$.
  Then, we have 
  $K_G(\Vec{a},x)=K_G(\Vec{a}',x)\prod_{e\in E(x)}\frac{x}{\Vec{a}[e]}$.
\end{proposition}
\begin{proof}
  Notice that the claim follows from 
  Theorem~\ref{th:multiplesourceterminal} if $y=z_u-z_v\le x$
  for all $uv\in \edge{G}$.
  in the integrand of Theorem~\ref{th:multiplesourceterminal}.
  In that case, replacing $\Vec{a}[e]$ by $\Vec{a}'[e]$
  means that we multiply the uniform density $f_{uv}(y)$ of each edge
  $uv\in E(x)$ by $\Vec{a}[uv]/x$ for $0\le y \le x$. 

  In the following,
  we prove $z_u-z_v\le x$ for all $uv\in \edge{G}$
  or the integrand is $0$ by induction on $v$.
  As for the base case, consider the case $v\in\Suc(s)$ for any
  source vertex $s$.
  We have $z_s=x$ and $z_v\ge 0$ for any $v\in\vertex{G}$ by definition,
  implying $z_s-z_v\le x$ for all $sv\in\edge{G}$.
  In the induction step, we assume that each topologically earlier
  vertex $u$ of $v$ satisfies that $z_u\le x$.
  Suppose that $z_v-z_w> x$ for $v\not\in\Src{G}$
  and its outgoing edge $vw\in \edge{G}$. Since this
  implies $z_v>x$, there exists $uv\in\edge{G}$ so that $z_u-z_v<0$,
  by the induction hypothesis $z_u\le x$.
  Since the above arguments imply $F_{uv}(z_u-z_v)=0$,
  the entire integrand is $0$
  if edge $vw\in\edge{G}$ exists satisfying $z_v-z_w>x$.
\end{proof}

We approximate $\LPpdf{\Phi}{G_i}{\Vec{z}}{S}{T}$ by
a staircase function $\LPpdfx{\Lambda}{G_i}{\Vec{z}}{S}{T}{M}$
using parameter $M$.
Assuming a width $k$ separated tree decomposition, 
we prove that
$M=\lceil 2(k+1)bmn/\epsilon\rceil$ is
sufficient to bound the approximation ratio at most $1+\epsilon$.
We compute
the discrete version of (\ref{form:PhiDi}) in Lemma~\ref{lemma:phi_increment}
from the leaves to the root so that we have the approximation of 
$\LPpdf{\Phi}{D_0}{\Vec{z}}{S}{T}$.

For any value of $\Vec{z}[S_i\cup T_i]$, we compute
$\LPpdfx{\Lambda}{G_i}{\Vec{z}}{S}{T}{M}$ by counting.
We divide $P_i\defeq [0,1]^{\edge{G_i}}$ into $M^{|\edge{G_i}|}$
small orthogonal hypercubes whose diagonal
extreme points are
$\frac{1}{M}\Vec{g}_E$ and $\frac{1}{M}(\Vec{g}_E+\Vec{1})$
for $\Vec{g}_E\in \{0,1,\dots,M-1\}^{\edge{G_i}}$.
We call the small hypercubes {\em cells}.
Given a value of $\Vec{z}[S_i\cup T_i]$,
we count the number $\LPpdfx{N}{G_i}{\Vec{z}}{S}{T}{M}$ of cells intersecting
\begin{align*}
  \LPpdf{K'}{G_i}{\Vec{z}}{S}{T}\defeq\left\{\Vec{x}\in P_i\left| \bigwedge_{\pi\in \Pi_{G_i}}\sum_{uv\in \edge{\pi}} \Vec{a}[uv] \Vec{x}[uv]\le \Vec{z}[\src{\pi}]-\Vec{z}[\term{\pi}] \right.\right\}.
\end{align*}

We achieve the counting by considering the following static
longest path problem in $G_i$.
We make another DAG $G_i^0$ from $G_i$ where $\vertex{G_i^0}=B_i\cup \{s_0,t_0\}$
and $\edge{G_i^0}=\edge{G_i}\cup \bigcup_{s\in S_i}\{s_0s\}\cup \bigcup_{t\in T_i}\{tt_0\}$.
The length of $uv\in \edge{G_i}$ is $\Vec{a}[uv]\Vec{x}[uv]$.
For $s\in S_i$, the length of $s_0s$ is $-\Vec{z}[s]$.
Similarly, for $t\in T_i$, the length of $tt_0$ is $\Vec{z}[t]$.
Let $\Vec{g}$ be the discretized vector
$\Vec{g}[v]=\left\lceil\frac{M}{x}\Vec{z}[v]\right\rceil$ for $v\in S_i$ and
$\Vec{g}[v]=\left\lfloor \frac{M}{x}\Vec{z}[v]\right\rfloor$ for $v\in T_i$.
Then, $\LPpdfx{N}{G_i}{\frac{x}{M}\Vec{g}}{S}{T}{M}$ is the number of gridpoints
$\Vec{x}=\Vec{g}_E/M$ in $P_i$,
where the longest path length
given by $\Vec{x}$ is at most $0$.
To solve a static longest path problem in a DAG,
see, e.g., \cite{I2A}.
We approximate $\LPpdf{\Phi}{G_i}{\Vec{z}}{S}{T}$ by 
\begin{align}
\textstyle\LPpdfx{\Lambda}{G_i}{\Vec{z}}{S}{T}{M}\defeq \LPpdfx{N}{G_i}{\frac{x}{M}\Vec{g}}{S}{T}{M} M^{-|\edge{G_i}|}. \label{form:ABar}
\end{align}
By assuming a separated tree decomposition,
we have that $\LPpdfx{\Lambda}{G_i}{\Vec{z}}{S}{T}{M}$ is
an upper bound of $\LPpdf{\Phi}{G_i}{\Vec{z}}{S}{T}$
since $\LPpdf{\Phi}{G_i}{\Vec{z}}{S}{T}$ is monotonically increasing
(resp. decreasing)
with respect to $\Vec{z}[S_i]$ components
(resp. $\Vec{z}[T_i]$ components).

Let us estimate the running time to compute
$\LPpdfx{N}{G_i}{\frac{x}{M}\Vec{g}}{S}{T}{M}$
for all $\Vec{g}\in \{0,1,\dots,M\}^{S_i\cup T_i}$.
We can solve the longest path problem in $G_i^0$ in
$O(|\vertex{G_i^0}|+|\edge{G_i^0}|)=O(k^2)$ time,
where $k$ is the treewidth of $G$.
There are $M^{|\edge{G_i}|}$ cells 
of $P_i=[0,1]^{\edge{G_i}}$ and
the number of grid points in $P_i$ is at most
$(M+1)^{|\edge{G_i}|}\le (M+1)^{k(k+1)} = O(M^{k^2+k})$.
We observe the following.
\begin{observation}
  \label{observation:A_i}
  Given $M$ and a separated tree decomposition of $G$ with width $k$,
  we can obtain an array of 
  $\LPpdfx{\Lambda}{G_i}{\Vec{z}}{S}{T}{M}$ for all possible input
  $\Vec{z}\in [0,x]^{S_i\cup T_i}$ in
  $O(k^2 M^{k^2+k})$ time.
\end{observation}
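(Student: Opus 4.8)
The plan is a direct brute force in which the total work factors as (number of distinct array entries) $\times$ (grid points examined per entry) $\times$ (cost per grid point). For the first factor, note that $\LPpdfx{\Lambda}{G_i}{\Vec{z}}{S|T}{M}$ depends on $\Vec{z}$ only through the discretised vector $\Vec{g}$ with $\Vec{g}[v]=\lceil(M/x)\Vec{z}[v]\rceil$ for $v\in S_i$ and $\Vec{g}[v]=\lfloor(M/x)\Vec{z}[v]\rfloor$ for $v\in T_i$; as $\Vec{z}$ ranges over $[0,x]^{S_i\cup T_i}$ the vector $\Vec{g}$ ranges over $\{0,1,\dots,M\}^{S_i\cup T_i}$, so the array has at most $(M+1)^{|S_i\cup T_i|}$ distinct entries, and it suffices to compute $\LPpdfx{N}{G_i}{x\Vec{g}/M}{S|T}{M}$ once per such $\Vec{g}$.

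Fix one $\Vec{g}$. I would compute $\LPpdfx{N}{G_i}{x\Vec{g}/M}{S|T}{M}$, the number of cells of $P_i=[0,1]^{E(G_i)}$ that meet $\LPpdf{K'}{G_i}{x\Vec{g}/M}{S|T}$, by examining lattice points only. Because $\Vec{a}\ge\Vec{0}$, the left-hand side $\sum_{uv\in E(\pi)}\Vec{a}[uv]\Vec{x}[uv]$ of each constraint defining $\LPpdf{K'}{G_i}{\Vec{z}}{S|T}$ is nondecreasing in every coordinate of $\Vec{x}$, so a cell meets the region exactly when its minimal corner (the one at which every $\Vec{x}[uv]$ is smallest) does; those corners are the points $\Vec{x}=\Vec{g}_E/M$ with $\Vec{g}_E\in\{0,1,\dots,M-1\}^{E(G_i)}$, at most $(M+1)^{|E(G_i)|}$ of them. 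For each such $\Vec{x}$ I would test membership by the static longest-path computation already described: form the acyclic graph $G^+_i$ on $B_i\cup\{s^+,t^+\}$ with each $uv\in E(G_i)$ of length $\Vec{a}[uv]\Vec{x}[uv]$, each $s^+s$ of length $-\Vec{z}[s]$, and each $tt^+$ of length $\Vec{z}[t]$, and check whether the longest path from $s^+$ to $t^+$ has length at most $0$; since every such path is a source-to-terminal path $\pi$ of $G_i$ flanked by the two artificial edges, this holds exactly when $\sum_{uv\in E(\pi)}\Vec{a}[uv]\Vec{x}[uv]\le\Vec{z}[s(\pi)]-\Vec{z}[t(\pi)]$ for every $\pi\in\Pi(G_i)$, that is, when $\Vec{x}\in\LPpdf{K'}{G_i}{\Vec{z}}{S|T}$. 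Since $|V(G^+_i)|\le k+3$ and $|E(G^+_i)|\le\binom{k+1}{2}+2(k+1)=O(k^2)$, the linear-time DAG longest-path algorithm \cite{I2A} runs each test in $O(k^2)$ time.

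Multiplying the three bounds yields total running time $O\!\bigl(k^2\,(M+1)^{|S_i\cup T_i|+|E(G_i)|}\bigr)$, and it remains to bound the exponent. Since $|S_i\cup T_i|\le|B_i|\le k+1$ and $|E(G_i)|\le\binom{|B_i|}{2}\le\binom{k+1}{2}$, we get $|S_i\cup T_i|+|E(G_i)|\le (k+1)+\binom{k+1}{2}=\frac{(k+1)(k+2)}{2}\le k(k+1)=k^2+k$ for $k\ge2$ (for $k\le1$ the sum is an absolute constant, so only $O(1)$ further factors of $M$ arise). As $k$ is constant, $(M+1)^{k^2+k}=O(M^{k^2+k})$, giving the claimed $O(k^2M^{k^2+k})$ bound. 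The one step that is not pure bookkeeping is the reduction from counting cells meeting $\LPpdf{K'}{G_i}{\Vec{z}}{S|T}$ to counting lattice points in it, which relies on the monotonicity of the constraint functions noted above; I expect that to be the only place requiring care.
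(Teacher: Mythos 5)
Your proposal is correct and follows essentially the same route as the paper: enumerate the $(M+1)^{|S_i\cup T_i|}$ grid values of $\Vec{z}$, reduce the cell count to a lattice-point count in $P_i$, and test each point by a linear-time longest-path computation on $G^+_i$ costing $O(k^2)$, with the exponent bounded by $|S_i\cup T_i|+|E(G_i)|\le k^2+k$. If anything your accounting is slightly more careful than the paper's, since you explicitly justify the cell-to-minimal-corner reduction via monotonicity of the constraints and explicitly track the $(M+1)^{|S_i\cup T_i|}$ factor that the paper absorbs into its looser bound $|E(G_i)|\le k(k+1)$.
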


We merge $\LPpdfx{\Lambda}{G_i}{\Vec{z}}{S}{T}{M}$'s 
for $B_i\in {\cal B}$ into the approximation of
$\LPpdf{\Phi}{D_0}{\Vec{z}}{S}{T}$ as follows.
We define operator $\Delta(z_v)$
for $z_v=\Vec{z}[v]~(v\in S_i)$ as,
\begin{align*}
  \Delta(z_v)\LPpdfx{\Lambda}{G_i}{\Vec{z}}{S}{T}{M}\defeq \left(\LPpdfx{\Lambda}{G_i}{\Vec{z}^{(v)}}{S}{T}{M}-\LPpdfx{\Lambda}{G_i}{\Vec{z}}{S}{T}{M}\right)\frac{M}{x},
\end{align*}
where $\Vec{z}^{(v)}[v]=\Vec{z}[v]+x/M$  and $\Vec{z}^{(v)}[w]=\Vec{z}$
for any $w\neq v$.
We multiply $M/x$ in order to make the difference operator approximates
the derivative.
Let $S_i=\{u_1,\dots,u_{|S_i|}\}$.
The difference with respect to $\Vec{z}[S_i]$ is
\begin{align*}
  \Delta(\Vec{z}[S_i])\LPpdfx{\Lambda}{G_i}{\Vec{z}}{S}{T}{M}\defeq\Delta(z_{u_1})\cdots \Delta(z_{u_{|S_i|}}) \LPpdfx{\Lambda}{G_i}{\Vec{z}}{S}{T}{M}.
\end{align*} 

Here,  we define
$\LPpdfx{\Lambda}{D_i}{\Vec{z}}{S}{T}{M}$ as
an approximation of $\LPpdf{\Phi}{D_i}{\Vec{z}}{S}{T}$.
For each leaf bag $B_c\in {\cal B}$,
we set $\LPpdfx{\Lambda}{D_c}{\Vec{z}}{S}{T}{M}\defeq \LPpdfx{\Lambda}{G_c}{\Vec{z}}{S}{T}{M}$.
As for $U_i$, we set $\LPpdfx{\Lambda}{U_i}{\Vec{z}}{S}{T}{M}\defeq \prod_{j\in \Ch(i)}\LPpdfx{\Lambda}{D_j}{\Vec{z}}{S}{T}{M}$.
Then, we define $\LPpdfx{\Lambda}{D_i}{\Vec{z}}{S}{T}{M}$ as
\begin{align}
  \LPpdfx{\Lambda}{D_i}{\Vec{z}}{S}{T}{M}&\defeq \sum_{\Vec{g}\in \{0,\dots,M\}^{J_i}}\Delta(\JS_i)\LPpdfx{\Lambda}{G_i}{\Vec{u}}{S}{T}{M}\Delta(\JT_i)\LPpdfx{\Lambda}{U_i}{\Vec{u}}{S}{T}{M}, \label{form:Lambda_i}
\end{align}
where we set $\Vec{u}[v]=\frac{x}{M}\Vec{g}[v]$ for $v\in J_i$, and $\Vec{u}[v]=\Vec{z}[v]$ for $v\not\in J_i$.
Let $\Vec{v}[s]=\frac{x}{M}\Vec{g}[s]$ for $s\in \Src{D_i}$ and $\Vec{v}[v]=\Vec{z}[v]$ for $v\not\in\Src{D_i}$.

Note that (\ref{form:Lambda_i}) is analogous to (\ref{form:PhiDi})
of Lemma~\ref{lemma:phi_increment}.
We approximate (\ref{form:PhiDi}) using 
this discrete sum.
To obtain the values of $\Delta(\JS_i)\LPpdfx{\Lambda}{G_i}{\Vec{z}}{S}{T}{M}$,
we first compute the values of
$\LPpdfx{\Lambda}{G_i}{\Vec{z}}{S}{T}{M}$ at the gridpoints
$\Vec{z}[S_i\cup T_i]=\frac{x}{M}\Vec{g}$ 
for $\Vec{g}\in \{0,\dots,M\}^{S_i\cup T_i}$.
Then, we compute
$\Delta(z_{v_1})\LPpdfx{\Lambda}{G_i}{\frac{x}{M}\Vec{g}}{S}{T}{M}$ 
for all $\Vec{g}\in\{0,\dots,M\}^{S_i\cup T_i}$.
We store the values 
in an array with $(M+1)^{|S_i\cup T_i|}$ elements and compute
$\Delta(z_{v_2})\Delta(z_{v_1})\LPpdfx{\Lambda}{G_i}{\frac{x}{M}\Vec{g}}{S}{T}{M}$
similarly for $v_2\in S_i$.
We repeat computing these differences for $v_1,v_2,\dots,v_{|S_i|}\in S_i$.
When we obtain the values of
$\Delta(\JS_i)\LPpdfx{\Lambda}{G_i}{\frac{x}{M}\Vec{g}}{S}{T}{M}$
for $\Vec{g}\in \{0,\dots,M\}^{S_i\cup T_i}$,
we will have $|S_i|$ arrays
with $(M+1)^{|S_i\cup T_i|}$ elements each.
We similarly obtain
$\Delta(\JT_i)\LPpdfx{\Lambda}{U_i}{\frac{x}{M}\Vec{g}}{S}{T}{M}$.
We have the following.
\begin{observation}  \label{observation:Delta_A_i}
Given a separated tree decomposition,
an array of all possible values of   
$\Delta(\JS_i)\LPpdfx{\Lambda}{G_i}{\frac{x}{M}\Vec{g}}{S}{T}{M}$
and $\Delta(\JT_i)\LPpdfx{\Lambda}{U_i}{\frac{x}{M}\Vec{g}}{S}{T}{M}$
can be computed in $O(M^{k+1})$ time.
\end{observation}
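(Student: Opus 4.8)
The plan is to take as input the array of values $\LPpdfx{\Lambda}{G_i}{\frac{x}{M}\Vec{g}}{S|T}{M}$ over the gridpoints $\Vec{g}\in\{0,\dots,M\}^{S_i\cup T_i}$ — which is precisely what Observation~\ref{observation:A_i} produces — and then realise the operator $\Delta(\Vec{z}[S_i])=\Delta(z_{u_1})\cdots\Delta(z_{u_{|S_i|}})$ as $|S_i|$ successive sweeps over this grid. Writing $A[\cdot]$ for a working array indexed by $\Vec{g}\in\{0,\dots,M\}^{S_i\cup T_i}$, I would initialise $A[\Vec{g}]=\LPpdfx{\Lambda}{G_i}{\frac{x}{M}\Vec{g}}{S|T}{M}$ and, for $j=1,\dots,|S_i|$, overwrite $A[\Vec{g}]$ by $\bigl(A[\Vec{g}+\Vec{e}_{u_j}]-A[\Vec{g}]\bigr)\frac{M}{x}$, where $\Vec{e}_{u_j}$ is the unit grid step in the coordinate $u_j\in S_i$. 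Applying the operators in the prescribed order (indeed any order gives the same result, since forward difference operators in distinct coordinates commute), the array after all $|S_i|$ sweeps equals $\LPpdfx{\lambda}{G_i}{\frac{x}{M}\Vec{g}}{S|T}{M}$ for every $\Vec{g}$.

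For the running time, each sweep visits each of the $(M+1)^{|S_i\cup T_i|}$ entries exactly once and performs a constant number of arithmetic operations (two array look-ups, one subtraction, one multiplication by $M/x$), so one sweep costs $O\bigl((M+1)^{|S_i\cup T_i|}\bigr)$. Because $S_i,T_i\subseteq B_i=V(G_i)$ and, in a separated tree decomposition of width $k$, $|B_i|\le k+1$, we have $|S_i|\le k+1$ and $|S_i\cup T_i|\le k+1$. Hence the total cost of the $|S_i|$ sweeps is $O\bigl((k+1)(M+1)^{k+1}\bigr)=O(M^{k+1})$, absorbing the factor polynomial in the (constant) $k$ into the $O$-notation; the same estimate bounds the space, whether one overwrites a single array or keeps all $|S_i|$ intermediate arrays as in the preceding description.

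The only point that needs care — and the place I expect the argument to be slightly delicate rather than purely routine — is the boundary of the grid: when $\Vec{g}[u_j]=M$, the shifted index $\Vec{g}+\Vec{e}_{u_j}$ corresponds to $\Vec{z}[u_j]=x+x/M$ and lies outside $\{0,\dots,M\}^{S_i\cup T_i}$. I would handle this by working on the slightly enlarged index set $\{0,\dots,M+|S_i|\}^{S_i\cup T_i}$ and extending $\LPpdfx{\Lambda}{G_i}{\cdot}{S|T}{M}$ to arguments exceeding $x$ by its limiting value, which is legitimate because $\LPpdf{\Phi}{G_i}{\Vec{z}}{S|T}$, and hence its staircase upper bound, is nondecreasing in each $\Vec{z}[s]$ ($s\in S_i$) and bounded above by $1$; this widening costs only a constant factor per coordinate and so leaves the $O(M^{k+1})$ bound intact, while also keeping the array large enough to be fed, without further boundary adjustments, into the discrete version of the generalized convolution of Lemma~\ref{lemma:phi_increment} used in the next step.
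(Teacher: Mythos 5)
Your proposal is correct and matches the paper's own justification: the paper likewise computes the array of $\LPpdfx{\Lambda}{G_i}{\frac{x}{M}\Vec{g}}{S|T}{M}$ at all gridpoints and then applies the difference operators $\Delta(z_{v_1}),\dots,\Delta(z_{v_{|S_i|}})$ one at a time, each pass costing $O((M+1)^{|S_i\cup T_i|})=O(M^{k+1})$. Your explicit treatment of the grid boundary (extending $\Lambda$ by its limiting value, using monotonicity) is a detail the paper leaves implicit, but it does not change the approach or the bound.
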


We save the memory space
by considering the following $\Vec{w}$.
Let $B_h$ is the parent of $B_i$ and
$\Vec{g}\in \{0,\dots,M\}^{(\Src{D_i}\cup \Term{D_i})\cap B_h}$.
We compute (\ref{form:Lambda_i}) 
setting $\Vec{w}[v]=\frac{x}{M}\Vec{g}[v]$ for $v\in (\Src{D_i}\cup \Term{D_i})\cap B_h$; $\Vec{w}[s]=x$ for $s\in \Src{D_i}\setminus B_h$ and $\Vec{w}[t]=0$ for $t\in \Term{D_i}\setminus B_h$.
Observe that, by Proposition~\ref{proposition:separation},
$s\in \Src{D_i}\setminus B_h$ and $t\in \Term{D_i}\setminus B_h$
are not going to be connected to any vertex in
the ancestor bag of $B_i$.
Therefore, these $\Vec{w}[s]$'s and $\Vec{w}[t]$'s
are not the dummy variables of the convolutions.
By reducing the number of variables using this replacement,
we can store $\LPpdfx{\Lambda}{D_i}{\Vec{w}}{S}{T}{M}$
as an array with
at most $(M+1)^{|(S_i\cup T_i)\setminus J_i|}\le (M+1)^{k+1}$ elements
since $\LPpdfx{\Lambda}{D_i}{\Vec{w}}{S}{T}{M}$ is a staircase function.
In Step 08 of Algorithm 1 for $\LPpdfx{\Lambda}{D_i}{\Vec{w}}{S}{T}{M}$,
we compute a sum of at most $(M+1)^{k+1}= O(M^{k+1})$ values 
for each value of $\frac{x}{M}\Vec{g}$ for
$\Vec{g}\in \{0,1,\dots,M\}^{J_i}$.
We have the following observation.
\begin{observation}
  \label{observation:lambda_j}
  Assume a separated tree decomposition with width $k$.
  Given $\LPpdfx{\Lambda}{G_i}{\Vec{z}}{S}{T}{M}$ and 
  $\LPpdfx{\Lambda}{U_i}{\Vec{z}}{S}{T}{M}$,
  we can compute an array of all possible values of
  $\LPpdfx{\Lambda}{D_i}{\Vec{w}}{S}{T}{M}$ in $O(M^{2k+2})$ time.
\end{observation}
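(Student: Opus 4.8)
The plan is to read the running time directly off formula~(\ref{form:lambda_i}), treating the given tables for $\LPpdfx{\lambda}{G_i}{\Vec{z}}{S|T}{M}$ and $\LPpdfx{\lambda}{U_i}{\Vec{z}}{S|T}{M}$ as random-access arrays and counting the arithmetic operations needed to fill the output array. First I would pin down the size of the output array. Using the $\Vec{w}$-substitution explained above --- set $\Vec{w}[s]=x$ for $s\in S(D_i)\setminus B_h$ and $\Vec{w}[t]=0$ for $t\in T(D_i)\setminus B_h$, which is sound by Proposition~\ref{proposition:separation} --- the only coordinates of $\LPpdfx{\lambda}{D_i}{\Vec{w}}{S|T}{M}$ that remain free are those in $(S(D_i)\cup T(D_i))\cap B_h$. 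Since $B_i$ separates $B_h$ from the rest of $\Sub(i)$, such a vertex lies in $B_i\cap B_h$, and (arguing as in the proof of Proposition~\ref{proposition:newinternalvertex}) a vertex of $B_i$ with both an incoming and an outgoing edge inside $G_i$ is already internal in $D_i$, so in fact $(S(D_i)\cup T(D_i))\cap B_h\subseteq(S_i\cup T_i)\setminus J_i$ (recall $J_i\cap B_h=\emptyset$). As $\LPpdfx{\lambda}{D_i}{\Vec{w}}{S|T}{M}$ is a staircase function with step $x/M$, the output array therefore has at most $(M+1)^{|(S_i\cup T_i)\setminus J_i|}\le(M+1)^{k+1}$ entries.

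Next I would bound the cost of one entry. With the free coordinates fixed, evaluating~(\ref{form:lambda_i}) is one summation over $\Vec{g}\in\{0,\dots,M\}^{J_i}$; since $J_i=\JS_i\cup\JT_i\subseteq B_i$ we have $|J_i|\le|B_i|\le k+1$, so there are at most $(M+1)^{k+1}$ summands. For each summand we assemble the argument $\Vec{u}$ (it differs from the fixed part only in the $O(k)$ coordinates of $J_i$), look up $\LPpdfx{\lambda}{G_i}{\Vec{u}}{S|T}{M}$ and $\LPpdfx{\lambda}{U_i}{\Vec{u}}{S|T}{M}$, multiply, and accumulate; this is $O(1)$ arithmetic operations per summand if the index is maintained incrementally, and $O(k)$ if done naively. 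Multiplying the two counts, the whole array of $\LPpdfx{\lambda}{D_i}{\Vec{w}}{S|T}{M}$ is filled in $(M+1)^{k+1}\cdot(M+1)^{k+1}=O(M^{2k+2})$ time (suppressing the constant factor $\poly(k)$, as $k$ is a fixed constant), which is the claimed bound. For a leaf bag $B_i$ there is nothing to do, since $\LPpdfx{\lambda}{D_i}{\Vec{z}}{S|T}{M}=\LPpdfx{\lambda}{G_i}{\Vec{z}}{S|T}{M}$.

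The computation is just a double loop, so the only part that needs genuine care --- and where I expect the bookkeeping to be delicate --- is confirming that each of the three arrays involved ($\LPpdfx{\lambda}{G_i}{\Vec{z}}{S|T}{M}$, $\LPpdfx{\lambda}{U_i}{\Vec{z}}{S|T}{M}$ and the output $\LPpdfx{\lambda}{D_i}{\Vec{z}}{S|T}{M}$) is indexed by at most $k+1$ discretised coordinates after the $\Vec{w}$-reduction, and that $J_i$ is disjoint from the fixed (free) coordinates of each table so that the summation index never overwrites a fixed one. For $\LPpdfx{\lambda}{U_i}{\Vec{z}}{S|T}{M}$ the index bound uses the product structure $\LPpdfx{\Lambda}{U_i}{\Vec{z}}{S|T}{M}=\prod_{j\in\Ch(i)}\LPpdfx{\Lambda}{D_j}{\Vec{z}}{S|T}{M}$ together with Proposition~\ref{proposition:lrSSTT}, which keeps the relevant sources and terminals of $U_i$ inside $B_i$; for the output array it is the computation of the first paragraph; and the disjointness is immediate once one notes $J_i\subseteq S(U_i)\cup T(U_i)$ and $J_i\subseteq S_i\cup T_i$ from the definitions of $\JS_i$ and $\JT_i$. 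Granting these confinements, the operation count above is the whole proof.
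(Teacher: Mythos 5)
Your proposal is correct and follows essentially the same route as the paper: the paper likewise bounds the output array by $(M+1)^{|(S_i\cup T_i)\setminus J_i|}\le (M+1)^{k+1}$ entries via the $\Vec{w}$-reduction and charges each entry a sum of at most $(M+1)^{|J_i|}\le (M+1)^{k+1}$ looked-up products, giving $O(M^{2k+2})$. Your additional checks that the three arrays are each indexed by at most $k+1$ discretised coordinates and that $J_i$ is disjoint from the fixed coordinates are sound and merely make explicit what the paper leaves implicit.
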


The following pseudocode shows our algorithm Approx-DAG.
\begin{algorithm}
Approx-DAG($G,\Vec{a},x,{\cal T}(G)$) \\
Input: DAG $G$, edge lengths parameter $\Vec{a}\in \mathbb{Z}_{>0}^{E}$, $x\in\mathbb{R}$, and \\ 
\hspace*{1cm}  binary tree decomposition ${\cal T}(G)$ with bags $B_0,\dots,B_{b-1}\in{\cal B}$; \\
Output: Approximate value of $\Pr[X_{\rm MAX}\le x]$;\\
01. For each bag $B_i\in {\cal B}$ do \\
02. \hspace*{2mm} Compute $\LPpdfx{\Lambda}{G_i}{\frac{x}{M}\Vec{g}}{S}{T}{M}$ for all $\Vec{g}\in\{0,\dots,M\}^{S_i\cup T_i}$ by (\ref{form:ABar}); \\
03. done; \\
04. For each leaf bag $B_c \in {\cal B}$ do\\
05. \hspace*{2mm} Set $\LPpdfx{\Lambda}{D_c}{\frac{x}{M}\Vec{g}}{S}{T}{M}:=\LPpdfx{\Lambda}{G_c}{\Vec{w}}{S}{T}{M}$ for all possible $\Vec{w}$;\\
06. done;\\
07. For each bag $B_i$ from the leaves to the root $B_0$ do \\
08. \hspace*{2mm} Compute $\LPpdfx{\Lambda}{D_i}{\Vec{w}}{S}{T}{M}$ by (\ref{form:Lambda_i}) for all possible $\Vec{w}$;\\
09. done; \\
10. Output $\LPpdfx{\Lambda}{D_0}{x\Vec{\sigma}}{S}{T}{M}$.
\end{algorithm}

Remember that we construct a separated tree decomposition
by increasing the treewidth to at most $2k+1$.
Taking the sum of the running time in Observations \ref{observation:A_i},\ref{observation:Delta_A_i}, and \ref{observation:lambda_j}, we have the following.
\begin{observation}
  \label{observation:runningtime}
  The running time of Approx-DAG is
  $O(k^2 b M^{4k^2+6k+2})$. 
\end{observation}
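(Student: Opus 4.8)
The plan is to add up, bag by bag, the costs of the three blocks of Algorithm~1 and then multiply by the number $b$ of bags. The one point that needs attention is that Observations~\ref{observation:A_i}, \ref{observation:Delta_A_i} and \ref{observation:lambda_j} are stated for a separated tree decomposition of width $k$, whereas the algorithm is actually run on the separated tree decomposition of width at most $k':=3k+2$ produced by the vertex-splitting construction of Section~2; so everywhere below I would substitute this $k'$, for which $k'+1=3k+3$.

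First I would handle Steps~01--04. For a fixed bag $B_i$, Step~02 takes $O(k'^2 M^{k'^2+k'})$ time by Observation~\ref{observation:A_i} and Step~03 takes $O(M^{k'+1})$ time by Observation~\ref{observation:Delta_A_i}. Since
\[
  k'^2+k' = (3k+2)^2+(3k+2) = 9k^2+12k+4+3k+2 = 9k^2+15k+6,
\]
Step~02 costs $O((3k+2)^2 M^{9k^2+15k+6})$ per bag, which already dominates Step~03 because $k'+1 = 3k+3 \le 9k^2+15k+6$. Next, Steps~05--07 only copy arrays of size $O(M^{k'+1})$ for each leaf bag, and in the bottom-up loop of Steps~08--11 every bag is visited exactly once (each $B_i$ is a child of at most one bag), the work per visit being $O(M^{2k'+2}) = O(M^{6k+6})$ for $\LPpdfx{\lambda}{D_i}{\Vec{w}}{S|T}{M}$ by Observation~\ref{observation:lambda_j}, together with an $O(M^{k'+1})$ prefix sum to obtain $\LPpdfx{\Lambda}{D_i}{\Vec{w}}{S|T}{M}$ from (\ref{form:Lambda_i}). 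Because $6k+6 \le 9k^2+15k+6$, all of this is dominated by Step~02.

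Summing over the $b = O(n)$ bags (Proposition~\ref{proposition:binary}) then gives a total running time of $O((3k+2)^2 b M^{9k^2+15k+6}) = O(k^2 b M^{9k^2+15k+6})$, since $(3k+2)^2 = O(k^2)$. I expect the only genuinely substantive point is to confirm that the brute-force cell-counting of Step~02 remains the bottleneck after the width blow-up $k \mapsto 3k+2$; it does, because its exponent is quadratic in $k$ while the exponents of all the merging steps grow only linearly in $k$. Everything else is routine accounting.
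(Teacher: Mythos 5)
Your proposal is correct and follows essentially the same route as the paper, which simply sums the costs from Observations~\ref{observation:A_i}, \ref{observation:Delta_A_i} and \ref{observation:lambda_j} over the $b$ bags after the width blow-up $k\mapsto 3k+2$, with the cell-counting of Step~02 (exponent $(3k+2)^2+(3k+2)=9k^2+15k+6$) as the dominant term. Your accounting of the substitution and of why the merging steps are dominated is just a more explicit version of the paper's one-line argument.
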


\subsection{Analysis}
In this section, we prove the following theorem.
Theorem 1 follows from Theorem~\ref{th:runningtime}
by replacing the treewidth $k$ with $2k+1$ for constructing the separated
tree decomposition.
\begin{theorem}
\label{th:runningtime}
Let $G$ be a DAG whose underlying undirected graph has
a separated tree decomposition with width $k$.
Our algorithm outputs 
$V'=\LPpdfx{\Lambda}{D_0}{x\Vec{\sigma}}{S}{T}{M}$ satisfying 
$1\le V'/\LPpdf{\Phi}{D_0}{x\Vec{\sigma}}{S}{T}\le 1+\epsilon$,
where $\Vec{\sigma}[s]=1$ for $s\in \Src{G}$ and $\Vec{\sigma}[v]=0$
for $v\not\in\Src{G}$.
The running time of our algorithm is
$O\left(k^2 n\left(\frac{8(k+1)mn^2}{\epsilon}\right)^{k^2+k}\right)$.
\end{theorem}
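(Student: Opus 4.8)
The plan is to prove the two assertions of the theorem separately. The running time is a direct accounting from the three observations already proved, and the approximation guarantee is obtained by sandwiching $V'$ between $\LPpdf{\Phi}{D_0}{x\Vec{\sigma}}{S|T}$ and a slightly translated copy of it, then converting that ``horizontal'' translation into a multiplicative (``vertical'') error by a volume-scaling inequality. Throughout we use that the tree decomposition is separated and binary with $b=O(n)$ bags (Proposition~\ref{proposition:binary}), and we fix $M=\lceil 2(k+1)mn/\epsilon\rceil$. For the running time I would add up the per-bag costs: Steps~02--03 of Approx-DAG cost $O(k^2M^{k^2+k})+O(M^{k+1})$ by Observations~\ref{observation:A_i} and~\ref{observation:Delta_A_i}, Step~06 is dominated by these, and Steps~09--10 cost $O(M^{2k+2})$ by Observation~\ref{observation:lambda_j}; the term $O(k^2M^{k^2+k})$ dominates, so the total over the $O(n)$ bags is $O(k^2nM^{k^2+k})$, which becomes $O\!\left(k^2n\left(\tfrac{2(k+1)mn}{\epsilon}\right)^{k^2+k}\right)$ after substituting $M$.

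For the lower bound $\LPpdf{\Phi}{D_0}{x\Vec{\sigma}}{S|T}\le V'$, the base case is that for every bag $\LPpdfx{\Lambda}{G_i}{\Vec{z}}{S|T}{M}\ge\LPpdf{\Phi}{G_i}{\Vec{z}}{S|T}$: a grid point $\Vec{g}_E/M$ lies in $\LPpdf{K'}{G_i}{\Vec{z}}{S|T}$ exactly when its cell meets $\LPpdf{K'}{G_i}{\Vec{z}}{S|T}$, so $\LPpdfx{\Lambda}{G_i}{\Vec{z}}{S|T}{M}$ is the volume of a union of cells covering $\LPpdf{K'}{G_i}{\Vec{z}}{S|T}$, and rounding each $\Vec{z}[s]$ up and each $\Vec{z}[t]$ down only increases $\LPpdf{\Phi}{G_i}{\Vec{z}}{S|T}$ by its monotonicity. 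Then I would show by induction from the leaves to the root that $\LPpdfx{\Lambda}{D_i}{\Vec{z}}{S|T}{M}\ge\LPpdf{\Phi}{D_i}{\Vec{z}}{S|T}$: formulas~(\ref{form:lambda_i}) and~(\ref{form:Lambda_i}) together form the discrete analogue of the generalized convolution of Lemma~\ref{lemma:phi_increment}, with $\LPpdfx{\lambda}{G_i}{\Vec{z}}{S|T}{M}$ a staircase over-estimate of the continuous density and~(\ref{form:Lambda_i}) a one-sided Riemann sum over-estimating the corresponding integral, so the discrete convolution dominates the continuous one; the integration-by-parts identity of Proposition~\ref{proposition:switch}, applied discretely, guarantees that no sign reverses. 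Evaluating at $\Vec{z}=x\Vec{\sigma}$ in the root $B_0$ gives $V'=\LPpdfx{\Lambda}{D_0}{x\Vec{\sigma}}{S|T}{M}\ge\LPpdf{\Phi}{D_0}{x\Vec{\sigma}}{S|T}$.

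For the upper bound I would attach to each bag a nonnegative ``horizontal error'' $\eta_i$ such that $\LPpdfx{\Lambda}{G_i}{\Vec{z}}{S|T}{M}\le\LPpdf{\Phi}{G_i}{\Vec{z}'}{S|T}$, where $\Vec{z}'$ is obtained from $\Vec{z}$ by raising each source coordinate by $\eta_i$; tracked with the ``vertex discount'' (a horizontal error placed at every source of $G_i$), $\eta_i$ is at most one grid step $x/M$ per source coming from the rounding of $\Vec{z}$ together with the cell-counting slack, so $\eta_i=O((k+1)x/M)$. The key step is the propagation: by induction from the leaves, $\LPpdfx{\Lambda}{D_i}{\Vec{z}}{S|T}{M}\le\LPpdf{\Phi}{D_i}{\Vec{z}''}{S|T}$ where $\Vec{z}''$ raises each source coordinate by $H_i$ with $H_i\le\eta_i+\max_{j\in\Ch(i)}H_j$; because a translation of the argument of one factor of the convolution in Lemma~\ref{lemma:phi_increment} is exactly a translation of the convolution, the horizontal errors of $G_i$, $D_\ell$ and $D_r$ do not compound, and the Riemann-sum step in~(\ref{form:Lambda_i}) is absorbed into $\eta_i$ rather than multiplying. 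Charging $O((k+1)x/M)$ to each of the $n$ vertices yields $H_0=O((k+1)nx/M)$, and a short computation shows that $M=\lceil 2(k+1)mn/\epsilon\rceil$ makes $H_0\le x\epsilon/(2m)$. Finally, since $\frac{x}{x+H_0}\Vec{y}$ is feasible and stays in $[0,1]^E$, we have $K_G(\Vec{a},x+H_0)\subseteq\frac{x+H_0}{x}K_G(\Vec{a},x)$, hence
\begin{align*}
V'\le\LPpdf{\Phi}{D_0}{(x+H_0)\Vec{\sigma}}{S|T}={\rm Vol}(K_G(\Vec{a},x+H_0))\le\left(1+\tfrac{H_0}{x}\right)^{m}{\rm Vol}(K_G(\Vec{a},x))\le e^{\epsilon/2}\,\LPpdf{\Phi}{D_0}{x\Vec{\sigma}}{S|T},
\end{align*}
and $e^{\epsilon/2}\le1+\epsilon$ for $\epsilon\le1$; combined with the lower bound this is the claimed $1\le V'/\LPpdf{\Phi}{D_0}{x\Vec{\sigma}}{S|T}\le1+\epsilon$.

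I expect the main obstacle to be the inductive propagation of the horizontal error through the discrete convolutions~(\ref{form:lambda_i})--(\ref{form:Lambda_i}): one must keep $\LPpdfx{\lambda}{D_i}{\Vec{z}}{S|T}{M}$ a staircase function with the correct one-sided continuity so that \emph{both} the ``$\ge$'' comparison and the ``$\le$ (shifted)'' comparison survive each convolution step and each summation step, and one must show that $H_i$ grows only additively along root-to-leaf paths --- this is precisely where the value of $M$, and hence the $(k+1)mn$ factor in the running time, is forced. The remaining ingredients, namely the per-bag cell-counting bounds and the final volume-scaling inequality, are routine.
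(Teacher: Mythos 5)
Your proposal is correct and follows essentially the same route as the paper: per-bag cell counting plus the three observations for the running time, a leaf-to-root sandwich of the staircase approximation between $\LPpdf{\Phi}{D_0}{x\Vec{\sigma}}{S|T}$ and a horizontally shifted copy tracked via per-vertex discounts (the paper's max-merge is exactly your per-vertex charging, which is what prevents the errors of $D_\ell$ and $D_r$ from compounding), and the final conversion of the shift $(k+1)nx/M$ into a multiplicative factor via the scaling containment $K_G(\Vec{a},\eta x)\subseteq \eta K_G(\Vec{a},x)$. The only cosmetic difference is that you bound $(1+\epsilon/(2m))^m\le e^{\epsilon/2}\le 1+\epsilon$ directly while the paper bounds the reciprocal below by $1-\epsilon/2$; both give the same conclusion.
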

We prove $M\ge \lceil 2(k+1)bmn/\epsilon\rceil$ is sufficient. 
Instead of directly evaluating the approximation ratio
$\LPpdfx{\Lambda}{D_0}{x\Vec{\sigma}}{S}{T}{M}/\LPpdf{\Phi}{D_0}{x\Vec{\sigma}}{S}{T}$,
we first seek an upper bound of $\LPpdfx{\Lambda}{D_0}{x\Vec{\sigma}}{S}{T}{M}$
given in the form of $\LPpdf{\Phi}{D_0}{(x+h)\Vec{\sigma}}{S}{T}$
for horizontal error $h$,
which we later prove that $h$ at most $\frac{\epsilon x}{2m}$ is sufficient.
Since $\LPpdf{\Phi}{D_0}{x\Vec{\sigma}}{S}{T}$ is a lower bound on
$\LPpdfx{\Lambda}{D_0}{x\Vec{\sigma}}{S}{T}{M}$,
we prove the vertical approximation ratio
as $\LPpdf{\Phi}{D_0}{(x+h)\Vec{\sigma}}{S}{T}/\LPpdf{\Phi}{D_0}{x\Vec{\sigma}}{S}{T}$,
by showing that this is at most the ratio of two
geometrically similar $m$-dimensional polytopes' volumes.

To evaluate the horizontal error,
we introduce the idea of {\em vertex discount}.
Since we have our error estimation for each bag, 
we must be careful because
the total error estimation depends 
on how $s-t$ paths go through the sources of bag-subgraphs
(Remember Fig.\ref{fig:FoldedPath}).
We explain the effect of the horizontal error in each bag
by associating the vertex discount with
the sources of the corresponding bag-subgraph.
Then, we define the longest path length distribution function
that includes the vertex discount.
\begin{definition}
  For $v\in \vertex{G}$, subgraph $G'$ of $G$, and $W\subseteq \vertex{G}$,
  we define a vector $d_{G'}^W\in\mathbb{R}^{\vertex{G'}}$ as follows.
  If any bag-subgraph $G_i$ is also a subgraph of $G'$,
  we set $d_{G'}^W(v)=(k+1)x/M$ for $v\in W\cap \Src{G_i}$ and
  $d_{G'}^W(v)=kx/M$ for $v\in \Src{G_i}\setminus W$;
  otherwise, $d_{G'}^W(v)=0$.
  We call $d_{G'}^W$ {\em the vertex discount}.
  When $W=\emptyset$, we write $d_{G'}^\emptyset=d_{G'}$.
\end{definition}
\begin{definition}
  For a subgraph $G'$ of $G$ and $W\subseteq \vertex{G'}$, we set 
  \begin{align*}
    \LPpdfx{\Phi}{G'}{\Vec{z}}{S}{T}{d_{G'}^W}&\defeq\Pr\left[\bigwedge_{\pi\in \Pi_{G'}}\sum_{uv\in \edge{\pi}}\Vec{a}[uv]\Vec{X}[uv]\!-\hspace*{-3mm}\sum_{\hspace*{2mm}v\in \vertex{\pi}}\hspace*{-2mm}d_{G'}^W(v) \le \Vec{z}[\src{\pi}]-\Vec{z}[\term{\pi}]\right].
  \end{align*}
\end{definition}
Since the vertex discount reduces the longest path length,
a greater vertex discount increases the probability.
That is, $\LPpdfx{\Phi}{G'}{\Vec{z}}{S}{T}{0}\le \LPpdfx{\Phi}{G'}{\Vec{z}}{S}{T}{d_{G'}^W}\le\LPpdfx{\Phi}{G'}{\Vec{z}}{S}{T}{d_{G'}^{W^+}}$ for any $v\in \vertex{G'}$ and $W\subseteq W^+\subseteq \vertex{G'}$,
where we write
$\LPpdfx{\Phi}{G'}{\Vec{z}}{S}{T}{0}=\LPpdf{\Phi}{G'}{\Vec{z}}{S}{T}$.
We consider discrete version of Proposition~\ref{proposition:switch} by
using $W$ of $d_{G'}^W$.
We merge $\LPpdfx{\Phi}{G_i}{\Vec{z}}{S}{T}{d_{G_i}^{\JS_i}}$ and
$\LPpdfx{\Phi}{U_i}{\Vec{z}}{S}{T}{d_{U_i}^{Q_i\cup\JT_i}}$ by the following. 
\begin{lemma}
  \label{lemma:horizontal_error_sum}
  Assume a separated tree decomposition with width $k$.
  Let $W_i=J_i\cup Q_i$ where $Q_i\subseteq \vertex{U_i}$.
  Then, we have
  \begin{align*}
    \LPpdfx{\Phi}{D_i}{\Vec{z}}{S}{T}{d_{D_i}^{W_i}}=\int_{\mathbb{R}^{J_i}} \left(\vecdiff{\JS_i}\LPpdfx{\Phi}{G_i}{\Vec{z}}{S}{T}{d_{G_i}^{\JS_i}}\right)\left(\vecdiff{\JT_i}\LPpdfx{\Phi}{U_i}{\Vec{z}}{S}{T}{d_{U_i}^{Q_i\cup\JT_i}}\right){\rm d}\Vec{z}[J_i].
  \end{align*}
\end{lemma}
\begin{proof}
We put the vertex discount into the graph
by considering an alternative construction of
the separated tree decomposition.
Given $G$ and its tree decomposition ${\cal T}(G)=({\cal B},{\cal A})$,
we construct $\hat{G}=(\hat{V},E_E\cup \hat{E_V})$ as follows.
We set $\hat{V}=\{v^-,v^+|v\in \vertex{G}\}\cup\{v^{(i)}|v\in B_i, B_i\in {\cal B}\}$,
$E_E=\{u^-v^+|uv\in\edge{G}\}$
and $\hat{E_V}=\{v^+v^{(i)},v^{(i)}v^-|v\in\vertex{G}\}$.
The length of edge $u^-v^+\in E_E$ for $uv\in \edge{G}$ is $X_{uv}$;
the length of edge $v^+v^{(i)}\in \hat{E_V}$ for $v\in \vertex{G}$
is static value $0$.
Then, we set the length of edge $v^{(i)}v^-\in \hat{E_V}$ as
static value $d_{G_i}^{\JS_i}(v)$.
We define bag-subgraph $\hat{G_i}$ of $\hat{G}$ by
$\vertex{\hat{G_i}}=\{v^+,v^{(i)},v^-|v\in B_i\}$ and $\edge{\hat{G_i}}=\{u^-v^+|uv\in\edge{G_i}\}\cup\{v^+v^{(i)},v^{(i)}v^-|v\in \vertex{G_i}\}$.
We define, for $B_i\in{\cal B}$, the subtree-subgraph $\hat{D_i}$ and
the uncapped subtree-subgraph $\hat{U_i}$ by replacing $G_i$'s by $\hat{G_i}$
in Definition~\ref{definition:subtreesubgraph}.
Now, we have $\LPpdfx{\Phi}{\hat{D_i}}{\Vec{z}}{S}{T}{0}=\LPpdfx{\Phi}{D_i}{\Vec{z}}{S}{T}{d_{D_i}^{W_i}}$.

By Lemma \ref{lemma:phi_increment},
the right hand side of the claim
is the longest path length distribution function
in $\hat{G_i} \cup \hat{U_i}$.
Notice that the longest path length in $\hat{G_i} \cup \hat{U_i}$
is equal to that of $\hat{D_i}$ since
the longest path length between $v^+$ and $v^-$ is equal to
$d_{D_i}^{W_i}(v)=\max\{d_{G_i}^{\JS_i}(v),d_{U_i}^{Q_i\cup\JT_i}(v)\}$,
implying the claim.
\end{proof}

We extend the other functions. 
$\LPpdfx{N}{G_i}{\Vec{z}}{S}{T}{M,d_{G_i}^W}$ is the number of cells
in $P_i=[0,1]^{\edge{G_i}}$
intersecting $\LPpdfx{K'}{G_i}{\Vec{z}}{S}{T}{d_{G_i}^W}$, where
\begin{align*}
  \LPpdfx{K'}{G_i}{\Vec{z}}{S}{T}{d_{G_i}^W}\!\!\defeq\!\!\left\{\!\Vec{x}\!\in\! P_i\!\left| \bigwedge_{\pi\in \Pi_{G_i}}\hspace*{-4mm}\sum_{\hspace*{4mm}uv\in \edge{\pi}}\hspace*{-3mm}\Vec{a}[uv]\Vec{x}[uv]\! -\hspace*{-3mm}\sum_{\hspace*{2mm}v\in \vertex{\pi}}\hspace*{-2mm}d_{G_i}^W(v)\!\le\!\Vec{z}[\src{\pi}]\!-\!\Vec{z}[\term{\pi}]\! \right.\right\}.
\end{align*}
In the inequalities of the conditions for
$\LPpdfx{K'}{G_i}{\Vec{z}}{S}{T}{d_{G_i}^W}$,
we have $\sum_{v\in \vertex{\pi}}d_{G_i}^W(v)$ instead of just $d_{G_i}^W(\src{\pi})$
since our definition allows multiple subgraph sources on a path.
Then, $\LPpdfx{\Lambda}{G_i}{\Vec{z}}{S}{T}{M,d_{G_i}^W}=\LPpdfx{N}{G_i}{x\Vec{g}/M}{S}{T}{M,d_{G_i}^W}M^{-|\edge{G_i}|}$
where $\Vec{g}[s]=\lceil M\Vec{z}[s]/x \rceil$ for $s\in S_i$ and
$\Vec{g}[t]=\lfloor M\Vec{z}[t]/x\rfloor$ for $t\in T_i$.

To bound the horizontal error,
we prove the following lemmas.
We write 
$\LPpdfx{\Lambda}{G_i}{\Vec{z}}{S}{T}{M,0}=\LPpdfx{\Lambda}{G_i}{\Vec{z}}{S}{T}{M}$, and $d_{G_i}=d_{G_i}^\emptyset$.
\begin{lemma}
  \label{lemma:sandwitch1}
  Given a separated tree decomposition with width $k$, we have
  \begin{align*}
    \LPpdfx{\Phi}{G_i}{\Vec{z}}{S}{T}{0} \le \LPpdfx{\Lambda}{G_i}{\Vec{z}}{S}{T}{M,0} \le \LPpdfx{\Phi}{G_i}{\Vec{z}}{S}{T}{d_{G_i}}.
  \end{align*}
\end{lemma}
\begin{proof}
Since the earlier inequality is evident by definition,
we focus on the latter.
Let $C\subseteq P_i=[0,1]^{\edge{G_i}}$ 
be any cell intersecting
$\LPpdfx{K'}{G_i}{\Vec{z}}{S}{T}{0}=\LPpdf{K'}{G_i}{\Vec{z}}{S}{T}$.
The latter inequality
$\LPpdfx{\Lambda}{G_i}{\Vec{z}}{S}{T}{M,0} \le \LPpdfx{\Phi}{G_i}{\Vec{z}}{S}{T}{d_{G_i}}$ is clear if we have
$C\subseteq \LPpdfx{K'}{G_i}{\Vec{z}}{S}{T}{d_{G_i}}$ for all $C$
such that $C\cap \LPpdfx{K'}{G_i}{\Vec{z}}{S}{T}{0}\neq \emptyset$. 

To prove $C\subseteq \LPpdfx{K'}{G_i}{\Vec{z}}{S}{T}{d_{G_i}}$, 
let $\Vec{p}, \Vec{q} \in C$ be two extreme points of $C$
so that 
$\Vec{p}=\frac{1}{M}\Vec{g}_E\in \LPpdfx{K'}{G_i}{\Vec{z}}{S}{T}{0}$ and $\Vec{q}=\frac{1}{M}(\Vec{g_E}+\Vec{1})\not\in \LPpdfx{K'}{G_i}{\Vec{z}}{S}{T}{0}$,
where $\Vec{g}_E\in \{0,1,\dots,M-1\}^{\edge{G_i}}$.
Remember that each component of
a point in $C\subseteq [0,1]^{\edge{G_i}}$ gives the
edge length of an edge in $G_i$.
gives the maximum difference of all path lengths in $G_i$.
Since $\Vec{p}\in\LPpdfx{K'}{G_i}{\Vec{z}}{S}{T}{0}$ and
$\Vec{q}\not\in\LPpdfx{K'}{G_i}{\Vec{z}}{S}{T}{0}$, 
for any fixed $\Vec{z}[S_i\cup T_i]$,
the edge lengths given by $\Vec{q}$
let some path $\pi$ violate the condition
in the definition of $\LPpdfx{K'}{G_i}{\Vec{z}}{S}{T}{d_{G_i}}$ 
so that
\begin{align*}
  \sum_{e\in \edge{\pi}}\Vec{a}[e]\Vec{p}[e]  &\le \Vec{z}[\src{\pi}]-\Vec{z}[\term{\pi}] \le \sum_{e\in \edge{\pi}} \Vec{a}[e]\Vec{q}[e]= \sum_{e\in \edge{\pi}} \Vec{a}[e]\left(\Vec{p}[e]+\frac{1}{M} \right)\\
  &\le \Vec{z}[\src{\pi}]-\Vec{z}[\term{\pi}]+\sum_{e\in\edge{\pi}}\frac{\Vec{a}[e]}{M}
  \le\Vec{z}[\src{\pi}]-\Vec{z}[\term{\pi}]+d_{G_i}(\src{\pi}).
\end{align*}
The last inequality is due to the assumption $\Vec{a}[e]\le x$
for all $e\in\edge{G}$ (see Proposition~\ref{proposition:atmostx}).
Thus, $\Vec{q}\in \LPpdfx{K'}{G_i}{\Vec{z}}{S}{T}{d_{G_i}}$,
implying
$C\subseteq \LPpdfx{K'}{G_i}{\Vec{z}}{S}{T}{d_{G_i}}$. 
\end{proof}

Let $W_i=J_i\cup Q_i$, where $Q_i\subseteq \vertex{U_i}$. For leaf bag $B_c\in {\cal B}$, we set
$\LPpdfx{\Lambda}{D_c}{\Vec{z}}{S}{T}{M,d_{D_c}^{J_c}}=\LPpdfx{\Lambda}{G_c}{\Vec{z}}{S}{T}{M,d_{G_c}^{\JS_c}}$.
Then, similarly as (\ref{form:Lambda_i}), we set
\begin{align*}
  \LPpdfx{\Lambda}{D_i}{\Vec{z}}{S}{T}{M,d_{D_i}^{W_i}}\!\defeq\hspace*{-4mm}\sum_{\Vec{g}\in\{0,\dots,M\}}^{J_i}\hspace*{-4mm}\Delta(\Vec{z}[\JS_i])\LPpdfx{\Lambda}{G_i}{\Vec{z}}{S}{T}{M,d_{G_i}^{\JS_i}}\Delta(\Vec{z}[\JT_i])\LPpdfx{\Lambda}{U_i}{\Vec{z}}{S}{T}{M,d_{U_i}^{Q_i\cup\JT_i}}. 
\end{align*}

In the following,
we consider the derivative of
$\LPpdfx{\Lambda}{D_i}{\Vec{z}}{S}{T}{M,d_{G_i}^{W}}$,
a staircase function, with respect to
$\Vec{z}$ by using the Dirac's delta function $\delta(x)$.
\begin{lemma}
\label{lemma:horizontal_error2}
Given a separated tree decomposition with width $k$, for $B_i\in{\cal B}$, let
$\displaystyle\LPpdfx{\Gamma}{D_i}{\Vec{z}}{S}{T}{M,d_{D_i}^{W_i}}\defeq \int_{\mathbb{R}^{J_i}}\left(\vecdiff{\JS_i}\LPpdfx{\Phi}{G_i}{\Vec{z}}{S}{T}{d_{G_i}^{\JS_i}}\right)\left(\vecdiff{\JT_i}\LPpdfx{\Lambda}{U_i}{\Vec{z}}{S}{T}{M,d_{U_i}^{Q_i\cup\JT_i}}\right){\rm d}\Vec{z}[J_i]$,
where $W_i=J_i\cup Q_i$ and $Q_i\subseteq \vertex{U_i}$.
Then, $\LPpdfx{\Lambda}{D_i}{\Vec{z}}{S}{T}{M,d_{D_i}^{Q_i}}\le \LPpdfx{\Gamma}{D_i}{\Vec{z}}{S}{T}{M,d_{D_i}^{W_i}}$.

\end{lemma}
\begin{proof}
  By Proposition~\ref{proposition:switch} and the property of the delta function $\delta(x)$, we have
  \begin{align*}
    \LPpdfx{\Gamma}{D_i}{\Vec{z}}{S}{T}{M,d_{D_i}^{W_i}}&=\int_{\mathbb{R}^{J_i}}\LPpdfx{\Phi}{G_i}{\Vec{z}}{S}{T}{d_{G_i}^{\JS_i}}(-1)^{|\JS_i|}\vecdiff{J_i}\LPpdfx{\Lambda}{U_i}{(\Vec{z})}{S}{T}{M,d_{U_i}^{Q_i\cup\JT_i}}{\rm d}\Vec{z}[J_i]\\
    &=\hspace*{-5mm}\sum_{\Vec{g}\in \{0,\dots,M\}^{J_i}}\hspace*{-5mm}\LPpdfx{\Phi}{G_i}{\Vec{u}}{S}{T}{d_{G_i}^{\JS_i}}(-1)^{|\JS_i|}\Delta(\Vec{z}[J_i])\LPpdfx{\Lambda}{U_i}{\Vec{u}}{S}{T}{M,d_{U_i}^{Q_i\cup\JT_i}}, 
  \end{align*}
  where $\Vec{u}[v]=\frac{x}{M}\Vec{g}[v]$ for $v\in J_i$ and
  $\Vec{u}[v]=\Vec{z}[v]$ for $v\not\in J_i$.

  Consider the left hand side of the claim.
  We prove the discrete analogy of Proposition~\ref{proposition:switch}
  to transform the definition of
  $\LPpdfx{\Lambda}{D_i}{\Vec{z}}{S}{T}{M,0}$.
  For $f(x)$ and $g(x)$,
  consider ``summation by parts''~\cite{GKP}.
  Let $\Delta f(x)=f(x+1)-f(x)$.
  By transforming $\Delta(f(x)g(x))=f(x+1)g(x+1)-f(x)g(x)$ 
  into $f(x)\Delta g(x)=\Delta(f(x)g(x))-f(x+1)\Delta g(x)$,
  the sum for $x=0,\dots,M$ gives
  \begin{align*}
    \sum_{x=0}^{M}f(x)\Delta g(x)=f(M+1)g(M+1)-f(0)g(0)-\sum_{x=0}^{M}g(x+1)\Delta f(x).
  \end{align*}
  We apply this summation by parts to
  $\LPpdfx{\Lambda}{D_i}{\Vec{z}}{S}{T}{M,d_{D_i}^{Q_i}}$,
  where $x$ is replaced by $\Vec{u}[v]$ for $v\in J_i$.
  For $W\subseteq J_i$, let a vector $\Vec{\theta}_W$ satisfy
  $\Vec{\theta}_W[v]=\frac{x}{M}$ for $v\in W$ and
  $\Vec{\theta}_W[v]=0$ for $v\not\in W$.
  Notice that
  $\LPpdfx{\Lambda}{G_i}{\Vec{u}}{S}{T}{M,0}\LPpdfx{\Lambda}{U_i}{\Vec{u}}{S}{T}{M,d_{U_i}^{Q_i}}=0$ if $\Vec{g}[v]\le 0$ or $\Vec{g}[v]\ge M$
  for any $v\in J_i$.
  By Lemma~\ref{lemma:sandwitch1}, and the summation by parts,
  \begin{align*}
    &\LPpdfx{\Lambda}{D_i}{\Vec{z}}{S}{T}{M,d_{D_i}^{Q_i}}=\hspace*{-6mm}\sum_{\Vec{g}\in\{0,\dots,M\}^{J_i}} \hspace*{-6mm}\LPpdfx{\Lambda}{G_i}{(\Vec{u}+\Vec{\theta}_{\JS_i})}{S}{T}{M,0}(-1)^{|\JS_i|}\Delta(\Vec{z}[J_i])\LPpdfx{\Lambda}{U_i}{\Vec{u}}{S}{T}{M,d_{U_i}^{Q_i}}\\
    &\le\hspace*{-6mm}\sum_{\Vec{g}\in\{0,\dots,M\}^{J_i}} \hspace*{-6mm}\LPpdfx{\Phi}{G_i}{(\Vec{u}+\Vec{\theta}_{\JS_i})}{S}{T}{d_{G_i}}(-1)^{|\JS_i|}\Delta(\Vec{z}[J_i])\LPpdfx{\Lambda}{U_i}{\Vec{u}}{S}{T}{M,d_{U_i}^{Q_i}}\\
    &=\hspace*{-6mm}\sum_{\Vec{g}\in\{0,\dots,M\}^{J_i}} \hspace*{-6mm}(-1)^{|J_i|}\Delta(\Vec{z}[\JT_i])\left(\LPpdfx{\Phi}{G_i}{(\Vec{u}+\Vec{\theta}_{\JS_i})}{S}{T}{d_{G_i}}\right)\Delta(\Vec{z}[\JS_i])\LPpdfx{\Lambda}{U_i}{(\Vec{u}+\Vec{\theta}_{\JT_i})}{S}{T}{M,d_{U_i}^{Q_i}}\\
    &\le\hspace*{-6mm}\sum_{\Vec{g}\in\{0,\dots,M\}^{J_i}} \hspace*{-6mm}\LPpdfx{\Phi}{G_i}{(\Vec{u}+\Vec{\theta}_{J_i})}{S}{T}{d_{G_i}}(-1)^{|S_i|}\Delta(\Vec{z}[J_i])\LPpdfx{\Lambda}{U_i}{(\Vec{u}+\Vec{\theta}_{\JT_i})}{S}{T}{M,d_{U_i}^{Q_i}}.
  \end{align*}
  The last part is at most $\LPpdfx{\Gamma}{D_i}{\Vec{z}}{S}{T}{M,d_{D_i}^{W_i}}$ by $W_i=\JS_i\cup\JT_i\cup Q_i=J_i\cup Q_i$,
  since $\LPpdfx{\Phi}{G_i}{(\Vec{u}+\Vec{\theta}_{J_i})}{S}{T}{d_{G_i}}\le\LPpdfx{\Phi}{G_i}{(\Vec{u}+\Vec{\theta}_{\JS_i})}{S}{T}{d_{G_i}}$.
\end{proof}

The following lemma proves the total vertex discount for the entire graph.
\begin{lemma}
\label{lemma:sandwitch2}
Assume a width $k$ separated tree decomposition.
For the root bag $B_0\in {\cal B}$, we have
$\LPpdfx{\Phi}{D_0}{\Vec{z}}{S}{T}{0}\le \LPpdfx{\Lambda}{D_0}{\Vec{z}}{S}{T}{M,0} \le \LPpdfx{\Phi}{D_0}{\Vec{z}}{S}{T}{(A_0+1)d_{D_0}^{W_0}}$,
where $A_i$ for $B_i\in{\cal B}$ is the maximum distance from $B_i$ to any leaf
in ${\cal T}(G)$.
\end{lemma}
\begin{proof}
Since the inequality on the left is evident by definition,
we prove the inequality on the right.
We prove that 
$\LPpdfx{\Lambda}{D_i}{\Vec{z}}{S}{T}{M,0} \le \LPpdfx{\Phi}{D_i}{\Vec{z}}{S}{T}{(A_i+1)d_{D_i}^{W_i}}$ from the leaves to the root
by induction on $i$.

As for the base case,  
for any leaf bag $B_c\in {\cal B}$ where $A_c=0$ and $W_c=\emptyset$,
we have
$\LPpdfx{\Lambda}{D_c}{\Vec{z}}{S}{T}{M,0}=\LPpdfx{\Lambda}{G_c}{\Vec{z}}{S}{T}{M,0}\le \LPpdfx{\Phi}{D_c}{\Vec{z}}{S}{T}{d_{D_c}}$ 
by Lemma \ref{lemma:sandwitch1}.

We assume that
$\LPpdfx{\Lambda}{D_j}{\Vec{z}}{S}{T}{M,0}\le \LPpdfx{\Phi}{D_j}{\Vec{z}}{S}{T}{A_id_{D_j}^{W_j}}$ for
$j\in \Ch(i)$ as the induction hypothesis.
Consider the middle part $\LPpdfx{\Lambda}{D_i}{\Vec{z}}{S}{T}{M,0}$
of the claim, which is at most
$\LPpdfx{\Gamma}{D_i}{\Vec{z}}{S}{T}{M,d_{D_i}^{W_i}}$
by Lemma \ref{lemma:horizontal_error2}.
By Proposition~\ref{proposition:switch},
we have
\begin{align*}
  \LPpdfx{\Gamma}{D_i}{\Vec{z}}{S}{T}{M,d_{D_i}^{W_i}}=\int_{\mathbb{R}^{J_i}}(\vecdiff{J_i}\LPpdfx{\Phi}{G_i}{\Vec{z}}{S}{T}{d_{G_i}^{\JS_i}})~(-1)^{|\JT_i|}\LPpdfx{\Lambda}{U_i}{\Vec{z}}{S}{T}{M,d_{U_i}^{Q_i\cup\JT_i}}{\rm d}\Vec{z}[J_i].
\end{align*}
By the induction hypothesis, this is at most
\begin{align*}
  &\int_{\mathbb{R}^{J_i}} (\vecdiff{J_i}\LPpdfx{\Phi}{G_i}{\Vec{z}}{S}{T}{d_{G_i}^{\JS_i}})~(-1)^{|\JT_i|}\LPpdfx{\Phi}{U_i}{\Vec{z}}{S}{T}{A_id_{U_i}^{Q_i}+d_{U_i}^{Q_i\cup\JT_i}}{\rm d}\Vec{z}[J_i],
\end{align*}
which is at most $\LPpdfx{\Phi}{D_i}{\Vec{z}}{S}{T}{(A_i+1)d_{D_i}^{W_i}}$
due to Proposition~\ref{proposition:switch} and
Lemma \ref{lemma:horizontal_error_sum}, implying the lemma.
\end{proof}

Now, we transform the total vertex discount into the horizontal error.
Remember that $b=|{\cal B}|$ is the number of bags in ${\cal T}(G)$.
\begin{lemma}
  \label{lemma:Upperbound}
$\displaystyle \LPpdfx{\Phi}{D_0}{x\Vec{\sigma}}{S}{T}{d_{D_0}^{W_0}}\le \LPpdfx{\Phi}{D_0}{(x+(k+1)bnx/M)\Vec{\sigma}}{S}{T}{0}$
\end{lemma}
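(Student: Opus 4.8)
The plan is to bound the total vertex discount $\hat{\Vec{d}}_0$ accumulated at the root by a single horizontal shift, uniformly over all $s-t$ paths. Recall that $\LPpdfx{\Phi}{D_0}{\Vec{z}}{S|T}{\hat{\Vec{d}}_0}$ is the probability of the event $\LPpdfx{L}{D_0}{\Vec{z}}{S|T}{\hat{\Vec{d}}_0[V]}$, i.e. that for every $\pi\in\Pi(G)$ we have $\sum_{uv\in E(\pi)}\Vec{a}[uv]\Vec{X}[uv]\le \Vec{z}[s(\pi)]-\Vec{z}[t(\pi)]+\sum_{v\in V(\pi)}\hat{\Vec{d}}_0[v]$. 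Since $\LPpdfx{\Phi}{D_0}{\Vec{z}}{S|T}{\Vec{d}[V]}$ is monotonically increasing in every component of $\Vec{d}$, it suffices to show that replacing the vertex-discount term $\sum_{v\in V(\pi)}\hat{\Vec{d}}_0[v]$ by the constant shift $(k+1)nx/M$ (added on the source side, via $x\Vec{\sigma}\mapsto(x+(k+1)nx/M)\Vec{\sigma}$) only relaxes each path constraint. Concretely, I would prove the pointwise bound
\begin{align*}
  \sum_{v\in V(\pi)}\hat{\Vec{d}}_0[v]\le \frac{(k+1)nx}{M}\qquad\text{for every }\pi\in\Pi(G),
\end{align*}
and then the lemma follows because $\Vec{z}=x\Vec{\sigma}$ means $\Vec{z}[s(\pi)]=x$, $\Vec{z}[t(\pi)]=0$, so shifting to $(x+(k+1)nx/M)\Vec{\sigma}$ adds exactly $(k+1)nx/M$ to the right-hand side of every path constraint, which dominates the discount sum.

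The first step is to unwind the recursive definition $\hat{\Vec{d}}_i=\Vec{d}_i\bigmaxmerge_{j\in\Ch(i)}\hat{\Vec{d}}_j$. Because max-merge takes the maximum on overlapping coordinates rather than summing, we get $\hat{\Vec{d}}_0[v]\le\max_{i:\,v\in B_i}\Vec{d}_i[v]$, and in fact for a fixed vertex $v$ only the bags $B_i$ that actually contain $v$ (and, among those with $v\in S_i$) contribute. By definition $\Vec{d}_i[s]=\frac{x}{M}(|I(G_i)|+2)\Vec{g}[s]$ for $s\in S_i$ and $0$ otherwise; at the relevant evaluation point $\Vec{g}[s]=\lceil M\Vec{z}[s]/x\rceil$, and for $\Vec{z}=x\Vec{\sigma}$ this is $1$ for $s\in S(G)$ and $0$ for all other sources-of-bags. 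So $\hat{\Vec{d}}_0[v]\le\frac{x}{M}(|I(G_i)|+2)$ for the bag(s) $B_i$ witnessing $v$, and one bounds $|I(G_i)|\le|B_i|\le k+1$, giving $\hat{\Vec{d}}_0[v]\le\frac{x}{M}(k+3)$ per vertex. Wait — this is too crude to reach $(k+1)nx/M$ after summing over $|V(\pi)|$ vertices; the correct counting must exploit that $\sum_v\hat{\Vec{d}}_0[v]$ over a path $\pi$ is controlled by how the path passes through bag-subgraphs, so the right accounting is: each bag $B_i$ contributes its discount only at its sources that lie on $\pi$, and the number of (vertex, bag) incidences along $\pi$ is what is bounded by $n$, with the per-incidence discount being at most $\frac{x}{M}(k+1)$ after absorbing the "$+2$" into the slack. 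I would therefore argue: summing over $v\in V(\pi)$, $\sum_{v\in V(\pi)}\hat{\Vec{d}}_0[v]\le\sum_{v\in V(\pi)}\frac{x}{M}(k+1)\le\frac{x}{M}(k+1)|V(\pi)|\le\frac{(k+1)nx}{M}$, using $|V(\pi)|\le n$.

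The main obstacle is the careful combinatorial bookkeeping in the middle step: making precise that, despite a path revisiting a bag-subgraph several times (the "folded path" phenomenon of Fig.~\ref{fig:FoldedPath}), and despite the "$|I(G_i)|+2$" factor in $\Vec{d}_i$, the accumulated discount along any $\pi$ is still at most $(k+1)$ per distinct vertex of $\pi$ and hence at most $(k+1)n$ in total (all times $x/M$). This requires (i) showing via the max-merge structure that for each vertex $v$ only one bag's discount survives at $v$, so there is no multiple-counting across bags, and (ii) showing that the factor $(|I(G_i)|+2)$ times the number of sources of $G_i$ on $\pi$, summed over bags, telescopes into a count bounded by $(k+1)$ per vertex — here Proposition~\ref{proposition:separation} and the ancestor-first rule guarantee that the fragments $G_i\cap\pi$ partition $E(\pi)$, so the internal vertices counted in different $|I(G_i)|$ terms are disjoint and their total is at most $|V(\pi)|-1$. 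Once that disjointness is nailed down, the inequality $\sum_{v\in V(\pi)}\hat{\Vec{d}}_0[v]\le(k+1)nx/M$ is immediate, and the lemma follows by monotonicity of $\LPpdfx{\Phi}{D_0}{\cdot}{S|T}{\cdot}$ in the discount vector together with the observation that a source-side shift of $(k+1)nx/M$ relaxes every path constraint by exactly that amount.
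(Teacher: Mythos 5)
Your overall strategy is the same as the paper's: bound $\hat{\Vec{d}}_0[v]$ pointwise, multiply by $|V(\pi)|\le n$, and convert the total path discount into a source-side shift using monotonicity of $\LPpdfx{\Phi}{D_0}{\cdot}{S|T}{\cdot}$ in the discount vector. That skeleton is correct, and your observation that the max-merge prevents double counting across bags (so $\hat{\Vec{d}}_0[v]\le\max_{i}\Vec{d}_i[v]$) is exactly what is needed.

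The gap is in the middle step, and you yourself flag it without closing it. Your direct computation gives $\Vec{d}_i[s]\le\frac{x}{M}(|I(G_i)|+2)\le\frac{x}{M}(k+3)$ via $|I(G_i)|\le|B_i|\le k+1$, which overshoots the target; you then propose to repair this by a telescoping count of internal vertices across the bag-subgraph fragments of $\pi$, ``absorbing the $+2$ into the slack.'' That repair is never made precise and is not needed. The missing observation is much simpler: the lemma is stated for a \emph{separated} tree decomposition, in which $S_i\cap T_i=\emptyset$ for every bag. Whenever $\Vec{d}_i[s]\neq 0$ the bag-subgraph $G_i$ has at least one source and at least one terminal, and these sets are disjoint, so $|I(G_i)|\le|B_i|-|S_i|-|T_i|\le(k+1)-2=k-1$, hence $|I(G_i)|+2\le k+1$ and $\hat{\Vec{d}}_0[v]\le\frac{(k+1)x}{M}$ for every $v$. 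With that, your final chain $\sum_{v\in V(\pi)}\hat{\Vec{d}}_0[v]\le\frac{(k+1)x}{M}|V(\pi)|\le\frac{(k+1)nx}{M}$ goes through with no combinatorial bookkeeping about folded paths at all, which is precisely the paper's (very short) argument. A minor further slip: at $\Vec{z}=x\Vec{\sigma}$ you evaluate $\Vec{g}[s]=\lceil M\Vec{z}[s]/x\rceil$ to $1$ for $s\in S(G)$, but it equals $M$; the factor $\Vec{g}[s]$ in the displayed definition of $\Vec{d}_i$ is inconsistent with the discount actually required in Lemma~\ref{lemma:sandwitch1} (which is $\frac{x}{M}(|I(G_i)|+2)$ with no such factor), so you should treat $\Vec{d}_i[s]=\frac{x}{M}(|I(G_i)|+2)$ rather than invent a value for $\Vec{g}[s]$ that makes the formula come out right.
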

\begin{proof}
By definition, we have 
$d_{D_0}^{W_0}(v)\le \frac{(k+1)x}{M}$ for any $v\in V$.
Therefore, the sum of the vertex discount $(A_0+1)d_{D_0}^{W_0}$
in a path is at most $(k+1)bnx/M$.
We get the claim by adding this value to $x\Vec{\sigma}[\Src{G}]$. 
\end{proof}

Lemma~\ref{lemma:bound} shows the vertical approximation ratio,
which proves Theorem~\ref{th:runningtime} in combination with Observation~\ref{observation:runningtime}.
The arguments are similar to \cite{AK2016}.
\begin{lemma}
  \label{lemma:bound}
  Let $M=\lceil 2(k+1)bmn/\epsilon \rceil$.
  Given a width $k$ separated tree decomposition, we have
    $1\le \LPpdfx{\Lambda}{D_0}{x\Vec{\sigma}}{S}{T}{M,0}/\LPpdfx{\Phi}{D_0}{x\Vec{\sigma}}{S}{T}{0}\le 1+\epsilon$.
\end{lemma}
\begin{proof}
Since the earlier inequality is clear by definition, we prove the latter 
inequality.
Let $\eta=1+(k+1)bn/M$.
Since $\LPpdfx{\Lambda}{D_0}{x\Vec{\sigma}}{S}{T}{M,0}$ is at most
$\LPpdfx{\Phi}{D_0}{x\Vec{\sigma}}{S}{T}{d_{D_0}}$, which is at most
$\LPpdfx{\Phi}{D_0}{\eta x\Vec{\sigma}}{S}{T}{0}$
by Lemma \ref{lemma:sandwitch2} and \ref{lemma:Upperbound}, we have
$\LPpdfx{\Phi}{D_0}{x\Vec{\sigma}}{S}{T}{0}/\LPpdfx{\Lambda}{D_0}{x\Vec{\sigma}}{S}{T}{M,0}\ge \LPpdfx{\Phi}{D_0}{x\Vec{\sigma}}{S}{T}{0}/\LPpdfx{\Phi}{D_0}{\eta x\Vec{\sigma}}{S}{T}{0}$ by considering the reciprocal of the approximation ratio.

We claim that 
$\LPpdfx{\Phi}{D_0}{x\Vec{\sigma}}{S}{T}{0}/\LPpdfx{\Phi}{D_0}{\eta x\Vec{\sigma}}{S}{T}{0}\ge \eta^{-m}= \left(1+\frac{\epsilon}{2m}\right)^{-m}$, 
which we verify as follows.
By definition, $\LPpdfx{\Phi}{D_0}{x\Vec{\sigma}}{S}{T}{0}$ is the volume of $K_G(\Vec{a},x)=\LPpdfx{K'}{G}{x\Vec{\sigma}}{S}{T}{0}$.
Here, scaling $K_G(\Vec{a},x)$ by $\eta$
gives another polytope 
\begin{align*}
  \tilde{K_G}(\Vec{a},x)&\defeq \{\Vec{x} \in \mathbb{R}^{m} | \exists \Vec{y}\in K_G(\Vec{a},x)~~\text{s.t.}~~ \Vec{x}=\eta\Vec{y}\}.
\end{align*}
Then, it is clear that $K_G(\Vec{a},\eta x)\subseteq \tilde{K_G}(\Vec{a},x)$.
Thus,
\begin{align*}
&\frac{\LPpdfx{\Phi}{D_0}{x\Vec{\sigma}}{S}{T}{0}}{\LPpdfx{\Phi}{D_0}{\eta x\Vec{\sigma}}{S}{T}{0}}
  = \frac{{\rm Vol}(K_G(\Vec{a},x))}{{\rm Vol}(K_G(\Vec{a},\eta x))} \ge \frac{{\rm Vol}(K_G(\Vec{a},x))}{{\rm Vol}(\tilde{K_G}(\Vec{a},x))}.
\end{align*}
The rightmost hand side is equal to
$\eta^{-m} \ge \left(1+\frac{\epsilon}{2m}\right)^{-m}$.
Now, we have 
\begin{align*}
  \left(1+\frac{\epsilon}{2m}\right)^{-m}\ge \left(1-\frac{\epsilon}{2m}\right)^{m}\ge 1-\frac{m\epsilon}{2m}=1-\frac{\epsilon}{2}.
\end{align*}
The first inequality is because $((1+\frac{\epsilon}{2m})(1-\frac{\epsilon}{2m}))^m\le 1$.
Then, we have, for $0\le \epsilon\le 1$,
$\LPpdfx{\Phi}{D_0}{\eta x\Vec{\sigma}}{S}{T}{0}/\LPpdfx{\Phi}{D_0}{x\Vec{\sigma}}{S}{T}{0} \le 1/(1-\epsilon/2)\le 1+\epsilon$,
implying the lemma. 
The restriction $\epsilon \le 1$ is not essential. If $\epsilon>1$, we replace $\epsilon$ by $1$.
\end{proof}

\section{Other Edge Length Distributions}
The edge lengths that follow another distribution give
us a different problem.
Assuming well-behaved 
distribution and constant treewidth graph,
we can compute $\Pr[X_{\rm MAX}\le x]$ exactly or approximately.
After some preliminaries,
we present two results. The first result is about the case
the edge lengths follow the standard exponential distribution.
The second result is that 
the edge lengths follow some abstract distributions
to which the Taylor approximation is applicable.
As in the previous section, we first
assume a separated tree decomposition with width $k$.
Then, we replace the treewidth $k$ with $2k+1$ to construct the
separated tree decomposition. 

\subsection{Integration of Step Functions}
We need some preliminaries about
the integrals using step functions.
\begin{definition}
  Let $F(\Vec{x})$ and $\tilde F(\Vec{x})$ be two piecewise continuous functions
  for $\Vec{x}\in \mathbb{R}^k$.
  If $F(\Vec{x})=\tilde F(\Vec{x})$ for $\Vec{x}\in \mathbb{R}^k\setminus I$
  where
  the $k$-dimensional measure of $I$ is 0, then we write
  $F(\Vec{x})\AEEq \tilde F(\Vec{x})$.
  If $F(\Vec{x})\AEEq \tilde F(\Vec{x})$, we say $F(\Vec{x})$ and $\tilde F(\Vec{x})$
  are {\em equal almost everywhere}.
\end{definition}
We use step function $H(x)$ for describing the function with cases.
In case $F(x)=1$ for any $x\in \mathbb{R}$,
we have $F(x)\AEEq H(x)+H(-x)$ where the equality does not hold for $x=0$.
Though we may have some wrong values at the breakpoints,
they do not matter 
when executing another integral with $F(x)$ as a factor of the integrand.
In the following, we use $H(x)$ in describing the functions with cases.

Executing an integral of a step function
needs some attention. 
Let $f(x)$ be a function of $x\in \mathbb{R}$ such that
$\lim_{x\rightarrow -\infty} f(x)=0$
and $F(x)=\int_{-\infty}^{x}f(x){\rm d}x$ converges to a finite value.
For any $\Vec{y},\Vec{z}\in \mathbb{R}^{n}$, we have
\begin{align*}
  \int_{\mathbb{R}}\left(\prod_{i=1,\dots,k}\!\!\!H(x-\Vec{y}[i])\!\!\!\prod_{i=1,\dots,k}\!\!\!H(\Vec{z}[i]-x)\right)\!f(x){\rm d}x=H(z-y)(F(z)-F(y)),
\end{align*}
where $z=\min_{1\le i \le k}\{\Vec{z}[i]\}$, $y=\max_{1\le i \le k}\{\Vec{y}[i]\}$.
For convenience, we replace
$H(z-y)$, $F(z)$, and $F(y)$
by a sum of $k!$ terms of step function products.
That is, $F(z)=P(\Vec{z},1)$ and $F(y)=P(\Vec{y},k)$ using 
$P(\Vec{x},i)\AEEq \sum_{\Vec{p} \in \mathrm{Perm}(k)}F(\Vec{x}[p_i])\prod_{1\le j \le k-1}H(\Vec{x}[p_{j+1}]-\Vec{x}[p_j])$.
Here, $\mathrm{Perm}(k)$ is the set of all $k!$ permutations of
${1,\dots,k}$.
Each element $\Vec{p}$ of $\mathrm{Perm}(k)$ is 
a $k$-tuple $\Vec{p}=(p_1,\dots,p_k)$.
Though the right hand side
may have wrong values at points where
any two variables are equal (e.g., $\Vec{z}[p_j]=\Vec{z}[p_{j+1}]$ or
$\Vec{y}[p_j]=\Vec{y}[p_{j+1}]$),
we can ignore the wrong values.
If $F(\Vec{z})$ is well-behaved like
exponential distribution function, we recover
the values $F(\Vec{z})$ at any point $\Vec{z}$
where $\Vec{z}[p_i]=\Vec{z}[p_{i+1}]$.
That is, we may compute the value of 
$\lim_{t\rightarrow +0} F(\Vec{z}+t\Vec{h})$,
as long as we have an expression of $F(\Vec{z})$
satisfying that $F(\Vec{z})$ is
exact in a continuous region including $\Vec{z}+t\Vec{h}$
and that there is no breakpoint of
step function factors between $\Vec{z}$ and $\Vec{z}+\Vec{h}$.

\subsection{Exact Computation for Exponentially Distributed Edge Lengths}
Here, we consider the case where the edge lengths are mutually
independent and follow the standard exponential distribution.
By the direct computation of the integrals, we obtain 
$\Pr[X_{\rm MAX}\le x]$ exactly.
The following is the definition of the standard exponential distribution.
\begin{definition}
  If random edge length $X$ satisfies
  $\Pr[X\le x]=H(x)(1-e^{-x})$,
  we say $X$ follows the {\em standard exponential distribution}.
\end{definition}
Assuming that we can compute Napier's constant $e$
and its power in $O(1)$ time, we have the following theorem.
\begin{theorem}
\label{th:exponential_distribution}
Let $G=(V,E)$ be a DAG with treewidth at most $k$.
Assume that the edge lengths are mutually independent and
follow the standard exponential distribution.
There is an algorithm that exactly computes
$\Pr[X_{\rm MAX}\le x]$ in $((4k+2)mn)^{O(k)}$ time.
\end{theorem}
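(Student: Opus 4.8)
The plan is to compute $\Pr[X_{\rm MAX}\le x]=\LPpdf{\Phi}{D_0}{x\Vec{\sigma}}{S|T}$ \emph{exactly} by a bottom-up dynamic programming over the binary separated tree decomposition (width $\le 3k+2$, obtained from a width-$k$ tree decomposition via Proposition~\ref{proposition:binary} and the $v\mapsto v^-,v^\ast,v^+$ transformation described above), maintaining at each bag $B_i$ an \emph{exact symbolic} representation of $\LPpdf{\Phi}{D_i}{\Vec{z}}{S|T}$. Restricted, as in the algorithm's $\Vec{w}$-substitution, to the at most $3k+3$ ``active'' boundary coordinates $\Vec{z}[(S(D_i)\cup T(D_i))\cap B_h]$ (the remaining source/terminal coordinates being frozen to $x$ and $0$, which by Proposition~\ref{proposition:separation} will never reappear higher in the tree), this representation is a piecewise function whose pieces are the cells of a hyperplane arrangement in which every hyperplane has the form $z_u-z_v=0$, $z_u=x$, or $z_u=0$, and which, on each cell, is a finite sum of terms $(\text{polynomial in the active }z\text{'s})\cdot\exp\!\big(-\sum_v c_v z_v\big)$ with nonnegative integer coefficients $c_v$ of size $O(k)$. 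This class is closed under the operations we need, and all operations are carried out with the step-function integration calculus of Section~4.1 (in particular the ``equality almost everywhere'' expansions $P(\Vec{x},i)$ and the rule for $\int H(\cdots)z^j e^{-az}\,{\rm d}z$), so the repeated definite integrals of Theorem~\ref{th:multiplesourceterminal} and Lemma~\ref{lemma:phi_increment} again yield functions of the same form; powers of $e$ cost $O(1)$ by assumption.

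First I would compute $\LPpdf{\phi}{G_i}{\Vec{z}}{S|T}$ directly for each bag. Since the edge lengths obey the standard exponential distribution, $F_{uv}(z_u-z_v)=H(z_u-z_v)\bigl(1-e^{-(z_u-z_v)}\bigr)$; forming $\prod_{v\in\Suc(u)}F_{uv}$, applying $\partial/\partial z_u$, multiplying over $u\in S_i\cup I(G_i)$ and integrating over $\Vec{z}[I(G_i)]$ exactly as in Theorem~\ref{th:multiplesourceterminal} is a finite symbolic computation inside the fixed-size graph $G_i^\ast$ (at most $3k+3$ vertices, at most $\binom{3k+3}{2}$ edges). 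Each of the $\le 3k+3$ one-dimensional integrations refines the cell structure by a factor bounded by a function of $k$, raises the polynomial degree by at most one, and keeps every exponential linear form a subset-sum of the $z_u-z_v$ (so its coefficients stay $O(k)$); hence $\LPpdf{\phi}{G_i}{\Vec{z}}{S|T}$ has a representation of size bounded by a function of $k$ alone. I would then merge children into parents via Lemma~\ref{lemma:phi_increment} and Proposition~\ref{proposition:switch}: $\LPpdf{\Phi}{U_i}{\Vec{z}}{S|T}=\prod_{j\in\Ch(i)}\LPpdf{\Phi}{D_j}{\Vec{z}}{S|T}$ is a product of at most two already-computed functions; one takes $\vecdiff{S(U_i)}$ of it, multiplies by $\LPpdf{\phi}{G_i}{\Vec{z}}{S|T}$, and integrates over $\Vec{z}[J_i]$ with $|J_i|\le 3k+3$. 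By Proposition~\ref{proposition:switch} these are again integrals of step functions times exponential polynomials, producing $\LPpdf{\Phi}{D_i}{\Vec{z}}{S|T}$ in the same symbolic form. Running this over all $b=O(n)$ bags from the leaves to the root and evaluating $\LPpdf{\Phi}{D_0}{x\Vec{\sigma}}{S|T}$ gives the exact answer.

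The main obstacle is the size analysis, i.e.\ showing the symbolic representation never explodes, and it has two parts. (i) The \emph{number of pieces}: because the separated tree decomposition guarantees that the merge at $B_i$ involves only the $\le 3k+3$ active boundary coordinates (Propositions~\ref{proposition:separation}, \ref{proposition:lrSSTT} and~\ref{proposition:newinternalvertex}), the arrangement lives in dimension $\le 3k+3$, so the piece count stays bounded by a function of $k$ at \emph{every} bag and does \emph{not} accumulate with $n$ --- this is the crucial point that rules out the doubly-exponential blow-up one would naively fear from the $O(n)$ products. (ii) \emph{Within} a piece, the polynomial degree grows only additively along a root-to-leaf chain of convolution-type integrations, hence is $O(m)$, while the exponential factors $e^{-(z_u-z_v)}$ merely telescope along paths and add across the $\le 2$ children, so their coefficients stay $O(n)$; thus the number of distinct (exponent, monomial) pairs per piece and the bit-sizes of the rational coefficients are $\poly(m,n)$. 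Combining (i) and (ii), each $\LPpdf{\Phi}{D_i}{\Vec{z}}{S|T}$ has representation size $((6k+4)mn)^{O(k)}$, each of the $O(n)$ merges costs time polynomial in that size, and the total running time is $((6k+4)mn)^{O(k)}$. I expect the delicate step to be (i): proving that the arrangement is generated only by the $z_u-z_v=0$, $z_u=x$, $z_u=0$ hyperplanes and that the integrations create no spurious breakpoints that could compound across bags.
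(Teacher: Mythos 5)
Your proposal is correct and follows essentially the same route as the paper's proof: a bottom-up symbolic dynamic program over the separated binary tree decomposition, representing each $\LPpdf{\phi}{D_i}{\Vec{z}}{S|T}$ as a sum of step-function products (at most $(k+1)!$ of them, since the breakpoints come only from permutations of the $O(k)$ active bag variables) times polynomial--exponential terms, with the per-piece term count bounded by a polynomial in $m,n$ raised to $O(k)$. The only slip is cosmetic: the paper bounds the polynomial degree by $|V(D_i)|=O(n)$ and the exponential coefficients by $|E(D_i)|=O(m)$ (you have the two roles swapped), which does not affect the final $((6k+4)mn)^{O(k)}$ bound.
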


Our algorithm is the following. We first
compute $\LPpdf{\phi}{G_i}{\Vec{z}}{S}{T}$ exactly for each bag $B_i\in {\cal B}$
by Theorem \ref{th:multiplesourceterminal}.
Then, we merge $\LPpdf{\phi}{G_i}{\Vec{z}}{S}{T}$'s starting
from the leaves to the root so that we have
$\LPpdf{\phi}{D_i}{\Vec{z}}{S}{T}$ for each $B_i$
by Lemma \ref{lemma:phi_increment}.
In the ongoing computation, we transform the integrand 
into a sum of terms where each term
is a product of the powers of $z_i$'s
and the exponential functions of $z_i$'s.
Thus, we can store the resulting form of the ongoing computation
in a $2k$-dimensional array.

To save some memory space,
we consider $\Vec{w}$ defined as follows.
Let $B_h$ be the parent of $B_i$.
In executing Step 08,
we put $\Vec{w}[v]=\Vec{z}[v]$ for $v\in (\Src{D_i}\cup \Term{D_i})\cap B_h$,
$\Vec{w}[s]=x$ for $s\in \Src{D_i}\setminus B_h$, and
$\Vec{w}[t]=0$ for $t\in \Term{D_i}\setminus B_h$.

The following is the algorithm for computing the value of
$\Pr[X_{\rm MAX}\le x]$.
\begin{algorithm}
  Exact-Exponential($G,{\cal T}(G), x$):\\
  Input: DAG $G$, binary tree decomposition ${\cal T}(G)$ of $G$ with bags $B_0,\dots,B_{b-1}$,\\
  \hspace*{1cm} and longest path length $x\in \mathbb{R}$;\\
  Output: Value of $\Pr[X_{\rm MAX}\le x]$;\\
  01. For each $B_i\in {\cal B}$ do:\\
  02. \hspace*{2mm} Compute $\LPpdf{\phi}{G_i}{\Vec{z}}{S}{T}$ by Theorem \ref{th:multiplesourceterminal};\\
  03. done;\\
  04. For each leaf bag $B_c\in {\cal B}$ of ${\cal T}(G)$ do:\\
  05. \hspace*{2mm} Set $\LPpdf{\phi}{D_c}{\Vec{w}}{S}{T}:=\LPpdf{\phi}{G_c}{\Vec{w}}{S}{T}$;\\
  06. done;\\
  07. For each bag $B_i\in {\cal B}$ from the leaves to the root $B_0$ do:\\
  08. \hspace*{2mm} Compute $\LPpdf{\phi}{D_i}{\Vec{w}}{S}{T}$ by (\ref{form:phiDi}) of Lemma~\ref{lemma:phi_increment};\\
  09. done;\\
  10. Compute and Output $\LPpdf{\Phi}{D_0}{x\Vec{\sigma}}{S}{T}$.
\end{algorithm}

The proof of Theorem \ref{th:exponential_distribution} is by 
direct estimation of the amount of space and time for storing and
computing $\LPpdf{\phi}{G_i}{\Vec{z}}{S}{T}$ and $\LPpdf{\phi}{D_i}{\Vec{z}}{S}{T}$.
Since we consider the time and space for generating
and storing the ongoing computation in the proof,
we first briefly describe what we consider is a form.
\begin{definition}
\label{definition:form}
We consider the following as a unit of 
a form: a real number, a variable and its power, a step function,
a Dirac's delta function, and an exponential function.
What we call a form is a unit in the above or 
a combination of these units by arithmetic operations:
addition, subtraction, multiplication, and division.
Each arithmetic operation has 
a unit or a parenthesized form on its sides.
The sum of products is a form where the parenthesized forms in it
consist of a product of some units.
A term in a form is a group of units connected continuously
by multiplications or divisions.
A coefficient of a term is a number that is obtained
by executing all multiplications 
and the division of numbers in the term.
\end{definition}

\begin{proof} (of Theorem \ref{th:exponential_distribution})
To estimate the algorithm's running time,
we first bound the number of terms
that appear in the form of $\LPpdf{\phi}{G_i}{\Vec{z}}{S}{T}$.
In a large part the proof, we assume
a separated tree decomposition with width $k$.
Then, the number of terms is bounded by using
the number of variables, which is at most $k$.
At the end of the proof, we replace the treewidth with
$2k+1$ to construct a separated tree decomposition.
We note that
\begin{align*}
  \LPpdf{\phi}{G_i}{\Vec{z}}{S}{T}\!=\!\!\int_{\mathbb{R}^{\internal{G_i}}}\prod_{u\in \internal{G_i}}\frac{\partial}{\partial z_u}\prod_{v\in\Suc(u)}\hspace*{-3mm}H(z_u-z_v)(1-e^{-(z_u-z_v)}){\rm d}\Vec{z}[\internal{G_i}].
\end{align*}
Since the standard exponential density function is $H(x)e^{-x}$,
all formulas that appear in the ongoing computation 
are the combinations of the step function $H(x)$,
the polynomial, 
and the exponential function.

Let us consider the step function factors in the integral.
Though one integral using step functions can produce many
cases in the resulting form, we break down
each of these cases as a sum of step function products.
Since the argument of each step function is a difference of
two dummy variables like $z_u-z_v$,
we can cover all cases by considering
all possible permutations of $z_v$'s for $v\in B_i$.
Thus, combinations of step function factors are
at most $(k+1)!$ for $\LPpdf{\phi}{G_i}{\Vec{z}}{S}{T}$.

We show that at most $n^{k+1}(2m+1)^{k+1}$
polynomial-exponential product terms may 
appear in $\LPpdf{\phi}{G_i}{\Vec{z}}{S}{T}$ for each permutation of
$z_v$'s for $v\in B_i$.
Observe that each term is a product of $z_v^{\alpha_v}$ and
$\exp(\beta_vz_v)$, where $\alpha_v$'s and $\beta_v$'s are integers.
We expand the integrand in (\ref{form:phiDi}) of
Lemma \ref{lemma:phi_increment}
into the sum of products.
Then, the degree $\beta_v$ of 
$\exp(z_v)$ can either increase or decrease at most by one
in taking a product of two distribution functions, and hence
$\beta_v$'s are integers between $-k^2$ and $k^2$.
Consider executing the indefinite integral
$\int z_v^{\alpha_v}e^{\beta_vz_v}{\rm d}z_v$.
In case $\beta_v=0$, $\int z_v^{\alpha_v}{\rm d}z_v=z_v^{\alpha_v+1}/(\alpha_v+1)+c$, where we may set $c=0$ here.
In case $\beta_v\neq 0$, integration by parts leads to
$\int z_v^{\alpha_v}e^{\beta_vz_v}{\rm d}z_v=\sum_{i=0}^{\alpha_v}(-1)^i(\alpha_v!/(\alpha_v-i)!)\beta_v^{-i-1}z_v^{\alpha_v-i}e^{\beta_v z_v}$.
The degree $\alpha_v$ of $z_v$ that appears 
in the terms in the formula of $\LPpdf{\phi}{G_i}{\Vec{z}}{S}{T}$
can increase at most by one in one integral
(in case $\beta_v=0$), and hence $\alpha_v$'s are
at most $k$ non-negative integers.
Now, in expanding the resulting form into the sum of products,
we have at most $(k+1)^k(2k^2+1)^k$ terms
for each permutation of the at most $k+1$ argument variables
in the representation of $\LPpdf{\phi}{G_i}{\Vec{z}}{S}{T}$.
Therefore, an array of
$O((k+1)!(k+1)^k(2k^2+1)^k)\le k^{O(k)}$ real numbers
is sufficient for the coefficients of
the terms in $\LPpdf{\phi}{G_i}{\Vec{z}}{S}{T})$.

By a similar argument, we bound the number of array elements
to store $\LPpdf{\phi}{D_i}{\Vec{z}}{S}{T}$.
Since the subtree graph is connected to the rest of the other parts
only by the vertices at the root of the subtree,
we have at most $k+1$ variables ($z_v$ for $v \in S_i$)
in $\LPpdf{\phi}{D_i}{\Vec{z}}{S}{T}$.
Similarly, as $\LPpdf{\phi}{G_i}{\Vec{z}}{S}{T}$,
we expand $\LPpdf{\phi}{D_i}{\Vec{z}}{S}{T}$ into 
the sum of products of step functions and polynomials and exponential
functions.
Here, $\LPpdf{\phi}{D_i}{\Vec{z}}{S}{T}$ is different from 
$\LPpdf{\phi}{G_i}{\Vec{z}}{S}{T}$ in that 
$0\le \alpha_v\le |\vertex{D_i}|$ and $|\beta_v|\le |\edge{D_i}|$.
Therefore, we can store $\LPpdf{\phi}{D_i}{\Vec{z}}{S}{T}$
in an array of real numbers with
$(k+1)! n^k(2m+1)^k\le (k+1)! (2mn)^{O(k)}$ elements.

Then, we bound the time to compute 
$\LPpdf{\phi}{D_i}{\Vec{z}}{S}{T}$ from $\LPpdf{\phi}{G_i}{\Vec{z}}{S}{T}$ and
$\LPpdf{\phi}{D_j}{\Vec{z}}{S}{T}$'s for $j\in \Ch(i)$.
Expanding the integrand into the sum of products
takes running time proportional to
the array size for $\LPpdf{\phi}{G_i}{\Vec{z}}{S}{T}$ multiplied
by the array size for 
$\LPpdf{\phi}{D_j}{\Vec{z}}{S}{T}~(j\in \Ch(i))$.
Since $|\Ch(i)|\le 2$ by our assumption,
the running time
of expanding the integrand is at most $k^{O(k)} (2mn)^{O(k)}$.
Since we can compute the antiderivative of $z_v^{\alpha}\exp(\beta z_v)$
in $O(\alpha)$ time for positive integer $\alpha \le n$ 
using integration by parts,
the time to integrate each term in $\LPpdf{\phi}{D_i}{\Vec{z}}{S}{T}$ is 
$k^{O(k)}O(n)$.
We execute the integration by
computing the integral of every possible term.
The running time for computing $\LPpdf{\phi}{D_i}{\Vec{z}}{S}{T}$
from $\LPpdf{\phi}{G_i}{\Vec{z}}{S}{T}$ and
$\LPpdf{\phi}{D_j}{\Vec{z}}{S}{T}$'s for $j\in \Ch(i)$ is
$k^{O(k)}O(n) (2mn)^{O(k)}\le (2kmn)^{O(k)}$.
Now, we repeat this computation for all $B_i\in {\cal B}$.

We obtain the running time in the claim by replacing the treewidth $k$
by $2k+1$ to construct a separated tree decomposition.
In this construction, though a delta function appears for each
vertex $v\in \vertex{G}$, the total running time to process
a delta function for each $v$
is linear to the length of an ongoing computation,
which does not change the $O$-notation.
To verify this, remember that
$v^+$ has only one outgoing edge $v^+v^-$ with static length $0$.
Then, we have $\delta(\Vec{z}[v^+]-\Vec{z}[v^-])$ for edge $v^+v^-$.
Since $v^+$ has exactly one outgoing edge,
we can always execute the integral with
respect to $\Vec{z}[v^+]$ ending up with that we replace $\Vec{z}[v^+]$
by $\Vec{z}[v^-]$ in the integrand.
\end{proof}

\subsection{Taylor Approximation of the Edge Length Distribution Function}
If we can use the Taylor approximation, 
we have an approximation scheme for $\Pr[X_{\rm MAX}\le x]$
with {\em additive} error $\epsilon'$.
Now we define the distribution function $F_{ij}(x)$ implicitly.
As a part of the input, we assume that we have an oracle that computes
$F_{uv}(x)$ for each edge $uv\in E$.
Let $\tau$ 
be a parameter of our algorithm.
We also assume that the oracle computes the exact value of
$F_{uv}^{(d)}(t)$ for $d\in {0,1,\dots,\tau}$ 
and $t\in [0,x]$ in $O(1)$ time, where $F_{uv}^{(d)}(x)$ is
the derivative of order $d$ of $F_{uv}(x)$.
For $F_{uv}(x)=\Pr[X_{uv}\le x]$, we assume following three conditions:
\begin{itemize}
\item The length $X_{uv}$ of each edge $uv\in E$ is nonnegative;
\item For any edge $uv\in E$,
  the distribution function $F_{uv}(x)$ satisfies that $|F^{(d)}_{uv}(t)|\le 1$ for any $d\in\{0,1,\dots,\tau\}$ 
  and $t\in [0,x]$;
\item The Taylor series of $F_{uv}(x)$ converges to $F_{uv}(x)$ itself.
\end{itemize}
In addition to Definition~\ref{definition:form}, the edge length distribution
function and its derivatives are also the units of the form.

As its input, our approximation algorithm takes $\epsilon'~~(0<\epsilon'\le 1)$ and $x>0$ as well as the DAG $G$ and its tree decomposition ${\cal T}(G)$.
Our algorithm outputs a polynomial $\tilde F_{\rm MAX}(x)$
that approximates $\Pr[X_{\rm MAX}\le x]$.
The constraint $\epsilon'\le 1$ is because $\Pr[X_{\rm MAX}\le x]\le 1$
by definition.
We have the following theorem.
\begin{theorem}
  \label{th:TaylorTreeWidth}
  Let $G$ be a DAG whose underlying undirected graph has
  treewidth at most $k$.
  Given the binary tree decomposition ${\cal T}(G)$ with $b=O(n)$ bags and
  the distribution function $F_{uv}(x)$ of the lengths of 
  each edge $uv\in E$ described as above.
  There exists an algorithm that computes $\tilde F_{\rm MAX}(x)$
  in time $O(n(2k+1)^{O(k)}\tau^{O(k^2)})$ satisfying
  $|\tilde F_{\rm MAX}(x) -\Pr[X_{\rm MAX}\le x]|\le \epsilon'$,
  where
  $\tau=\lceil(e^2+1)(2k+2)x+ 2\ln b +\ln (1/\epsilon')\rceil+1$.
\end{theorem}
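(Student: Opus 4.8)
The plan is to follow the same three-phase template used for the exponential case (Theorem~\ref{th:exponential_distribution}) and the uniform case (Theorem~\ref{th:runningtime}), but replacing the exact algebraic representation of $\LPpdf{\phi}{G_i}{\Vec{z}}{S|T}$ by a \emph{truncated Taylor polynomial} representation and carrying an additive error term. First I would fix the degree parameter $\tau=\lceil(e^2+1)(3k+3)x+2\ln b+\ln(1/\epsilon')\rceil+1$ and observe that, by the third hypothesis on $F_{uv}$, replacing $F_{uv}(t)$ on $t\in[0,x]$ by its degree-$\tau$ Taylor polynomial $\tilde F_{uv}(t)=\sum_{d=0}^{\tau}F_{uv}^{(d)}(0)t^d/d!$ incurs an error bounded (by the second hypothesis, $|F_{uv}^{(d)}|\le 1$, and the Lagrange remainder) by roughly $x^{\tau+1}/(\tau+1)!$, which is at most $\e^{-(3k+3)x}\epsilon'/b^2$ for the stated $\tau$ (here one uses the crude bound $(\tau+1)!\ge((\tau+1)/\e)^{\tau+1}$ and $\tau+1\ge\e^2(3k+3)x+\ldots$). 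Then I would run the same pipeline as \textbf{Algorithm~2}: compute $\LPpdf{\phi}{G_i}{\Vec{z}}{S|T}$ for each bag via Theorem~\ref{th:multiplesourceterminal} using the polynomial surrogates $\tilde F_{uv}$, then glue via Lemma~\ref{lemma:phi_increment} from the leaves to the root, tracking that every intermediate $\LPpdf{\phi}{D_i}{\Vec{z}}{S|T}$ and $\LPpdf{\phi}{G_i}{\Vec{z}}{S|T}$ is a sum of products of Heaviside step functions (one per ordered pair of the at most $k+1$ variables in a bag, hence at most $(k+1)!$ permutation cases) and multivariate monomials in the $z_v$'s.

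\textbf{The degree-bookkeeping step.} After replacing the $F_{uv}$'s by degree-$\tau$ polynomials, each factor $\prod_{v\in\Suc(u)}\tilde F_{uv}(z_u-z_v)$ in Theorem~\ref{th:multiplesourceterminal} is a polynomial of degree $\le(k+1)\tau$ in the $z$-variables; taking the $\frac{\partial}{\partial z_u}$ derivatives and then integrating $|I(G_i)|\le k+1$ times with respect to the internal variables can only raise the total degree by a bounded amount, so every monomial in $\LPpdf{\phi}{G_i}{\Vec{z}}{S|T}$ has degree $O(k\tau)$ in each of its $\le k+1$ variables. Hence, as in the exponential proof, the number of monomial terms per permutation is $(O(k\tau))^{k+1}$, and $\LPpdf{\phi}{G_i}{\Vec{z}}{S|T}$ is stored in an array of $(k+1)!\,(O(k\tau))^{k+1}=(3k+2)^{O(k)}\tau^{O(k)}$ coefficients (after the blow-up to width $3k+2$). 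The merge of Lemma~\ref{lemma:phi_increment} is again a product of two such arrays followed by $|J_i|\le k+1$ integrations; since the degrees now accumulate along root paths, in $\LPpdf{\phi}{D_i}{\Vec{z}}{S|T}$ each monomial has degree $O(n\tau)$, but the number of \emph{variables} stays $\le k+1$ (only the source variables of $D_i$ survive), so each such array has $(k+1)!(O(n\tau))^{k+1}$ entries; the cost of one merge is the product of the two child array sizes times the per-term integration cost $O(n\tau)$, which is $(3k+2)^{O(k^2)}\tau^{O(k^2)}$ once we multiply out the nested exponents. Multiplying by $b=O(n)$ bags yields the claimed $O(n(3k+2)^{O(k^2)}\tau^{O(k^2)})$ running time.

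\textbf{The error-propagation step (the main obstacle).} The subtle part is proving that the additive errors do not amplify when multiplied and integrated. Here I would argue as follows: at each bag-subgraph, write $\LPpdf{\phi}{G_i}{\Vec{z}}{S|T}=\tilde\phi_i+\rho_i$ where $\tilde\phi_i$ is the computed surrogate and $\int_{(-\infty,\Vec{z}[S_i])}|\rho_i|\le\varepsilon_i$ with $\varepsilon_i\le|E(G_i)|\,\e^{-(3k+3)x}\epsilon'/b^2$ from the per-edge Taylor bound together with the fact that $F_{uv}\le 1$ everywhere so the "other" factors in a product of distribution functions are bounded by $1$ and integrate over a region of the relevant longest-path-length variable of measure $\le x$ — i.e. a telescoping/union-bound argument over the $\le|E(G_i)|\le k(k+1)$ edges. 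When we glue via Lemma~\ref{lemma:phi_increment}, the convolution of two functions each $\le 1$ in the relevant cumulative sense preserves an additive error bounded by the sum of the two errors times the $\le(k+1)$-dimensional integration volume, which is where the factor $\e^{(3k+3)x}$ and the $\ln b$ and $2\ln b$ in $\tau$ are spent: over a root-to-leaf path of length $O(n)$ in the binary tree one gets $\sum_i\varepsilon_i\le b\cdot k(k+1)\e^{-(3k+3)x}\epsilon'/b^2\cdot(\text{volume factor }\le\e^{(3k+3)x})\le\epsilon'$, using that the total number of bags is $b=O(n)$ and that the volume of the relevant integration domain (paths have length $\le x$ over $\le(3k+3)$ vertices per bag, $n$ vertices total) is absorbed by $\e^{(3k+3)x}$. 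Finishing with the identity $\Pr[X_{\rm MAX}\le x]=\LPpdf{\Phi}{D_0}{x\Vec{\sigma}}{S|T}=\int_{(-\infty,x)^{S(G)}}\LPpdf{\phi}{D_0}{x\Vec{\sigma}}{S|T}\,{\rm d}\Vec{z}[S(G)]$ and integrating the surrogate $\tilde\phi_0$ instead yields $\tilde F_{\rm MAX}(x)$ with $|\tilde F_{\rm MAX}(x)-\Pr[X_{\rm MAX}\le x]|\le\varepsilon_0\le\epsilon'$. The delicate calibration is exactly matching the three additive contributions to $\tau$ — $(e^2+1)(3k+3)x$ for the factorial-versus-power Taylor bound, $2\ln b$ for the two-child merges over $O(n)$ bags, and $\ln(1/\epsilon')$ for the target accuracy — so that the crude inequality $x^{\tau+1}/(\tau+1)!\cdot(\text{all amplification factors})\le\epsilon'$ closes; I expect this calibration to be the bulk of the technical work, while the degree/running-time counting is a routine adaptation of the exponential-case proof.
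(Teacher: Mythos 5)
Your overall template is right (Taylor surrogates fed through Theorem~\ref{th:multiplesourceterminal} and Lemma~\ref{lemma:phi_increment}, leaves to root, with a separated tree decomposition of width $3k+2$), but you are missing the one idea the algorithm actually depends on: \emph{re-truncation}. In the paper's Approx-Taylor, Step 09 recomputes the order-$\tau$ Taylor approximation of the resulting form after \emph{every} merge, so the polynomial degree stays at $\tau$ in at most $k+1$ variables throughout, and every array has $(k+1)!\,\tau^{k+1}$ entries independent of $n$. You instead let the degrees ``accumulate along root paths'' to $O(n\tau)$; then each array has $(k+1)!\,(O(n\tau))^{k+1}$ entries, a merge costs at least $n^{2k+2}$, and the claimed running time $O(n(3k+2)^{O(k^2)}\tau^{O(k^2)})$ --- linear in $n$ apart from the $\ln b$ hidden in $\tau$ --- is unreachable. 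Your assertion that the merge cost is ``$(3k+2)^{O(k^2)}\tau^{O(k^2)}$ once we multiply out the nested exponents'' does not follow from your own degree bookkeeping.

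The omission of re-truncation also restructures the error analysis. Because the paper truncates at every bag, each of the $b$ merges contributes a fresh Taylor remainder $R(k,\tau,x)=((k+1)x)^{\tau+1}/(\tau+1)!$ (Proposition~\ref{proposition:Taylor_error}, applied to the \emph{multivariate} integrand, not to the individual $F_{uv}$ as you do), and the recursion $\epsilon'_i\le\epsilon'_\ell+2\epsilon'_r+(4x^{k+1}+1)R(k,\tau,x)$ --- obtained by expanding $(\xi+\delta_i)\prod_j(\Phi_{D_j}+\epsilon'_j)$ and using $\Phi_{D_j}\le 1$, $\int\xi\le 1$ via Proposition~\ref{proposition:switch} --- amplifies by a factor $3$ per level, giving $\epsilon'_0\le b^2(4x^{k+1}+1)R(k,\tau,x)$; this $b^2$ and the $(k+1)\ln(x+1)$-type term are exactly what the $2\ln b$ and part of the $x$-term in $\tau$ pay for. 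Your purely additive union bound over bags, and your ``$\e^{(3k+3)x}$ volume factor'' explanation of the $(e^2+1)(3k+3)x$ term, do not match this mechanism: that term comes from Stirling's bound $(\tau+1)!\ge((\tau+1)/\e)^{\tau+1}$, which needs $\tau+1\ge \e^2(3k+3)x$ to make $((3k+3)x)^{\tau+1}/(\tau+1)!\le \e^{-(\tau+1)}$, not from the measure of the integration domain (the paper's domain factor is only $x^{k+1}$ per merge). To repair the proof, introduce the per-step truncation, charge each step its own remainder (B), and track the multiplicative error propagation through the binary tree; the rest of your counting then goes through essentially as in the exponential case.
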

Since it is easy to see that $F_{\rm MAX}(x)=0$ for $x\le 0$,
we concentrate on the case $x \ge 0$.
By the {\em Taylor approximation of} $f(\Vec{z})$, 
we mean the Taylor approximation generated by 
$f(\Vec{z})$ at the origin $\Vec{z}=\Vec{0}$.
The following is what we call the Taylor approximation (e.g., \cite{CJ1989}).
\begin{definition}
  Let $\Vec{z}=(z_1,\dots,z_k)$.
  Let $F(\Vec{z})$ be a $k$-variable function.
  The order $\tau$ 
  Taylor approximation of $F(\Vec{z})$ is
  $\tilde F(\Vec{z})\defeq \sum_{i=0}^\tau \frac{1}{i!}\left(\sum_{j=1}^k z_j\frac{\partial}{\partial z_j}\right)^iF(\Vec{0})$.
\end{definition}
Here, $\tilde F(\Vec{z})$ approximates $F(\Vec{z})$ well for
sufficiently large (but not very large) $\tau$. 
It is well known that 
the additive error of the approximation has the following bound.
\begin{proposition}
  \label{proposition:Taylor_error}
  Let $\Vec{z}=(z_1,\dots,z_k)\in \mathbb{R}^k$.
  For a $k$-variable function $F(\Vec{z})$
  satisfying
  $\left|\left(\sum_{j=1}^k\frac{\partial}{\partial z_j}\right)^{\tau+1}F(\Vec{z})\right|\le k^{\tau+1}$,
  for any $\Vec{0}\le \Vec{z}\le x\Vec{1}$,
  we have
  $|\tilde F(\Vec{z})-F(\Vec{z})| \le \frac{(kx)^{\tau+1}}{(\tau+1)!}$.
\end{proposition}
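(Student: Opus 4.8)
The plan is to reduce this multivariate estimate to the classical one-variable Taylor remainder bound by restricting $F$ to the segment joining $\Vec{0}$ to $\Vec{z}$. Fix $\Vec{z}$ with $\Vec{0}\le\Vec{z}\le x\Vec{1}$ and set $g(t)\defeq F(t\Vec{z})$ for $t\in[0,1]$. By the chain rule, $g^{(i)}(t)=\bigl(\sum_{j=1}^{k}z_{j}\frac{\partial}{\partial z_{j}}\bigr)^{i}F(t\Vec{z})$ for every $i\ge 0$; evaluating at $t=0$ and comparing with the definition of the order-$\tau$ Taylor approximation shows $\sum_{i=0}^{\tau}\frac{g^{(i)}(0)}{i!}=\tilde F(\Vec{z})$. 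In other words, the multivariate Taylor polynomial of $F$ at the origin, evaluated at $\Vec{z}$, is exactly the one-variable Taylor polynomial of $g$ at $0$, evaluated at $1$.

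Next I would apply the single-variable Taylor theorem with the Lagrange form of the remainder to $g$ on $[0,1]$; this is legitimate because the hypothesis on the order-$(\tau+1)$ derivative already presupposes that $F$ is of class $C^{\tau+1}$ along the segment, hence so is $g$. This yields some $\xi\in(0,1)$ with $g(1)=\sum_{i=0}^{\tau}\frac{g^{(i)}(0)}{i!}+\frac{g^{(\tau+1)}(\xi)}{(\tau+1)!}$, and since $g(1)=F(\Vec{z})$ this rearranges to $F(\Vec{z})-\tilde F(\Vec{z})=\frac{1}{(\tau+1)!}\bigl(\sum_{j=1}^{k}z_{j}\frac{\partial}{\partial z_{j}}\bigr)^{\tau+1}F(\xi\Vec{z})$. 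It then remains to bound the directional derivative on the right: expand the power multinomially into $k^{\tau+1}$ ordered terms $z_{j_{1}}\cdots z_{j_{\tau+1}}\,\partial_{j_{1}}\cdots\partial_{j_{\tau+1}}F(\xi\Vec{z})$, use $0\le z_{j}\le x$ so that each weight is at most $x^{\tau+1}$, and invoke the hypothesis $|(\sum_{j}\frac{\partial}{\partial z_{j}})^{\tau+1}F|\le k^{\tau+1}$ on $[\Vec{0},x\Vec{1}]$ (note $\xi\Vec{z}$ lies in this box). This gives $|(\sum_{j}z_{j}\partial_{j})^{\tau+1}F(\xi\Vec{z})|\le (kx)^{\tau+1}$, and substituting into the displayed identity yields $|\tilde F(\Vec{z})-F(\Vec{z})|\le\frac{(kx)^{\tau+1}}{(\tau+1)!}$, as claimed.

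The step I expect to require the most care is the passage from the $\Vec{z}$-weighted directional derivative $(\sum_{j}z_{j}\partial_{j})^{\tau+1}F$ to the unweighted operator $(\sum_{j}\partial_{j})^{\tau+1}F$ that the hypothesis controls. Pulling out the common factor $x^{\tau+1}$ term by term is immediate when the order-$(\tau+1)$ mixed partials of $F$ do not change sign over the box --- which is exactly the situation in the intended application, where $F$ is assembled from monotone edge-length distribution functions --- but in the fully general statement one should read the hypothesis as a uniform bound on each of the $k^{\tau+1}$ ordered mixed partials (each at most $1$ in absolute value in the application), so that every summand contributes at most $x^{\tau+1}$ and their sum at most $(kx)^{\tau+1}$. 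Everything else is the textbook one-variable remainder estimate together with the standard identification of the multivariate Taylor polynomial with the Taylor polynomial of the restriction to a line.
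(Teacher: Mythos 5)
Your proof is correct, and it is exactly the argument the paper has in mind: the paper offers no proof of this proposition beyond citing the two-variable remainder formula in Courant--John and asserting that the extension is straightforward, and your restriction-to-a-line plus Lagrange-remainder argument (with $g(t)=F(t\Vec{z})$, the identification of $\sum_{i\le\tau}g^{(i)}(0)/i!$ with $\tilde F(\Vec{z})$, and the multinomial bound on the remainder) is precisely that standard extension. The one place where you go beyond the paper is your closing caveat, and you are right to insist on it: the hypothesis as literally stated, $\bigl|\bigl(\sum_{j}\partial_{j}\bigr)^{\tau+1}F\bigr|\le k^{\tau+1}$, controls only the directional derivative in the single direction $(1,\dots,1)$ and does not in general control $\bigl(\sum_{j}z_{j}\partial_{j}\bigr)^{\tau+1}F$; for instance with $k=2$, $\tau=0$ and $F=C(z_{1}-z_{2})$ the hypothesis holds for every $C$ while the weighted derivative $C(z_{1}-z_{2})$ is unbounded, so the proposition is false under the literal reading. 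It is correct under the reading you adopt --- each ordered mixed partial of order $\tau+1$ bounded by $1$ in absolute value, so that every one of the $k^{\tau+1}$ terms in the multinomial expansion contributes at most $x^{\tau+1}$ --- and that is what the intended application actually supplies through the assumption $|F_{uv}^{(d)}(t)|\le1$. With that reading your argument is complete and needs no further repair.
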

The proof is a straightforward extension
of the proof for the two variable functions' remainder in \cite{CJ1989}.

Our algorithm is as follows.
For each bag $B_i\in {\cal B}$, we compute the order $\tau$
Taylor approximation $\LPpdf{\tilde \phi}{G_i}{\Vec{z}}{S}{T}$ of $\LPpdf{\phi}{G_i}{\Vec{z}}{S}{T}$.
In computing $\LPpdf{\tilde \phi}{G_i}{\Vec{z}}{S}{T}$, we compute
the order $\tau$ 
Taylor approximation of the integrand of
the form in Theorem \ref{th:multiplesourceterminal}.
Remember the assumption that
we can compute the integrand and its derivatives exactly.
Then, as $\LPpdf{\tilde \phi}{G_i}{\Vec{z}}{S}{T}$,
we compute the order $\tau$ 
Taylor approximation of the
resulting form after executing all integrals.
For each leaf bags $B_c \in {\cal B}$, we set
$\LPpdf{\tilde \phi}{D_c}{\Vec{z}}{S}{T}\defeq \LPpdf{\tilde \phi}{G_c}{\Vec{z}}{S}{T}$.
Let $\LPpdf{\tilde \Phi}{U_i}{\Vec{z}}{S}{T}=\prod_{j\in\Ch(i)}\LPpdf{\tilde \Phi}{D_j}{\Vec{z}}{S}{T}$ and $\LPpdf{\tilde \phi}{U_i}{\Vec{z}}{S}{T}=\vecdiff{\Src{U_i}}\LPpdf{\tilde \Phi}{U_i}{\Vec{z}}{S}{T}$.
Here, $\LPpdf{\tilde \phi}{D_i}{\Vec{z}}{S}{T}$ is an approximation of
\begin{align}
  \LPpdf{\psi}{D_i}{\Vec{z}}{S}{T}\defeq\int_{(0,x\Vec{1}[J_i])} \LPpdf{\tilde \phi}{G_i}{\Vec{z}}{S}{T}\LPpdf{\tilde \phi}{U_i}{\Vec{z}}{S}{T} {\rm d}\Vec{z}[J_i], \label{form:tilde_phi_increment}
\end{align}
which is analogous to Lemma \ref{lemma:phi_increment}.
In approximating (\ref{form:tilde_phi_increment}),
every time we execute an integral with respect to $z_v$
for $v\in J_i$, we compute the order $\tau$ 
Taylor approximation of the resulting form.
Then, we have $\LPpdf{\tilde \phi}{D_i}{\Vec{z}}{S}{T}$ as the order $\tau$ 
Taylor approximation
of the resulting form
after executing all integrals in (\ref{form:tilde_phi_increment}).
Now, we put
$\LPpdf{\tilde\Phi}{D_i}{\Vec{z}}{S}{T}\defeq\int_{(0,\Vec{z}[\Src{D_i}])}\LPpdf{\tilde\phi}{D_i}{\Vec{z}}{S}{T}{\rm d}\Vec{z}[\Src{D_i}]$.

To save the memory space, we define $\Vec{w}$ as follows.
Let $B_h$ be the parent of $B_i\in {\cal B}$.
We put $\Vec{w}[v]=\Vec{z}[v]$ for $v\in (\Src{D_i}\cup \Term{D_i})\cap B_h$,
$\Vec{w}[s]=x$ for $s\in \Src{D_i}\setminus B_h$ and
$\Vec{w}[t]=0$ for $t\in \Term{D_i}\setminus B_h$.
The following is our algorithm.
\begin{algorithm}
  Approx-Taylor($G, {\cal T}(G), x,\epsilon'$)\\
  Input: DAG $G$,  
  binary tree decomposition ${\cal T}(G)$,
  longest path length $x\in \mathbb{R}$, and
  the additive error $\epsilon'$; \\
  Output: Approximate value of $\Pr[X_{\rm MAX}\le x]$;\\
  01. Set $\tau := (e^2+1)(k+1)x+ 2\ln b +\ln (1/\epsilon')$;\\ 
  02. For each $B_i\in {\cal B}$ do \\
  03. \hspace*{2mm} Compute $\LPpdf{\tilde \phi}{G_i}{\Vec{z}}{S}{T}$ as the order $\tau$ 
  Taylor approximation of $\LPpdf{\phi}{G_i}{\Vec{z}}{S}{T}$;\\
  04. done; \\
  05. For each leaf bag $B_c\in {\cal B}$ of ${\cal T}(G)$ do:\\
  06. \hspace*{2mm} Set $\LPpdf{\tilde \phi}{D_c}{\Vec{z}}{S}{T}:=\LPpdf{\tilde \phi}{G_c}{\Vec{z}}{S}{T}$;\\
  07. done;\\
  08. For each $B_i\in {\cal B}$ from the leaves to the root do \\
  09. \hspace*{2mm} Compute $\LPpdf{\tilde \phi}{D_i}{\Vec{w}}{S}{T}$ as the order $\tau$ 
  Taylor approximation of (\ref{form:tilde_phi_increment});\\
  10. done;\\
  11. Compute and output $\LPpdf{\tilde \Phi}{D_0}{x\Vec{\sigma}}{S}{T}$.
\end{algorithm}

\begin{proof} (of Theorem \ref{th:TaylorTreeWidth})
We first prove the theorem assuming
a separated tree decomposition with width $k$.
In the proof, we assume that all forms are expanded into the sum of products.
We store a form on memory as an array of coefficients of terms.

Consider the array size for 
$\LPpdf{\tilde \Phi}{G_i}{\Vec{z}}{S}{T}$
and $\LPpdf{\tilde \Phi}{D_i}{\Vec{z}}{S}{T}$.
Since we argue the integrations of step function factors 
in the same way as in the proof of
Theorem \ref{th:exponential_distribution},
the number of possible combinations of step function factors is
at most $(k+1)!$.
Since the polynomial factors of $\LPpdf{\tilde \phi}{G_i}{\Vec{z}}{S}{T}$ and
$\LPpdf{\tilde \phi}{D_j}{\Vec{z}}{S}{T}$
are degree $\tau$ polynomials of at most $k+1$ variables,
there are at most $\tau^{k+1}$
polynomial factors.
Thus, an array of $(k+1)!\tau^{k+1}$ 
real values
is sufficient to store $\LPpdf{\tilde \phi}{G_i}{\Vec{z}}{S}{T}$ and $\LPpdf{\tilde \phi}{D_i}{\Vec{z}}{S}{T}$.

Consider the running time for computing
$\LPpdf{\tilde \phi}{G_i}{\Vec{z}}{S}{T}$ in Step 02-04.
Remember the integrand of the form in
Theorem \ref{th:multiplesourceterminal}.
That is,
\begin{align*}
  \prod_{u\in \Src{G_i}\cup\internal{G_i}} \LPpdf{f}{u}{\Vec{z}}{u}{\Suc(u)} =\hspace*{-5mm}\prod_{u\in \Src{G_i}\cup \internal{G_i}}\sum_{v\in\Suc(u)} \frac{f_{uv}(z_u-z_v)}{F_{uv}(z_u-z_v)}\prod_{w\in \Suc(u)}F_{uw}(z_u-z_w).
\end{align*}
By expanding this form into the sum of products, we obtain
at most $k^k$ terms.
Each term is a product of edge length distribution functions or
density functions, where the number of factors is at most $k(k-1)$.
To compute the Taylor approximation of order $\tau$, 
we claim that we have at most $(\tau+1)^{k(k-1)}$ terms
in computing the order $\tau$ 
derivative of each term.
To bound the number of terms, let ${\cal F}_{uv}(z_u-z_v)$ be
either one of $f_{uv}(z_u-z_v)$, $F_{uv}(z_u-z_v)$ or $1$,
so that all terms are in the form of
$\prod_{uv\in \edge{G_i}}{\cal F}_{uv}(z_u-z_v)$.
For each $\Vec{d}=\{0,\dots,\tau\}^{\edge{G_i}}$,
the order $\tau$ derivative of each one of these terms is 
\begin{align*}
  \sum_{\Vec{d}\in \{0,\dots,\tau\}^{\edge{G_i}}}\frac{\tau!}{\prod_{uv\in \edge{G_i}}\Vec{d}[uv]!}\prod_{uv\in \edge{G_i}}{\cal F}_{uv}^{(\Vec{d}[uv])}(z_u-z_v),
\end{align*}
by the multinomial theorem,
where $\sum_{uv\in \edge{G_i}}\Vec{d}[uv]=\tau$.
Therefore, the number of the terms is at most
$\tau^{|\edge{G_i}|}\le \tau^{k(k-1)}$.
Since each term can be computed in $O(k(k-1))$ time,
the running time to obtain the order $\tau$ 
Taylor approximation is
$O(k(k-1)(\tau+1)^{k(k-1)})\le O(k^2 \tau^{k^2})$.
Thus, the running time to compute
$\LPpdf{\tilde \phi}{G_i}{\Vec{z}}{S}{T}$ is $O(k^{k+2} \tau^{k^2})$.

The most time-consuming part of
Step 08-10 is 
the multiplication of $\LPpdf{\tilde \phi}{G_i}{\Vec{z}}{S}{T}$ and
$\LPpdf{\tilde \phi}{U_i}{\Vec{z}}{S}{T}$.
We need $((k+1)!\tau^{k+1})^3$
arithmetic operations
to compute the coefficients of all
terms of the integrand of
$\LPpdf{\tilde \phi}{D_i}{\Vec{z}}{S}{T}$.
Since the integral for each polynomial term finishes in
$O(1)$ time, it takes 
$O((k+1)^{3k} \tau^{3k+3})$ time
to obtain $\LPpdf{\tilde \phi}{D_i}{\Vec{z}}{S}{T}$.
Therefore, we compute
$\LPpdf{\tilde \Phi}{D_0}{\Vec{z}}{S}{T}$ in $b k^{O(k)} \tau^{O(k)}$
time since
we repeat the above procedure for each vertex of each bag,
where $b$ is the number of bags and $b=O(n)$ by our assumption.
By the above arguments, we have the running time $O(nk^{O(k)}\tau^{O(k^2)})$
of our algorithm.

Next, we prove that
$\tau \ge (e^2+1)(k+1)x+ 2\ln b +\ln (1/\epsilon')+\ln 4-1$
is sufficient to bound the additive error of
$\LPpdf{\tilde \Phi}{D_0}{\Vec{z}}{S}{T}$ at most $\epsilon'$.
Let $\epsilon'_i(\Vec{z})$ satisfying $\LPpdf{\tilde \Phi}{D_i}{\Vec{z}}{S}{T}=\LPpdf{\Phi}{D_i}{\Vec{z}}{S}{T}+\epsilon'_i(\Vec{z})$
be the additive error of $\LPpdf{\tilde \Phi}{D_i}{\Vec{z}}{S}{T}$.
Let $\epsilon'_i$ be the maximum of $|\epsilon'_i(\Vec{z})|$ for
$\Vec{z}\in [0,x]^{S_i\cup T_i}$.
We have 
\begin{align*}
  \epsilon'_i&\le \underbrace{\int_{(0,x\Vec{1}[\Src{D_i}])}
    \LPpdf{\psi}{D_i}{\Vec{z}}{S}{T}{\rm d}\Vec{z}[\Src{D_i}]~-\LPpdf{\Phi}{D_i}{\Vec{z}}{S}{T}}_{\rm (A)}~+~
  \underbrace{\frac{((k+1)x)^{\tau+1}}{(\tau+1)!}}_{\rm (B)},
\end{align*}
where $\LPpdf{\psi}{D_i}{\Vec{z}}{S}{T}$ is defined
by (\ref{form:tilde_phi_increment}).
The earlier part (A) is the additive error produced in
the approximation before Step 09.
By Proposition \ref{proposition:Taylor_error},
the latter part (B) bounds the additive error
produced in computing the order $\tau$ 
Taylor approximation in Step 09.

Let $\LPpdf{\tilde\xi}{G_i}{\Vec{z}}{S}{T}\defeq(-1)^{|\JT_i|}\vecdiff{J_i}\LPpdf{\tilde\Phi}{G_i}{\Vec{z}}{S}{T}$.
In case $\Ch(i)=\{\ell,r\}$,
 we have $\text{(A)}=\int_{\mathbb{R}^{J_i}}\LPpdf{\tilde\xi}{G_i}{\Vec{z}}{S}{T}\LPpdf{\tilde \Phi}{U_i}{\Vec{z}}{S}{T}{\rm d}\Vec{z}[J_i]-\LPpdf{\Phi}{D_i}{\Vec{z}}{S}{T}$
by executing the integral of (\ref{form:tilde_phi_increment})
with respect to $s\in \Src{D_i}$ and
by Proposition~\ref{proposition:switch}.
Let $\gamma_i$ be the additive error of $\LPpdf{\tilde \xi}{G_i}{\Vec{z}}{S}{T}$
so that $\gamma_i=\max_{\Vec{z}\in [0,x]^{S_i\cup T_i}}|\LPpdf{\tilde \xi}{G_i}{\Vec{z}}{S}{T}-\LPpdf{\xi}{G_i}{\Vec{z}}{S}{T}|$, where
$\LPpdf{\xi}{G_i}{\Vec{z}}{S}{T}=(-1)^{|\JT_i|}\vecdiff{J_i}\LPpdf{\Phi}{G_i}{\Vec{z}}{S}{T}$.
We assume that $\epsilon'_\ell \ge \epsilon'_r$
without loss of generality.
Then, we have 
\begin{align*}
  \text{(A)}\le &\int_{\mathbb{R}^{J_i}}(\LPpdf{\xi}{G_i}{\Vec{z}}{S}{T}+\gamma_i)\prod_{j\in\Ch(i)}(\LPpdf{\Phi}{D_j}{\Vec{z}}{S}{T}+\epsilon'_j){\rm d}\Vec{z}[J_i]-\LPpdf{\Phi}{D_i}{\Vec{z}}{S}{T}.
\end{align*}
By canceling $\LPpdf{\Phi}{D_i}{\Vec{z}}{S}{T}$ using Proposition~\ref{proposition:switch}, the right hand side is at most
\begin{align*}
  &\int_{\mathbb{R}^{J_i}}\left(\LPpdf{\Phi}{D_\ell}{\Vec{z}}{S}{T}\epsilon'_r\LPpdf{\xi}{G_i}{\Vec{z}}{S}{T}+\epsilon'_\ell\LPpdf{\Phi}{D_r}{\Vec{z}}{S}{T}\LPpdf{\xi}{G_i}{\Vec{z}}{S}{T} \right){\rm d}\Vec{z}[J_i]\\
  &+\int_{\mathbb{R}^{J_i}}\left(\epsilon'_\ell\epsilon'_r\LPpdf{\xi}{G_i}{\Vec{z}}{S}{T}+\gamma_i\prod_{j\in \Ch(i)}(\LPpdf{\Phi}{D_j}{\Vec{z}}{S}{T}+\epsilon'_j)\right){\rm d}\Vec{z}[J_i]\\
  &\le \int_{\mathbb{R}^{J_i}}\left( (\epsilon'_\ell+2\epsilon'_r)\LPpdf{\xi}{G_i}{\Vec{z}}{S}{T}+4\gamma_i\right){\rm d}\Vec{z}[J_i].
\end{align*}
The last inequality is because
$\LPpdf{\Phi}{D_\ell}{\Vec{z}}{S}{T},\LPpdf{\Phi}{D_r}{\Vec{z}}{S}{T}\le 1$,
and $\epsilon'_r\le \epsilon'_\ell\le 1$.
Executing the integrals with respect to $\Vec{z}[J_i]$,
we have $\int_{\mathbb{R}^{J_i}}\LPpdf{\xi}{G_i}{\Vec{z}}{S}{T}{\rm d}\Vec{z}[J_i]=\lim_{\Vec{z}[J_i]\rightarrow \infty}\LPpdf{\Phi}{G_i}{\Vec{z}}{S}{T}\le 1$.
Now, we have $(A)\le \epsilon'_\ell+2\epsilon'_r + 4 x^{k+1} \gamma_i$.

Consider the bound of $\gamma_i$.
By the assumption, the integrand (and its derivatives)
of Theorem \ref{th:multiplesourceterminal} can be computed exactly
for each bag-subgraph $G_i$.
Then, we compute the order $\tau$ 
Taylor approximation of the integrand
that produces the additive error $\gamma_i$ given in
Proposition~\ref{proposition:Taylor_error},
where the result is a (at most) $k+1$ variable function;
$R(k,\tau,x)\defeq \frac{((k+1)x)^{\tau+1}}{(\tau+1)!}$
is the upper bound on $\gamma_i$.
The above arguments imply that
$\epsilon'_i\le \epsilon'_\ell+2\epsilon'_r + (4x^{k+1}+1)R(k,\tau,x)$.

We consider the total additive error
$\epsilon'_0$. 
Since we set $\epsilon'_r=0$ in case $|\Ch(i)|=1$,
the worst is the case in which tree decomposition ${\cal T}(G)$
is a complete binary tree
where $\epsilon'_\ell=\epsilon'_r$.
We bound the additive error from the leaves to the root.
For leaf bag $B_c\in {\cal B}$,
we have $\epsilon'_c=\delta'_c\le R(k,\tau,x)$. 
As the computation advances toward the root,
the error at the children is multiplied by three 
and added by
$(4x^{k+1}+1)R(k,\tau,x)$.
Thus,
\begin{align*}
  \epsilon'_0\le \lceil \log_2 b \rceil 3^{\lceil \log_2 b\rceil}(4x^{k+1}+1)R(k,\tau,x)\le b^2(4x^{k+1}+1)R(k,\tau,x).
\end{align*}

Let us estimate sufficiently large $\tau$ 
to achieve
$\epsilon'_0\le \epsilon'$.
By considering the difference of the logarithms of
$\epsilon'$ and $\epsilon'_0$,
we have
\begin{align*}
  &\ln \epsilon'- \ln \epsilon'_0\ge-\ln(1/\epsilon')-\ln b^2(4x^{k+1}+1)- \ln \frac{((k+1)x)^{\tau+1}}{(\tau+1)!},
\end{align*}
which is at least
$-\ln(1/\epsilon')-2\ln b-\ln 4-(k+1)\ln(x+1)+(\tau+1)\left(\ln \frac{\tau+1}{e(k+1)x}\right)$
by Stirling's approximation
$n!\sim \sqrt{2\pi n}\left(\frac{n}{e}\right)^n\ge \left(\frac{n}{e}\right)^n$.
Now,
\begin{align*}
  \tau+1 \ge \max\{e^2(k+1)x, 2\ln b + (k+1)\ln(x+1)+\ln (1/\epsilon')+\ln 4\}
\end{align*}
is sufficient to make this lower bound positive.
By taking the sum instead of max, we have the theorem since $\ln 4=1.386...$.
We have the theorem by replacing $k$ with $2k+1$
to construct a separated tree decomposition.
We process the delta functions for edges in $E_V$
in the same way as in the proof of Theorem~\ref{th:exponential_distribution}.
\end{proof}

\section*{Acknowledgment}
This work was partly supported by JSPS KAKENHI Grant Number 19K11832.
The author is grateful to the anonymous reviewers for the valuable advice and comments.




\end{document}